\documentclass[11pt,letterpaper]{article}

\title{Communication complexity of Nash equilibrium in potential games}
\author{Yakov Babichenko\footnote{Technion, Israel institute of Technology. E-mail: yakovbab@technion.ac.il.} \ and Aviad Rubinstein\footnote{Stanford University. E-mail: aviad@cs.stanford.edu.}}
\date{}


\usepackage{complexity}
\newclass{\CLS}{CLS}
\usepackage{amssymb}
\usepackage{amsmath}
\usepackage{amsthm}
\usepackage{float}
\usepackage{hyperref}
\usepackage{tikz}
\usetikzlibrary{calc,patterns,shapes}
\usetikzlibrary{arrows.meta}

\bibliographystyle{alpha}

\usepackage[left=1.00in, right=1.00in, top=1.00in, bottom=1.00in]{geometry}

\newcommand{\ra}{\rightarrow}
\newcommand{\N}{\mathbb{N}}
\newcommand{\Real}{\mathbb{R}}
\newcommand{\Ex}{\mathbb{E}}
\newcommand{\Z}{\mathbb{Z}}
\newcommand{\Prob}{\mathbb{P}}
\newcommand{\calL}{\mathcal{L}}
\newcommand{\oa}{\hat{a}}
\newcommand{\oC}{{C^{\conv}}}

\newcommand{\ob}{\overline{b}}
\newcommand{\ophi}{\overline{\phi}}
\newcommand{\oal}{\overline{\alpha}}
\newcommand{\obe}{\overline{\beta}}

\newcommand{\0}{\textbf{0}}

\usepackage{bbm}
\newcommand{\1}{\mathbbm{1}}

\newcommand{\gsm}{28}
\newcommand{\gs}{29}
\newcommand{\gsp}{30}
\newcommand{\tgs}{58}
\newcommand{\hgs}{15}
\newcommand{\thgs}{88}
\newcommand{\eol}{\textsc{EndOfLine}}

\newcommand{\Pyr}{\textsl{Pyr}}

\newcommand{\uv}{\underline{v}}
\newcommand{\ov}{\overline{v}}

\DeclareMathOperator{\conv}{conv}

\DeclareMathOperator{\supp}{supp}

\newtheorem{lemma}{Lemma}
\newtheorem{theorem}{Theorem}
\newtheorem{oq}{Open Question}
\newtheorem{corollary}{Corollary}
\newtheorem*{corollary*}{Corollary}
\newtheorem*{proposition*}{Key Proposition}
\newtheorem{proposition}{Proposition}
\newtheorem*{mtheo*}{Main Theorem}
\theoremstyle{definition}
\newtheorem{definition}{Definition}
\newtheorem{fact}{Fact}
\theoremstyle{remark}
\newtheorem{remark}{Remark}
\newtheorem*{remark*}{Remark}

\begin{document}

\maketitle
\begin{abstract}
We prove communication complexity lower bounds for (possibly mixed) Nash equilibrium in potential games. In particular, we show that finding a Nash equilibrium requires  $\poly(N)$ communication in two-player $N \times N$ potential games, and $2^{\poly(n)}$ communication in $n$-player two-action games.
To the best of our knowledge, these are the first results to demonstrate hardness in any model of (possibly mixed) Nash equilibrium in potential games. 
\end{abstract}

\setcounter{page}{0}
\thispagestyle{empty}
\newpage

\section{Introduction}

Potential games~\cite{MS} is a fundamental class of games that captures a variety of scenarios from routing and congestion games to Cournot oligopolies. 
What all these games have in common is that they admit pure equilibria, and are even equipped with a simple and natural algorithm for finding them: the best-response dynamics, where in each step a single player deviates to her best-response action%
\footnote{The same is true for generalizations such as {\em ordinal} and {\em weighted} potential games~\cite{MS}, but our hardness results hold even for the most restrictive definition of exact potential games.}. 

Unfortunately, although the best-response dynamics algorithm for finding pure equilibria in finite potential games is guaranteed to converge, it can take exponential time~\cite{FPT} and in general finding any pure equilibrium requires exponential communication~\cite{CCPLS} and computational variants are known to be \PLS-complete~\cite{PLS,SY91,FPT,S2,AS08,ARV08,HHKS13}.

In this paper we study the communication complexity of computing a {\em mixed} Nash equilibrium in potential games, arguably the last remaining hope for efficient equilibrium computation in this class. 
Backing this algorithmic hope is the observation that none of the aforementioned hardness results generalize to mixed equilibria; and none of the known hard instances for mixed equilibria in general games admit potential functions.
Daskalakis and Papadimitriou~\cite{CLS} formalize these intuitive barriers to proving hardness for computational variants of this problem by showing that they lie in $\PPAD \cap \PLS$, suggesting that they may be strictly easier than both pure equilibrium in potential games and mixed equilibrium in general games. 
We overcome these technical barriers, and prove the first (in any model)  non-trivial intractability results for Nash equilibrium in potential games.

Our main result is a lower bound in the following simple and natural {\em communication complexity} formulation of equilibrium computation:
each party knows the utility function of one player, and their goal is to jointly compute an equilibrium of the game%
\footnote{Note that since potential games admit succinct equilibria, once any party learns an equilibrium she can can broadcast it with negligible communication to all other parties.}.
Lower bounds in this model hold without imposing restrictions on the computational power or strategic behavior of the parties.

\begin{mtheo*}[Informal, see Theorems~\ref{theo:2p} and~\ref{theo:np}] \hfill
\begin{description}
\item[Two-player games] 
The randomized communication complexity of computing a Nash equilibrium in two-player $N\times N$ potential games is at least $\Omega(N^c)$ for a constant $c>0$.
\item[Binary-action games]
The randomized communication complexity of computing a Nash equilibrium in $n$-player $2$-action potential games is at least $2^{\Omega(\sqrt{n})}$.
\end{description}
\end{mtheo*}

Note that these results imply the same bounds in the query complexity model. We emphasize that even in the simpler-to-analyze query model, no hardness result on mixed Nash equilibria of potential games was known.

\begin{remark}[Approximate Nash equilibrium]
Our results extend to the hardness of approximate Nash equilibrium with polynomial (respectively exponential) approximation error for two-player (respectively $n$-player) games. 
Note that this is roughly the strongest hardness of approximation we can hope to show since the best-response dynamics give an $\epsilon$-approximate equilibrium within $O(n/\epsilon)$ steps. 
\end{remark}

\subsection{Communication complexity and game theory}

Communication complexity is a particularly attractive measure in game theoretic applications because there is a natural correspondence between protocol parties and game players, and it evades questions of how agents represent and access their utility functions. 
This connection has been studied extensively in the context of Combinatorial Auctions~\cite{NS06,BNS07,Feige09,DV13,DobzinskiNO14,Dobzinski16b,Assadi17,BravermanMW17,EzraFNTW19} and also Price of Anarchy~\cite{Roughgarden14}, Fair Division~\cite{BranzeiN19,PlautR19} and equilibrium computation~\cite{CS,HMan,CCNash,GR18,GS18,CCPLS}.

The communication complexity model is of particular interest in the context of equilibria computation. As was shown by \cite{CS} (see also \cite{HMan}), the communication complexity of a solution concept captures (up to a logarithmic factor) the rate of convergence of natural dynamics (\emph{uncoupled dynamics} \cite{HMas,HMan}) to this solution concept. Thus, a lower bound on the communication complexity implies that there exists no natural dynamics that lead to this solution concept in a reasonable time. In particular, the result of \cite{CCPLS}, implies that no dynamics can lead to a pure Nash equilibrium in potential games in a reasonable time. We strengthen this negative result and prove that the same is true for the wider solution concept of mixed Nash equilibria. We note that the performance of specific dynamics such as the better-reply dynamics in potential games has been previously studied \cite{FPT,S2}. Communication complexity results, on the other hand, imply a slow rate of convergence for \emph{all} dynamics.

\subsection{Complexity theory context}

The complexity of Nash equilibria in general games has been extensively studied in the past two decades for different complexity models, including computational complexity \cite{LMM,SS04,DGP,CD,EY10,Mehta14,Rub2p,KM18}, query complexity \cite{QCNash,GR,FS16,CCT15,Rub2p}, and communication complexity \cite{HMan,CCNash,GR18}. This extensive study indicates that in general Nash equilibrium is a hard task in all mentioned above models.

The computational hardness of Nash equilibrium in the general case raises the question of whether Nash equilibrium can be computed efficiently for \emph{classes of games} with economic significance. This question has been extensively studied as well, in particular for the classes of graphical games \cite{DGP,RubGraph,DFS20}, anonymous games \cite{DPAnon,CDO}, congestion games \cite{FPT,HSko,CSin, CLS}, and the closely related class of potential games \cite{FPT, CLS}.

In multi-player potential games, it is shown in \cite{FPT}, that computation of an \emph{exact pure} Nash equilibrium cannot be done efficiently, unless $\PLS=\P$. The hardness of pure Nash equilibrium result holds also in the query complexity model \cite{NBlog}. 
Recently, \cite{CCPLS} showed the hardness of pure Nash equilibrium also in the \emph{communication complexity} model.

All the above-mentioned hardness results (in all three complexity models) for pure Nash equilibrium in potential games, do not contradict the hypothesis that maybe Nash equilibrium, not necessarily pure, can be computed efficiently (in the computational model) or can be learned quickly by some dynamics (in the communication model). In fact, to the best of our knowledge, no hardness result is known for Nash equilibrium in potential games in any setting. Moreover, the existing techniques for the hardness of pure equilibrium are not helpful for the mixed Nash equilibrium problem: existing reductions for pure equilibrium focus on discrete objects (typically graphs), and the resulting games allow spurious \emph{mixed} Nash equilibria.

The complexity of Nash equilibrium in potential games is interesting from two aspects. First, as mentioned above, from the game theoretic perspective it is natural to ask how fast can players learn/compute any Nash equilibrium (not necessarily pure). 

From the theoretical computer science perspective, the existence of Nash equilibrium in potential games has two completely different non-algorithmic proofs: Nash's theorem, which says that any finite game has a Nash equilibrium, relies on Brouwer's fixed point theorem, which in turn is based on a parity argument; this proof is captured by the complexity class \PPAD~\cite{PPAD}.
For potential games, any sequence of best-reply updates is monotone increasing in the potential function, and must, therefore, converge to a local maximum, which is an equilibrium; this proof, which guarantees a pure Nash equilibrium, is captured by the complexity class\footnote{\PPAD, \PLS, and other subclasses of \TFNP~have formal communication complexity analogs (see \cite{GKRS19-CC-TFNP}). Interestingly, "Communication \PPAD" is not known to contain the problem of finding a Nash equilibrium in two-player games (even for constant approximation error). For a binary-action many-player potential games, standard techniques (e.g.~\cite{DGP}) suffice to show that computing an approximate Nash equilibrium (with exponentially-small error) is indeed in ``Communication $\PPAD\cap \PLS$''.} \PLS.
The complexity of problems whose solution admits both \PPAD~and \PLS~existence proofs remains perhaps the least understood within the study of total search problems (\TFNP). In particular, \cite{CLS} show that the intersection of these classes, $\PPAD \cap \PLS$ includes congestion games, implicit congestion games, and network coordination games, all of which are subclasses of many players potential games.

Recently, this area has been very active, with exciting progress including query complexity lower bounds, \CLS-completeness, and hardness based on cryptographic assumptions for problems defined with a circuit \cite{ConMap1, ConMap2, HY17-crypto, GDHKS18-CLS,EPRY20}. In particular, an exciting very recent breakthrough of~\cite{FGHS20} shows that in fact $\CLS = \PPAD \cap \PLS$!
However, proving completeness for any natural problem (whose definition is not through a circuit) in this class remained open.
In a follow-up paper~\cite{BR21}, inspired by the current paper, we resolve this open problem by showing, among other results, that Nash equilibrium in congestion games is $\PPAD \cap \PLS$-complete.


\begin{remark}[Promise vs total for potential games] In computational complexity there is an important distinction between {\em total problems} (where every instance has a solution), and {\em promise problems} (where every instance that satisfies a certain ``promise'' has a solution, and the algorithm is only required to succeed on those instances). 
On this issue, our hardness results enjoy the best of both worlds: (i) all our hard instances satisfy the promise, i.e. they are actual potential games; and (ii) one can define a total%
\footnote{By Nash's theorem, even for non-potential games, finding an equilibrium is obviously already total  in the \PPAD-sense.} problem of ``find an equilibrium or certify that the game is not potential''. Indeed, by~\cite{MS, CCPLS}, if the game is not a potential game, then there exists succinct certificate that the game is not potential. Moreover, this certificate can be found efficiently. (See~\cite[Section 4.1]{CCPLS} for details.)
\end{remark}

\subsection{Congestion vs Potential games}\label{sub:congestion}
Congestion games~\cite{Rosenthal73} are a formal framework for studying how selfish agents choose their routes in a congested network. 
They are extremely well-studied in Game Theory, Economics, and in particular Computer Science, where routing traffic in decentralized  computer networks (e.g.~the Internet) is an important practical question.

Congestion games are defined over a ground set of {\em facilities} (``primary factors'' in Rosenthal's original paper).
Each player's action corresponds to a subset of the facilities. The cost of each facility is a function of the number of players choosing it, and the total cost to each player is the sum of costs in her subset.

To model congestion games as a communication complexity problem, we assume that every player knows every other player's feasible subsets, as well as the cost function on every facility in the union of her feasible subsets.
The private information to players is the cost function on facilities only feasible to them.

As already mentioned, congestion games are a sub-class of potential games. Furthermore, every potential game is isomorphic to a congestion game~\cite{MS}. 
A simple variant of this reduction shows that our communication complexity lower bounds for potential games extend to congestion games.

\begin{corollary*}[Congestion games; informal- see Corollaries~\ref{cor:cong2} and~\ref{cor:congn}] \hfill
\begin{description}
\item[Two-player congestion games] 
The randomized communication complexity of computing a Nash equilibrium in a two-player congestion game with $N$ facilities is at least $\Omega(N^c)$ for a constant $c>0$.
\item[Binary-action congestion games]
The randomized communication complexity of computing a Nash equilibrium in $n$-player $2$-action congestion games is at least $2^{\Omega(\sqrt{n})}$.
\end{description}
\end{corollary*}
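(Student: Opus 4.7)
The plan is to derive the corollary from the Main Theorem via a communication-preserving reduction from potential games to congestion games in the model of Section~\ref{sub:congestion}.

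\textbf{Step 1: Apply Monderer--Shapley.} I would first invoke the classical representation theorem of~\cite{MS}, which realizes every finite potential game as (the game induced by) a congestion game. Applied to the hard instances of Theorems~\ref{theo:2p} and~\ref{theo:np}, this yields congestion games with $\poly(N)$ facilities in the two-player case and with $\exp(O(n))$ facilities in the $n$-player binary-action case; (pure and mixed) Nash equilibria correspond under the canonical bijection of strategies, so any hardness for Nash in the original game passes to the congestion game at the level of game instances.

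\textbf{Step 2: Install the private-information partition.} The nontrivial work is to match the communication model of Section~\ref{sub:congestion}, in which cost functions on facilities feasible to several players must be publicly computable and any dependence on $u_i$ may appear only on facilities feasible to player $i$ alone. My plan is to ``split'' each shared facility into a bundle of parallel single-owner copies, one per user, distributing the cost function additively. Parallel facilities with identical feasibility patterns are indistinguishable from a single facility with summed cost, so this rewiring preserves Nash equilibria. It then suffices to check that the Monderer--Shapley costs can be arranged so that the portion of the cost assigned to player $i$'s single-owner copies depends only on $u_i$; this follows from the potential identity $u_i(s) - u_i(s_i', s_{-i}) = \Phi(s) - \Phi(s_i', s_{-i})$, which lets marginal differences of the potential be rewritten as marginal differences of $u_i$.

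\textbf{Step 3: Transfer the lower bound.} Once the reduction is a valid instance of the congestion-game communication problem, any protocol solving its Nash equilibrium problem immediately yields one of the same complexity for the source potential game: each party simulates the congestion-game protocol, computing the cost on her private facilities from $u_i$ and using public costs for any remaining shared facilities. Composing with the Main Theorem gives the claimed $\Omega(N^c)$ and $2^{\Omega(\sqrt{n})}$ lower bounds for congestion games.

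\textbf{Main obstacle.} The crux is Step 2: verifying, for the specific potential functions built in the proofs of Theorems~\ref{theo:2p} and~\ref{theo:np}, that the Monderer--Shapley costs decompose cleanly into per-player contributions and that no residual cross-player dependence leaks onto a facility that remains shared after the splitting. The bookkeeping is tedious but should be routine given the explicit form of the hard instances.
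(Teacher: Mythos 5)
Your Step 2 is where the argument breaks. Splitting a shared facility $f$ into a bundle of ``single-owner copies, one per user,'' destroys the congestion effect: if $f$ is replaced by a facility $f_1$ feasible only to player $1$ and $f_2$ feasible only to player $2$, then player $1$'s cost on $f_1$ no longer depends on whether player $2$ uses the original resource, so Nash equilibria are not preserved. You appeal to the observation that parallel facilities with \emph{identical} feasibility patterns collapse into one facility with summed cost, but your single-owner copies by construction do \emph{not} have identical feasibility patterns, so that observation does not apply. Moreover, the claim that the potential identity $u_i(s)-u_i(s_i',s_{-i}) = \Phi(s)-\Phi(s_i',s_{-i})$ yields a per-player additive decomposition of the Monderer--Shapley facility costs is not correct: that identity only constrains marginal differences and does not imply $\Phi$ is a sum of player-specific terms (it is not for a generic potential game). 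The paper explicitly flags exactly this obstruction to using Monderer--Shapley black-box: the Monderer--Shapley costs are written in terms of $\Phi$, and in the communication model it is essential that neither party can evaluate $\Phi$ on her own --- otherwise she could find a pure maximizer with zero communication.

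The paper's route is different and avoids Monderer--Shapley entirely. It observes that the hard potential games constructed for Theorems~\ref{theo:2p} and~\ref{theo:np} are by design \emph{structured} (Definitions~\ref{def:clean2} and~\ref{def:cleann}): $u_A(a,b)=v_C(a,b)+v_A(a)$ and $u_B(a,b)=v_C(a,b)+v_B(b)$, where $v_C$ is known to both parties, $v_A$ only to Alice, and $v_B$ only to Bob. Given this, one directly builds a congestion game over facilities $A \cup B \cup (A\times B)$: the facilities in $A$ (resp.\ $B$) are feasible only to Alice (resp.\ Bob) and carry the private costs $v_A$ (resp.\ $v_B$); the shared facilities $(a,b)\in A\times B$ carry the \emph{commonly known} cost $v_C(a,b)$ activated only at congestion level $2$. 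This matches the private-information model of Section~\ref{sub:congestion} by construction, with no decomposition lemma needed. What this buys relative to your proposal is that no part of $\Phi$ ever needs to be split across parties --- the cross-term that encodes the interaction is already publicly computable. Your ``main obstacle'' paragraph correctly identifies that everything hinges on the explicit form of the hard instances, but the resolution is not to patch Monderer--Shapley; it is to use the structured form directly.
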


\subsection{Techniques}\label{sub:techniques}

At the core, much of the recent progress on the hardness of computing Nash equilibrium in general games%
\footnote{The formulation of the problem we study is a natural extension of two lines of work: complexity of (mixed) Nash equilibrium in general games; and complexity of pure equilibrium in potential games. But in terms of techniques we are much closer to the former, and in particular tools developed to deal with the continuum of candidate solutions.}
 followed a common very rough blueprint:
\begin{description}
\item[End-of-Line] Begin with the problem of finding the end of a line%
\footnote{For our construction, we essentially use the End-of-{\em Metered}-Line~\cite{HY17-crypto, ConMap2} variant of the problem, which also encodes a potential function on the line.} in a graph. In computational complexity, it is \PPAD-hard by definition. In query complexity, it is typically not difficult to prove unconditional hardness, which can be lifted to communication complexity using ``simulation theorems''.
\item[Brouwer's fixed point] Reduce End-of-Line to finding a fixed point of a continuous, Lipschitz function $f: D \rightarrow D$ (where $D$ is some convex domain $D$, typically $[0,1]^n$).
This is typically the most technical part, as it reduces a discrete problem to a continuous one - this is crucial for {\em mixed} equilibria.
\item[Imitation game]
 Alice and Bob pick $a,b$ from the above convex domain $D$. Alice's objective is to minimize $||a-b||$, while Bob's objective is to minimize $||a-f(b)||$.
\end{description}

To best understand our paper, let us start from the end:
The nice and critical feature of the imitation game construction is that Alice's unique best reply to any mixed strategy of Bob is simply the center of mass%
\footnote{The center of mass simplification only holds when we take $||\cdot||$ to be the semimetric $||\cdot||_2^2$. In our construction, we need semimetrics that increase much faster, e.g.~$||\cdot||_2^n$. This is crucial in order to compete with exponentially large jumps in potential. The center of mass intuition still holds, but the details are more subtle.} of $b$. Thereafter we can restrict the analysis to Alice's pure actions, which are much more tractable.
For our purposes, the main caveat is that this imitation game is not a potential game!

A key novel idea that we introduce is a game where the above imitation gadget is additively separable from the complex part of the utilities which actually encodes the reduction.
Specifically, we construct the following two-player potential game.
Alice and Bob respectively choose points $a,b$ from some convex domain $D$. The utilities are given by 
\begin{align}\label{eq:imitation}
\begin{split}
u_A(a,b)&=-||a-b||-\epsilon \phi(a), \\
u_B(a,b)&=-||a-b||-\epsilon \phi(b),
\end{split}
\end{align}
where $\phi$ is a hard-to-locally-optimize potential function.
Namely, players are primarily incentivized to play close to each other's strategy; in addition, each player has a mild incentive to play a point with high potential value. Intuitively, this combination of incentives creates a situation where at any point which is not a local minimum of the potential, a player prefers to deviate slightly to a close-by point with higher potential; however she does not want to deviate too far, because then the distance penalty becomes more significant than the gain in the potential. Hence, we expect the equilibria to be the local minima of $\phi$. The challenging part is to apply this intuition to a \emph{mixed} Nash equilibrium analysis. 

\subsubsection*{The potential function}

Perhaps the most technically elaborate part in our proof is the construction of a suitable potential function. Our goal is to embed (a variant of) the discrete End of Line problem as a continuous potential function $\phi: D \rightarrow \Real$.  In particular we require that all local maxima of $\phi$ correspond to the end of the line%
\footnote{We actually require our potential function to satisfy a stronger and somewhat subtle desideratum --- see Section~\ref{sec:domdir} for details.}. 

In our experience with complexity of non-potential games, a very useful construction is that of~\cite{HPV89} 
which embeds a line into a continuous, Lipschitz function $f: D \rightarrow D$ whose fixed points are located near the end-of-line point. 
At a very high level, it is helpful to think of the corresponding displacement function, $g(x) := f(x)-x$, as the gradient of our potential function $\phi$. Then, any local maxima 
of $\phi$ would indeed correspond to a fixed point of $f$.
$$ \text{Wishful thinking:\;\;\;} \nabla \phi = g.$$
This analogy, while inspiring, is over-optimistic; for example, there is no reason why $g$ should satisfy the Gradient Theorem (hence it is not a valid gradient of any function).
One of our main technical contributions is in constructing and analyzing an explicit potential function $\phi$ that embeds the End-of-Line problem a-la the $n$-dimensional variant of~\cite{HPV89} construction. 
See Section \ref{sec:hpv} for details.
We remark that related constructions were known for two dimensions~\cite{Vavasis93,HY17-crypto,BM20,CDHS20}.

The combination of the above ideas brings us most of the way to our result for two-player games in the query complexity model.

\subsubsection*{Binary-action games}
To extend our result to binary-action $n$-player games, we distribute the task of reporting each coordinate of the vector $a$ (respectively $b$) to a subset of players.
Even though each player only reports $0$ or $1$, taking the average over those reports allows us to construct a richer game, with strategy profiles that can (approximately) represent any point in the convex domain of $\phi$. 

Another advantage that we introduce into our reduction by averaging over many ($\Theta(\sqrt{n})$ players) is that even if players mix their individual strategies, the average concentrates around its expectation with exponentially high probability. This ensures that we can again treat these vectors almost as pure strategies\footnote{Similar idea of replicating players has been introduced by \cite{CCT15} in a different context of reducing approximate well-supported Nash equilibrium to approximate Nash equilibrium.}.

\subsubsection*{Communication complexity}
An important issue that we glossed over so far is that the game~\eqref{eq:imitation} is actually trivial to solve in communication complexity: since the potential function is explicitly encoded in the players' utility function, they can find its maximum without any communication! 

Similarly to \cite{CCNash,GR18,CCPLS}, in order to obtain a hard game, we partition, for each $x \in D$, the {\em local information} about $\phi(x)$ between Alice and Bob as prescribed by a ``simulation theorem'' that ``lifts'' the query complexity of End-of-Line to communication complexity (see Section \ref{sec:cc}). In order to compute the potential at $a=b$, they need to combine the respective pair of local information. 

We introduce additional components to their strategy space which correspond to reporting their local information about $\phi(a)$ (respectively $\phi(b)$). Their utilities have additional components that incentivize them to truthfully report this local information. It is crucial that the incentive to truthfully report is much greater than what they can gain from an increase in the potential function due to a non-truthful report.
The updated two-player game looks roughly like this:
\begin{align}\label{eq:informal-game-Alice}
\begin{split}
u_A(a,b)&=-||a-b||+\epsilon\1\{\text{local info true}\} - \epsilon^2 \phi(a)  \\
u_B(a,b)&=-||a-b||+\epsilon\1\{\text{local info true}\}-\epsilon^2 \phi(b).
\end{split} 
\end{align}
For binary-action games, we need additional (subsets of) player to report the local information.
Here, each of Alice's players has identical interest utility a-la~\eqref{eq:informal-game-Alice} (this preserves the potential game property).

 But now we face a new subtle obstacle: in order to move in the direction of the potential function's gradient, two (subsets of) players must deviate simultaneously - both those responsible for some coordinate of $a$, and those responsible for the truthful reporting. In other words, this means that we would have spurious equilibria at potential non-maximizers just because there are no improving unilateral deviations. 
We note that this issue does not occur in general games because the coordinate players are not incentivized to help the local information players report remain truthful. A similar issue does arise in the analysis of pure Nash equilibria in potential games \cite{CCPLS}.

We resolve this issue by alternating between two subsets of local information players. At a high level, Alice is allowed to use the report of either subset. When one subset's report is updated and stable, the second subset can update their report. Once the second subset updated their report, the coordinate players can use that stable report to advance toward increasing the potential function. Then the first subset of local information players can update their reports as well. See Section \ref{sec:localinfo} for details.

\subsection*{Open problems}
The main open problem left in previous versions of this paper was to characterize the {\em computational} complexity of related problems in $\PPAD \cap \PLS$. 
Major progress has been made on this problem since, and we now know that congestion games are $\PPAD \cap \PLS$-complete~\cite{FGHS20,BR21}. 

Another, more technical, open problem is to obtain fine-grained variants of our lower bounds, namely:
\begin{oq}[Fine-grained communication complexity of potential games]
Close the gaps between our results and known upper bounds on communication complexity
\begin{description}
\item[Two-player games]
$\Omega(N^c)$ vs $\tilde{O}(N)$.\footnote{To see why $\tilde{O}(N)$ communication suffices, consider the best-response dynamics, where at each iteration a player with maximal improving deviation changes their action. Because the potential function increases at each improving-response deviation, no player can deviate to the same action more than once.} 
\item[Binary action games] $2^{\Omega(\sqrt{n})}$ vs $2^{O(n)}$.
\end{description}\end{oq}
For two-player games, we remark that an analogous gap was first left open by our paper on communication complexity of approximate Nash equilibrium in general games~\cite{CCNash}, and later completely (up to lower order terms in the exponent) closed in followup work~\cite{GR18}. The latter tight bound relies on mixed strategies to optimally encode vertices in the End-of-Line graph. However, potential games always admit pure equilibria, so closing this gap will likely require new ideas.

For $n$-player binary-action games, our sub-exponential bound of $2^{\sqrt{n}}$ is a result of our replication technique: We have a potential function that takes as input $n$ real variables in $[0,1]$, and we want to represent each variable using a team of $m$ $\{0,1\}$-action players; specifically, we use the average of the actions on the $i$-th team as the $i$-th input to the potential function. When players use mixed strategies, their empirical average can deviate from the intended expectation. Since our potential function is exponentially sensitive, we need exponential concentration in every coordinate to guarantee correct behavior. This forces us to set $m=\Theta(n)$, i.e.~we have $\Theta(n)$ players representing each coordinate, which is a quadratically inefficient representation. See Section~\ref{sec:rep} for details.

There may be more efficient representations for $n$-player games. For example, our result for two-player games uses a different technique that we term ``high degree imitation''. En route, in Section~\ref{sec:hd}, we also prove the following exponential {\em query} complexity lower bound for $n$-player constant-action games:
\begin{corollary*}[Corollary~\ref{cor:const}]
The query complexity of finding a Nash equilibrium (possibly mixed) in $n$-player $30$-action potential games is at least $2^{\Omega(n)}$.
\end{corollary*}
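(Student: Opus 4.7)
The plan is to combine the continuous potential function $\phi$ constructed in Section~\ref{sec:hpv}, which embeds a discrete End-of-Line instance into $[0,1]^n$ with local maxima located at end-of-line vertices, together with the ``high degree imitation'' technique of Section~\ref{sec:hd}, in a way that needs only a constant number of actions per player. Since the $n$-dimensional End-of-Line has $2^{\Omega(n)}$ vertices and requires $2^{\Omega(n)}$ queries to solve, any valid reduction of the kind below will inherit this lower bound in the query model.

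First I would partition the $n$ players into two teams of $n/2$, an $a$-team and a $b$-team, where each player on a team is responsible for one coordinate of a vector in $[0,1]^{n/2}$; the 30 actions available to each player encode 30 discrete values in $[0,1]$, yielding a fine-enough grid that the local structure of $\phi$ near end-of-line vertices survives discretization, while still giving an exponentially-rich effective strategy space for each team. The utilities follow the imitation template from \eqref{eq:imitation}: every $a$-team player receives $-\|a-b\|^{2k} - \epsilon \phi(a)$ and every $b$-team player receives $-\|a-b\|^{2k} - \epsilon \phi(b)$. Since players within each team have identical interest, and the cross term $\|a-b\|^{2k}$ depends symmetrically on both sides, this is an exact potential game with potential $\Phi(a,b) = -\|a-b\|^{2k} - \epsilon\phi(a) - \epsilon\phi(b)$.

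The heart of the argument is the mixed-equilibrium analysis using the high-degree semimetric $\|\cdot\|^{2k}$ for a sufficiently large constant $k$. The quadratic version would only pin down the expectation of a player's mixed action, permitting large residual variance; but with a high enough exponent, the penalty paid by a variance-$\sigma^2$ mixed strategy grows as $\sigma^{2k}$, which dominates any gain of order $\epsilon$ that a player could extract from the $\phi$-term. Hence at any equilibrium, each individual player's mixed strategy is concentrated on (essentially) a single action, and we may restrict attention to near-pure equilibria. This is precisely what removes the $\sqrt{n}$-replication overhead from the binary-action reduction: instead of averaging $\Theta(\sqrt{n})$ binary voters per coordinate to get concentration, we obtain concentration for free from the high imitation degree, so one player per coordinate suffices. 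Once $a \approx b$ (forced by the imitation term) and both profiles are nearly pure, the residual $\epsilon\phi$ incentive drives the joint profile to a grid-local maximum of $\phi$, which by construction decodes to an End-of-Line solution.

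The main obstacle is calibrating the trio $(k, \epsilon, \text{grid size})$ against the exponential sensitivity of $\phi$: the imitation penalty $\|a-b\|^{2k}$ must be small enough that even the tiny $\epsilon$-scale gradient of $\phi$ can move a coordinate between adjacent grid points, yet peaked enough that deviating from one's teammates costs more than any possible potential gain. Choosing $k$ to be a suitable constant (of order a few, consistent with the action count 30) and $\epsilon$ inverse-polynomial in the granularity parameters of the $\phi$ construction, and showing that the 30-value discretization is fine enough to make the discrete gradient of $\phi$ agree with the continuous one up to the slack absorbed by $\epsilon$, is the principal calculation to carry out. The resulting query lower bound then transfers directly from End-of-Line.
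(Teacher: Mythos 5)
Your high-level plan matches the paper's: a two-team imitation game on the integer grid where the imitation loss is raised to a high power so that mixed strategies are forced to concentrate, with a lower-order additive potential term driving play toward the end of the line. However, there is a concrete gap in the calibration, and it is not a minor one. You propose taking the imitation exponent $2k$ to be a \emph{constant} (``of order a few, consistent with the action count 30'') and making the potential coefficient $\epsilon$ small and inverse-polynomial. This cannot work: the potential $\phi$ built in Section~\ref{sec:hpv} ranges over values as large as $\Theta(2^{n})$ (the line has length $T = 2^{\Theta(n)}$ and the formulas~\eqref{eq:on}--\eqref{eq:far} scale linearly with $T$), whereas with constant exponent and actions bounded by $29$, the imitation penalty per coordinate is at most $29^{2k} = O(1)$. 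So for an exponentially small $\epsilon$ the potential term is negligible everywhere and the imitation term alone does not pin down the equilibrium; for $\epsilon$ that isn't tiny the potential gain swamps the bounded imitation penalty and players can jump across the cube. The paper avoids this by taking the exponent to be $2n$ (degree linear in $n$), so that moving a coordinate by more than $2$ costs at least $3^n$, which beats any $\Theta(2^n)$ swing in the potential; the coefficient on $\phi$ can then simply be a fixed constant $2$. The $30$ in ``$30$-action'' comes from the geometry of the embedding (the length-$29$ path between embedded line vertices leaving room for the close/semi-close/semi-far shells), not from the imitation exponent.

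The other place your sketch is too optimistic is the concentration step. You argue by a variance heuristic (``penalty paid by a variance-$\sigma^2$ mixed strategy grows as $\sigma^{2k}$''), but the actual difficulty is a distribution over $\{0,\dots,29\}$ whose tail weights are themselves exponentially small in $n$: a weight of $3^{-n}$ on a far-away integer contributes only $3^{-n}\cdot 29^{2n}$ to the expected imitation loss, which is of the same exponential order as the potential gains. Resolving this requires the careful case analysis of Lemma~\ref{lem:hd}, which shows that there is always a length-$2$ interval of integers outside of which the expected loss jumps by at least $3^n$, uniformly over all distributions, including those with exponentially small weights. A variance bound alone does not give you this. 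Fixing the exponent to $2n$ and replacing the heuristic by the Lemma~\ref{lem:hd}--style argument would close the proof, at which point your argument and the paper's coincide.
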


\subsubsection*{Roadmap}
In Section \ref{sec:pre} we present the notations and some preliminaries on potential games and on the end-of-line problem. Section \ref{sec:hpv} presents the main ingredient in the proofs: an embedding of a line into a potential function. Section \ref{sec:qc} shows how we can utilize this embedding in order to prove query lower bounds on (mixed) Nash equilibrium in potential games. Finally, Section \ref{sec:cc} shows how we can lift these query results to communication complexity. Section~\ref{sec:congestion} formalizes the extension to congestion games.

\section{Preliminaries}\label{sec:pre}

\subsection{Notations}
We denote $[n]:=\{1,2,...,n\}$ and $[n]_0:=\{0,1,...,n\}$. 
We denote by $e^n_1,...,e^n_n$ the unit vectors of $\Real ^n$. In cases when it is obvious what is the dimension, we will simply write $e_1,...,e_n$.
We denote by $d_1(x,y)=\sum_i |x_i-y_i|$ the $\calL_1$ norm. 
We denote by $d_\infty (x,y)=\max_i |x_i-y_i|$ the $\calL_\infty$ norm. We denote by $B(x,r):=B_\infty(x,r)=\{y:d_\infty(x,y)\leq r\}$ the ball of radius $r$ around a point $x$ in the $\calL_\infty$ norm, and similarly for a set $X\subset \Z^n$, we denote $B(X,r):=B_\infty(X,r)=\{y:\exists x\in X \text{ s.t. } d_\infty(x,y)\leq r\}$.

\subsection{Potential games}

\begin{definition}[Potential games]\hfill

An $n$-player game with action space $X$ is a {\em potential game} if there exists a {\em potential function} $\phi:X \rightarrow \Real$ such that for every player $i$, actions $x_i,y_i$ for $i$ and actions $z_{-i}$ for all other players,
\begin{gather*}
u_i(x_i,z_{-i}) - u_i(y_i,z_{-i}) = \phi(x_i,z_{-i}) - \phi(y_i,z_{-i}).
\end{gather*}
In words, unilateral change of an action affects the utility of the deviator precisely by the change in the potential. 
\end{definition}

\begin{fact}\label{rem:pot}
The following are potential games:
\begin{itemize}
\item \emph{Identical interest games}; i.e., games where $u_i=U$ for every player $i$ (and some function $U$).
\item \emph{Opponent-independent games}; 
i.e., games where $u_i(x_i,y_{-i})=u_i(x_i,z_{-i})$ for every player $i$, action $x_i$ for this player, and actions $y_{-i},z_{-i}$ for the rest of the players. 
\item A generalization of the previous two that we call {\em team-opponent-independent game}, where the players are partitioned into subsets (teams) and the utilities of players within each team are identical, and independent of players outside the team ($u_i(x_S,x_{-S}) = U_S(x_S)$).
\item Sum of potential games; I.e., if $H$ and $H'$ are two potential games, so is the game $G=H+H'$. 
\end{itemize}
\end{fact}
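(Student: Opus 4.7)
The plan is to treat each of the four bullets separately by exhibiting an explicit potential function $\phi$ and then verifying the defining identity
\[
u_i(x_i,z_{-i}) - u_i(y_i,z_{-i}) \;=\; \phi(x_i,z_{-i}) - \phi(y_i,z_{-i})
\]
by substitution. All four parts are direct manipulations; I do not expect a substantive obstacle, so the bulk of the ``proof'' is picking the right $\phi$ in each case. Since the team-opponent-independent case subsumes the first two, I would actually prove the third and fourth items in full and derive the first two as specializations.

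For the \emph{identical interest} case, take $\phi := U$; both sides of the identity reduce to $U(x_i,z_{-i}) - U(y_i,z_{-i})$. For the \emph{opponent-independent} case, hypothesis says $u_i(x_i,z_{-i})$ depends only on the coordinate $x_i$, so write $u_i(x_i,z_{-i}) = v_i(x_i)$ and set $\phi(x) := \sum_i v_i(x_i)$. Unilateral deviation of player $i$ from $x_i$ to $y_i$ changes only the $i$-th summand of $\phi$, and by exactly the same amount it changes $u_i$.

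For the \emph{team-opponent-independent} case, let $\mathcal{S}$ denote the partition of players into teams and set $\phi(x) := \sum_{S\in\mathcal{S}} U_S(x_S)$. If player $i$ belongs to team $S$ and deviates from $x_i$ to $y_i$, then only the summand corresponding to $S$ changes, and it changes by $U_S(y_i,x_{S\setminus\{i\}}) - U_S(x_S)$, which by hypothesis equals $u_i(y_i,z_{-i}) - u_i(x_i,z_{-i})$. Specializing the partition to a single team recovers identical-interest, and to singletons recovers opponent-independent. For the \emph{sum} case, if $H,H'$ admit potentials $\phi_H,\phi_{H'}$ over the same player and action sets, take $\phi_G := \phi_H + \phi_{H'}$; since $u^G_i = u^H_i + u^{H'}_i$ by definition of the sum game, the defining identity for $G$ follows by adding the identities for $H$ and $H'$ applied to the same deviation. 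The only point worth flagging (not really an obstacle) is that ``sum of games'' presupposes a common player set and common action spaces, so that the pointwise sum $u^H_i + u^{H'}_i$ is well-defined.
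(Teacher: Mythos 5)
Your proof is correct and lands on the same potential function the paper uses in every case; the only difference is presentational. For the team-opponent-independent case you verify $\phi(x) = \sum_{S} U_S(x_S)$ directly, whereas the paper obtains the same $\phi$ indirectly by viewing each team as an identical-interest game augmented with dummy players and then invoking the ``sum of potential games'' bullet — both routes are trivially equivalent.
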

\begin{proof}
We can see that a sum of potential games is a potential game by taking the sum of the respective potential functions.

Identical-interest games are potential games by taking the potential function to be the identical interest; i.e., $\phi := u_1$.

The above argument extends to identical interest games that are augmented with irrelevant players (who receive zero utility and do not affect other players' utilities). Finally, notice that team-opponent-independent games can be written as sums of such augmented identical interest games (one for each team).
\end{proof}

\subsection{End-of-Line Problem}

Our starting point is a variant of the End-of-Line problem over the Pyramid graph, which is known to be hard in both the query model and the communication model. We embed this problem to an end-of-line over a hypercube. This embedding induces an end-of-line problem over the hypercube with some additional structure of the line.

The Pyramid graph $\Pyr(T)$ is a directed graph whose vertices are given by $V_{P}=\{(x,y):x,y\in [T]_0 \text{ and } x+y\leq T\}$, and the edges are given by $E_{P}=\{(v,v+e_i):v\in [T]_0^2,i=1,2\}.$ Simply speaking, the Pyramid graph is the directed two-dimensional grid.

\paragraph{Query complexity} 
Let $L_{P}=(l_{P}(0),l_{P}(1),...,l_{P}(T))$ be a line of length $T$ in the Pyramid graph with starting point $\0_2$; i.e., $l_{P}(t)\in V_{Py}$, $L_{P}(0)=(0,0)$, and $(l_{P}(t),l_{P}(t+1))\in E_{P}$. Note that any line over the Pyramid graph is, in fact, a \emph{metered line} (see \cite{HY17-crypto}). Namely, if a line goes through a given vertex $(x,y)$ we know that the line went exactly $x+y$ steps so far, and exactly $T-x-y$ steps remained.

In the query problem of $\eol(\Pyr(T))$ the input is a line $L_P$. The output is the end-of-line $l_P(T)$. The queries are vertices $v=(v_1,v_2)\in V_P$, and the answer is a triple $(t_v,s_v,p_v)\in \{0,1\}^3$ that describes whether the line goes through $v$ (formally, $t_v=\1_{v\in L_P}$) and if so, also it reports whether the successor increases the first or second coordinate (formally, $s_v=\1_{v, v+e_1\in L_P}$) and whether the predecessor decreases the first or second coordinate (formally, $p_v=\1_{v, v-e_1\in L_P}$). It is well known (see e.g., \cite{Ald}) that $QC(\eol(\Pyr(T)))=\Theta(\sqrt{T})$.

\paragraph{Communication Complexity} 
We abuse notation and also use $\eol(\Pyr(T))$ to denote the following communication variant of the End-of-Line problem.
The input $(t_v,s_v,p_v)_{v\in V_P}$ is distributed between Alice and Bob as follows. For every vertex $v\in V_P$ Alice holds a triple of arrays $t^A_v,s^A_v,p^A_v\in \{0,1\}^3$. Bob holds a triple of indices $t^B_v,s^B_v,p^B_v\in [3]$. Their goal is to compute the end of the line $(t^A_v(t^B_v),s^A_v(s^B_v),p^A_v(p^B_v))_{v\in V_P}$. \cite{HN12} and \cite{GP14} have proved that this lifting of the problem to communication is as hard as the query problem, even for randomized communication model. 
\begin{theorem}[\cite{HN12,GP14}]\label{thoe:ccEOL}
$CC(\eol(\Pyr(T))=\Theta(QC(\eol(\Pyr(T)))=\Theta(\sqrt{T})$.
\end{theorem}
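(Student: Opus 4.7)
I would prove this in two halves. The upper bound $CC(\eol(\Pyr(T))) = O(\sqrt{T})$ is a direct simulation: Alice and Bob jointly execute the $O(\sqrt{T})$-query algorithm for $\eol(\Pyr(T))$, and each query to a vertex $v\in V_P$ is answered with $O(1)$ bits --- Bob sends his three indices $t^B_v,s^B_v,p^B_v\in[3]$ and Alice replies with the three selected bits $t^A_v(t^B_v), s^A_v(s^B_v), p^A_v(p^B_v)\in\{0,1\}$. This costs a constant number of bits per query, giving $O(\sqrt{T})$ in total, and works in both the deterministic and randomized models.

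For the matching lower bound $CC(\eol(\Pyr(T))) = \Omega(\sqrt{T})$, the plan is to invoke the query-to-communication lifting framework of \cite{HN12, GP14}. The key observation is that the communication input is exactly the composition of the original query input with the INDEX gadget of arity $3$: each bit $t_v$ (and similarly $s_v,p_v$) that a query algorithm reads is encoded as $\mathrm{IND}_3(t^A_v, t^B_v) := t^A_v(t^B_v)$, where Alice's input is the $3$-bit table and Bob's input is the index. The cited lifting theorems assert that for a search problem $f$ composed with $\mathrm{IND}_3$ in this way, randomized communication complexity is bounded below by a constant times randomized query complexity of $f$. Applying this with $f=\eol(\Pyr(T))$ and combining with the standard $\Omega(\sqrt{T})$ query lower bound (e.g.~\cite{Ald}) yields the claim.

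The main obstacle is that the INDEX gadget here has only constant size ($k=3$), whereas the more general deterministic simulation theorems require the gadget size to be polynomial in the query-complexity lower bound one wishes to lift. For constant-size INDEX, however, the lift is carried out by an information-theoretic argument tailored to the End-of-Line structure: intuitively, a round-elimination / direct-sum style argument shows that any communication protocol which uses significantly fewer than $\Omega(\sqrt{T})$ bits cannot extract more than $o(\sqrt{T})$ coordinates of the underlying query input with non-trivial probability, and therefore cannot locate the end of the line. Since this is exactly the content of \cite{HN12,GP14}, my proof would simply appeal to their theorem rather than reprove the lifting from scratch.
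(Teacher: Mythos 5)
The paper does not prove Theorem~\ref{thoe:ccEOL}; it is stated as a citation to \cite{HN12,GP14}, and your proposal ultimately does the same thing (``my proof would simply appeal to their theorem''), so the approaches agree. One small correction worth noting: your guess about the internal mechanism of the lift is off --- the cited works (Huynh--Nordström and G\"o\"os--Pitassi) do not use a round-elimination or direct-sum argument for this problem; they prove the constant-gadget simulation theorem via the notion of \emph{critical block sensitivity}, a query measure which simultaneously lower-bounds randomized query complexity of the search problem and, after composition with a constant-arity index gadget, its randomized communication complexity. The relevant fact is that the End-of-Line search problem on the pyramid/grid has critical block sensitivity $\Omega(\sqrt{T})$. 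Since your proposal merely speculates about the internals and then correctly defers to the cited theorem, this does not affect the validity of the argument, but the attribution of technique should be corrected.
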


\section{Embedding a Line as a Potential Function}\label{sec:hpv}

We embed the End-of-Metered-Line over the pyramid graph problem into a problem of finding a local maximum of a potential function. 
We define our hard potential function on the integer points of $[\gs]_0^{n+1}$; 
for fractional points, we later use the multilinear extension on each length-$1$ subcube.

Crucially, we use a somewhat unusual notion of ``local maximum'', 
which we call ``small cube with(out) a dominating direction''. 
This notion is well defined for a function over the discrete hypergrid and also captures precisely the properties that we later use for our equilibrium analysis.
See Definition~\ref{def:dd} for details.

Our construction proceeds in two steps. The first (easy) step embeds the two-dimensional End-of-Line instance into a line in $[\gs]_0^{n+1}$. The second (involved) step defines the potential functions with respect to that line.  

\subsection{Embedding $\eol(\Pyr(T))$ into $n+1$ dimensions}

Our first step is embedding the two-dimensional End-of-Line instance in $n+1$ dimensions, where $n := 2\log_2(T)$. This step is further broken into two sub-steps.

The first sub-step encodes each coordinate of the planar instance with $n/2$ bits using Gray code%
\footnote{Gray code is a binary encoding of integers where the encodings of any two consecutive integers differs by exactly one bit~\cite{Black20}}. This ensures that if $(x,y)$ and $(x,y\pm1)$ (respectively, $(x\pm1,y)$) are neighbors in $\Pyr(T)$, the corresponding embedded vertices are also neighbors in the discrete $n$-dimensional hypercube graph. Note that such an embedding preserves the property that the line is metered. Namely, if a line goes through a vertex $v$, the index of this vertex in the line can be determined from $v$.

The second sub-step embeds the $l$ line from the $n$-dimensional hypercube graph $\{0,1\}^n$ into a new line $l'$ on the $(n+1)$-dimensional grid $G_{n+1}:=[\gs]_0^{n+1}$. The embedding simply multiplies the first $n$ coordinates by $29$ and sets the last one to $0$%
\footnote{Formally., let $L=(l(0),l(1),...,l(T))$ be the line we obtained in the previous sub-step. 
In particular, note that $l(t)\in \{0,1\}^n$ for all $t$, $l(0)=\0_n$ is our starting point, and $d_1(l(t),l(t-1))=1$. 
A vertex $l(t)$ is embedded into $l'(t)=(\gs l(t),0)\in G_{n+1}$. An edge $(l(t),l(t+1))$ is embedded into the shortest path of length $\gs$ from $l'(t)$ to $l'(t+1)$, where all coordinates except one remain fixed.}. 
We denote by $E(t)\subset G_{n+1}$ the embedded edge from $l'(t)$ to $l'(t+1)$, which is simply the set of $\gsp$ grid points along the shortest path from $l'(t)$ to $l'(t+1)$. The \emph{direction} of $E(t)$ is $l(t)-l(t-1)=\frac{1}{\gs} (l'(t+1)-l'(t))\in \{\pm e_1,...,\pm e_{n+1}\}$, where $e_i$ are the unit vectors. We add one additional initial edge to the path $E(0):=\{(\0_n,\gs),(\0_n,\gsm),...,(\0_n,0)\}$ with direction $-e_{n+1}$. We denote by $L'=\cup_{t=0}^m E(t)$ the embedded line. We denote $l'(-1):=(\0_n,\gs)$ to be the \emph{origin} of the line $L'$.

\subsection{The potential function}

Now we define a potential function $\phi:G_{n+1} \ra \N$ along with some terminology that will be useful in the proofs. Figure~\ref{fig:2d} visualizes some of the important properties of the potential (albeit with the usual caveats that come with drawing an $n$-dimensional potential on a two-dimensional paper).

\begin{figure}[t]
\caption{Defining the potential function near a line/corner}\label{fig:2d}
\vspace{0.6cm}
\begin{tikzpicture}[xscale=0.5,yscale=0.5]
\draw (0,-3) -- (0,14) -- (29,14) -- (29,-3);
\draw[dashed] (0,-3)--(0,-6);
\draw[dashed] (29,-3)--(29,-6);

\filldraw (0,14) circle(0.15); 
\filldraw (29,14) circle(0.15); 
\node[left] at (0,14) {$l'(t-1)$};
\node[above] at (29,14) {$l'(t)$};

\node[above] at (14.5,14) {$E(t)$};
\node[right] at (29,7) {$E(t+1)$};

\draw[->] (0,14.5) -- (1,14.5);
\draw[->] (2,14.5) -- (3,14.5);
\node[right] at (3,14.5) {$- e_i$};
\draw[->] (6,14.5) -- (7,14.5);
\draw[->] (10,14.5) -- (11,14.5);
\draw[->] (19,14.5) -- (20,14.5);
\draw[->] (23,14.5) -- (24,14.5);
\draw[->] (27,14.5) -- (28,14.5);

\draw[->] (29.5,14) -- (29.5,13);
\draw[->] (29.5,11.5) -- (29.5,10.5);
\node[below] at (29.5,13) {$e_j$};
\draw[->] (29.5,3.5) -- (29.5,2.5);
\draw[->] (29.5,0) -- (29.5,-1);

\draw (0,10)--(25,10)--(25,14);
\node at (14,12) {(close; $t$)};

\draw[->] (2,11.5)--(2,12.5);
\draw[->] (2,11.5)--(1,11.5);

\draw[->] (6,11.5)--(6,12.5);
\draw[->] (6,11.5)--(5,11.5);

\draw[->] (10,11.5)--(10,12.5);
\draw[->] (10,11.5)--(9,11.5);

\draw[->] (23,11.5)--(23,12.5);
\draw[->] (23,11.5)--(22,11.5);

\draw[->] (19,11.5)--(19,12.5);
\draw[->] (19,11.5)--(18,11.5);

\draw (25,10)--(25,-3);
\draw[dashed] (25,-3)--(25,-6);

\node[rotate=270] at (27,4) {(close; $t+1$)};
\draw[->] (27,11.5)--(27,12.5);
\draw[->] (27,11.5)--(28,11.5);

\draw[->] (27,7.5)--(27,8.5);
\draw[->] (27,7.5)--(28,7.5);

\draw[->] (27,0)--(27,1);
\draw[->] (27,0)--(28,0);

\draw (4,10)--(4,-3);
\draw[dashed] (4,-3)--(4,-6);
\node[rotate=90] at (2,3) {(close; $t-1$)};

\draw[->] (2,7.5)--(2,6.5);
\draw[->] (2,7.5)--(1,7.5);

\draw[->] (2,0)--(2,-1);
\draw[->] (2,0)--(1,0);

\draw (12,10)--(12,6)--(25,6);
\node at (14.5,8) {(semi-close; $t$)};

\draw[->] (23,7.5)--(23,8.5);
\draw[->] (23,7.5)--(22,7.5);

\draw[->] (19,7.5)--(19,8.5);
\draw[->] (19,7.5)--(18,7.5);

\draw (21,2)--(21,-3);
\draw[dashed] (21,-3)--(21,-6);
\node[rotate=270] at (23,-3) {(semi-close; $t+1$)};

\draw[->] (23,0)--(23,1);
\draw[->] (23,0)--(24,0);

\draw (8,10)--(8,-3);
\draw[dashed] (8,-3)--(8,-6);
\node[rotate=90] at (6,3) {(semi-close; $t-1$)};

\draw[->] (6,7.5)--(6,6.5);
\draw[->] (6,7.5)--(5,7.5);

\draw[->] (6,0)--(6,-1);
\draw[->] (6,0)--(5,0);

\draw (12,6)--(12,2)--(25,2);
\node[text centered, text width=3.5cm] at (14,4) {(semi-far; $t$)};

\draw[->] (19,3.5)--(19,4.5);
\draw[->] (19,3.5)--(18,3.5);

\draw[->] (23,3.5)--(23,4.5);
\draw[->] (23,3.5)--(22,3.5);

\draw (12,2)--(12,-3);
\draw[dashed] (12,-3)--(12,-6);
\node[rotate=90] at (10,3) {(semi-far; $t-1$)};

\draw[->] (10,0)--(10,-1);
\draw[->] (10,0)--(9,0);

\draw (17,2)--(17,-3);
\draw[dashed] (17,-3)--(17,-6);

\draw[->] (10,7.5)--(10,6.5);
\draw[->] (10,7.5)--(9,7.5);

\node[rotate=270] at (19,-3) {(semi-far; $t+1$)};

\draw[->] (19,0)--(19,1);
\draw[->] (19,0)--(20,0);

\node at (14.5,0) {(far)};

\end{tikzpicture}
{\small The pair in the parenthesis in each region specifies the distance of points from the line and the edge index of the points, according to the definition of $\phi$. Arrows in each region indicate the directions (among $\pm e_i, \pm e_j$) in which the potential decreases. Arrows along the edges indicate the direction in which potential on the line decreases.}
\end{figure}
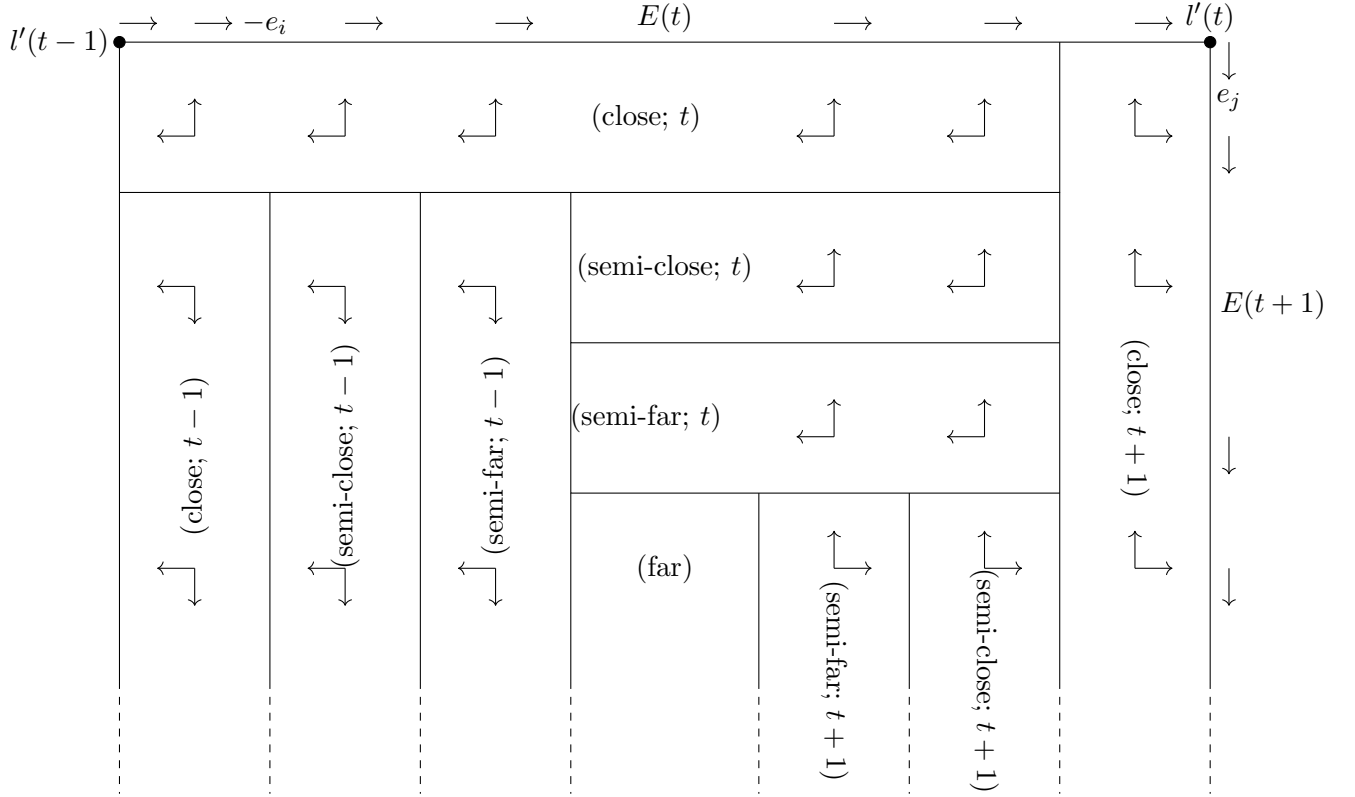

\paragraph{Potential for points on the line} For $x \in L'$, we denote by $t$ the maximal index such that $x\in E(t)$, and we set 
\begin{align}\label{eq:on}
\phi(x)=\thgs(T-t)+d_1(x,l'(t)).
\end{align}
Note that the term $88(T-t)$ captures the length of path that follows the line from $l'(t)$ till its end, multiplied by some constant.

\paragraph{Potential for close points}
A point $x\in B(L',4)\setminus L'$ is called \emph{close} to the line point. We set the \emph{edge-index} of $x$ to be the maximal index $t$ such that $x\in B(E(t),4)$, and we set 
\begin{align}\label{eq:close}
\phi(x)=\thgs(t+T+1)+d_1(x,l'(t-1)).
\end{align}
Note that the term $88(t+T+1)$ captures the length of path that starts at $l'(t-1)$ follows the line till its origin and then follows the line till its end, multiplied by some constant.

\paragraph{Potential for semi-close and semi-far points}
The \emph{edge-index} of $x\in B(L',12)\setminus B(L',4)$ is the minimal index $t$ such that $x\in B(E(t),12)$. 

If $5\leq d_\infty (x,E(t))\leq 8$ we call $x$ a \emph{semi-close point} and the potential is set similarly to the close points case 
\begin{align}\label{eq:semi-close}
\phi(x)=\thgs(t+T+2)+d_1(x,l'(t-1)).
\end{align}
The distinctions from the close point case are that index $t$ is chosen according to the minimal close index rather than maximal, and the additional constant of $\thgs$.

If $9\leq d_\infty (x,E(t))\leq 12$ we call $x$ a \emph{semi-far point} and we set 
\begin{align}\label{eq:semi-far}
\phi(x)=\thgs(t+T+2)+d_1(x,l'(t-1)) +(\tgs - 2x_{n+1}).
\end{align} 
The distinction from semi-close points is that the potential decreases with the special dimension $n+1$. Indeed when $x_{n+1}$ increases by 1 the distance $d_1(x,l'(t-1))$ increases by (at most) 1, but $\tgs - 2x_{n+1}$ decreases by 2.

\paragraph{Potential for points far from the line}
Finally, for $x\notin B(L',12)$, we set 
\begin{align}\label{eq:far}
\phi(x)=\thgs(T+1)+d_1(x,l'(-1)).
\end{align}
This definition is inspired by a path that goes from $x$ \emph{directly} to the origin and then follows the path till its end. The only difference is that the length of the path that goes from the origin till the end is multiplied by some constant.

This potential has two key properties: \emph{locality} and \emph{existence of a dominating direction for small cubes}. In the following subsections, we define these properties formally and show that indeed $\phi$ satisfies them.

\subsection{Locality} For every $x\in G_{n+1}$ we denote by $r_i(x_i):=\gs\cdot \1_{x_i\geq \hgs}$ and $r(x)=(r_i(x_i))_{i\in [n+1]}$ \emph{the rounding} of $x$ to the corners of $G_{n+1}$. For a corner point $y\in \{0,\gs\}^{n+1}$ the \emph{local information about the line} consists of whether the line $L'$ goes through $y$, and if so also in which stage it passes $y$ (i.e., the value of $t$ for $l'(t)=y$), and also the previous and the next vertices $l'(t-1)$ and $l'(t+1)$.

A potential satisfies the locality property if $\phi(x)$ can be calculated from the local information of $r(x)$. Indeed, by accessing the information $(l'(t'-1),l'(t'),l'(t'+1))$ at the point $r(x)$ one can determine whether $x$ is on the line, close, semi-close, semi-far, or far from the line. Moreover, the index $t'$ 
can be deduced from the point $l(t')$ because the line is metered. The maximal and minimal indices $t$ such that $x$ is on / close / semi-close / semi-far to $E(t)$ can also be deduced from $(t',l'(t'-1),l'(t'),l'(t'+1))$. Finally, in all the equations defining $\phi(x)$ (see Equations \eqref{eq:on},\eqref{eq:close},\eqref{eq:semi-close},\eqref{eq:semi-far} and \eqref{eq:far}), the potential depends only on $t$ and $l'(t-1),l'(t) \in \{l'(t'-1),l'(t'),l'(t'+1)\}$.

\subsection{Existence of a dominating direction for small cubes}\label{sec:domdir} 
A \emph{small cube} 
is defined by $C(\uv,\ov):=\{x\in G_{n-1}: \uv_i\leq x_i \leq \ov_i \ \forall i\in [n+1]\}$ where $\uv,\ov\in G_{n+1}$, $\uv_i\leq \ov_i$ and $d_\infty (\uv,\ov)\leq 4$. The corners $\uv$ and $\ov$ are the lower and upper corners of the cube. The cube is called small because $d_\infty (\uv,\ov)\leq 4$. 
We have $2(n+1)$ possible directions $\{\pm e_1,\pm e_2,...,\pm e_{n+1} \}$.

We begin with the weaker notion of feasible direction.
\begin{definition}[Feasible direction]\label{def:feasible} \hfill

Given a small cube $C(\uv,\ov)$ we say that $e_i$ is a \emph{feasible direction} if there exists some $x\in C(\uv,\ov)$ such that $x+e_i \in G_{n+1}$.
The case of $-e_i$ as a feasible direction is defined analogously.
\end{definition}

We now define the notion of dominating direction, which formalizes are desiderata for points that are far from the embedded end of the line.
\begin{definition}[Dominating direction]\label{def:dd} \hfill

Given a potential $\phi:G_{n+1}\ra \Real$ and a small cube $C(\uv,\ov)$ we say that $e_i$ is a \emph{dominating direction} if it is a feasible direction, and for every $x\in C(\uv,\ov)$ that satisfies $x+e_i \in G_{n+1}$ we have $\phi(x)> \phi(x+e_i)$.
The case of $-e_i$ as a dominating direction is defined analogously.

\end{definition}
Simply speaking, the notion of dominating direction captures the idea that ``moving from any point in the small cube in the direction $e_i$ decreases the potential". Note that moving in a given direction might be impossible (infeasible) if we are located on the boundaries of the big cube $G_{n+1}$. However, we want to avoid an undesirable situation where a direction is defined to be dominating simply because for all points in the small cube we cannot move in this direction, hence we require the direction to be feasible. 

The second key property of the potential $\phi$ defined above is the following.
\begin{proposition}\label{pro:dom-dir}
Every small cube $C(\uv,\ov)$ has a dominating direction except for the zero-dimensional cube that is the end-of-line $C(l'(T),l'(T))$.  
\end{proposition}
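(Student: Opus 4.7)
My plan is to prove the proposition by a careful case analysis on where the small cube $C = C(\uv,\ov)$ sits relative to the embedded line $L'$. The key structural observation is that since $C$ has $\ell_\infty$-diameter at most $4$ and each of the close, semi-close, and semi-far annuli around $L'$ has $\ell_\infty$-width exactly $4$, the cube can meet at most two adjacent regions (among on-line, close, semi-close, semi-far, far) simultaneously; this substantially limits the configurations to check. For each configuration I will exhibit a specific direction $\pm e_i$ and verify both requirements of Definition~\ref{def:dd}: feasibility in $G_{n+1}$, and the strict inequality $\phi(x) > \phi(x+e_i)$ for every $x\in C$ with $x+e_i\in G_{n+1}$.

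The candidate dominating direction in each bulk region comes from the negative gradient of the piecewise formula for $\phi$. On-line points advance forward along the current embedded edge $E(t)$, which by \eqref{eq:on} decreases $\phi$ by $1$ per interior step and by a large jump (of order $\thgs$) at a vertex where the edge-index advances. Close and semi-close points move back toward the preceding line vertex $l'(t-1)$, decreasing $d_1(\cdot,l'(t-1))$. Semi-far points use $+e_{n+1}$, whose effect on the explicit perturbation $(\tgs - 2x_{n+1})$ in \eqref{eq:semi-far} beats the at-most-$1$ change in $d_1$. Far points point toward the origin $l'(-1)=(\0_n,\gs)$ by decreasing some $x_i$ with $i\in[n]$ and $\uv_i>0$, or increasing $x_{n+1}$ if $\ov_{n+1}<\gs$; at least one such feasible coordinate must exist, since otherwise $C$ would contain $l'(-1)\in L'$, contradicting that $C$ lies entirely in the far region. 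In each bulk case the verification reduces to a direct $d_1$-arithmetic computation within a single region.

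The main technical obstacle will be cubes that straddle two adjacent regions: the $\thgs$-coefficient of the edge-index creates large jumps in $\phi$ at region boundaries that could spuriously violate the dominating property of the chosen direction. My plan is to exploit the careful calibration of the constants in \eqref{eq:on}--\eqref{eq:far}: the coefficient shifts between on-line and close, between close and semi-close, and the perturbation $\tgs-2x_{n+1}$ that bridges semi-far with the interior of far are precisely tuned so that the bulk-region direction continues to strictly decrease $\phi$ across every such boundary. A further delicacy arises at corners of $L'$, where two consecutive edges $E(t), E(t+1)$ meet in perpendicular directions $d_t\neq d_{t+1}$; since the Gray-code sub-step of the embedding guarantees that consecutive original edges differ in exactly one bit, $L'$ turns by $90^\circ$ at each corner and at most one corner fits inside a small cube, which limits the combinatorial cases I need to analyze. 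Finally, for the stated exception $C(l'(T),l'(T))$: by \eqref{eq:on} we have $\phi(l'(T))=0$, and a direct inspection of \eqref{eq:on}--\eqref{eq:far} shows that $\phi\geq 1$ everywhere else, so $l'(T)$ is the unique global minimum of $\phi$; no direction can strictly decrease $\phi$ from $l'(T)$, matching the exception tightly.
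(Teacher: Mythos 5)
Your plan captures the right high-level intuitions — the bulk direction in each region, the exceptional global minimum at $l'(T)$, the role of the $\thgs$-coefficient in absorbing boundary jumps — but it underestimates, and does not actually resolve, the hardest part of the proof: cubes straddling a corner of $L'$ where two perpendicular edges $E(t)$ and $E(t+1)$ meet. Your decomposition is by \emph{region} (on-line, close, semi-close, semi-far, far), and you observe (correctly) that a small cube can meet at most two adjacent regions. But this is not the structural bottleneck. The real difficulty is that within a \emph{single} region (say ``close''), a cube near a corner contains points with \emph{two different edge-indices}, so the target vertex $l'(t-1)$ in your ``move back toward the preceding line vertex'' direction is not the same for all points of the cube. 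You flag this as a ``further delicacy,'' but the only concrete content you offer is that Gray coding makes consecutive edges perpendicular and that at most one corner fits inside the cube; this bounds the number of corners but does not tell you which single coordinate direction to pick, nor why it dominates for every point in the cube simultaneously. The paper's proof, by contrast, organizes the case split by the number of edges the cube is near (two / one / zero, its Cases A/B/C), and inside the two-edge Case A it does the crucial dimension-reduction step: if the cube extends in any coordinate $k\notin\{i,j,n+1\}$ (out of the plane spanned by the two edge directions), then $\pm e_k$ toward the line is dominating because it decreases $d_1$ to \emph{both} $l'(t)$ and $l'(t-1)$ and cannot change the edge-index (Cases A.1--A.2); only after this reduction does one reach an essentially two-dimensional configuration where the direction along one of the two edges can be verified directly (Cases A.3--A.4). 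Nothing in your plan produces this reduction, and without it, ``toward $l'(t-1)$'' is not a single feasible coordinate direction, and the verification for all $x\in C$ would fail.

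A secondary gap: for semi-far points you propose $+e_{n+1}$ and justify it by the $(\tgs-2x_{n+1})$ perturbation, but you only argue for the existence of a feasible replacement direction in the \emph{far} case (via ``otherwise $C$ contains $l'(-1)$''). You would need the analogous feasibility fallback for semi-far cubes, and you also need to check the transition when $+e_{n+1}$ moves a semi-far point across the $d_\infty=12$ boundary into the far region, where the potential formula and its reference point both change (from $l'(t-1)$ to $l'(-1)$); this is a nontrivial constant calculation, not an automatic consequence of ``careful calibration.'' So while the plan points in the right direction and shares the paper's broad architecture (bulk negative-gradient directions plus a verification that region/index transitions are dominated by the $\thgs$ jump), the specific missing ingredient is the out-of-plane dimension reduction near corners, which is what the paper's Case A sub-case analysis is really doing.
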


\begin{proof}
First we observe that all points in a small cube (with edge length 4) can be $12$-close (in the $d_\infty$ distance) to at most two adjacent edges $E(t)$ and $E(t+1)$. This simply follows from the fact that $12+12+4<\gs$. 

The cases that we consider are as follows:

\begin{itemize}
\item \textbf{Case A} All points in $C(\uv,\ov)$ are $12$-close to two different edges.
\item \textbf{Case B} All points in $C(\uv,\ov)$ are $12$-close to the line and $C(\uv,\ov)$ admits a point that is close to a single edge.
\item \textbf{Case C} The cube $C(\uv,\ov)$ admits a point that is $13$-far from the line. 

\end{itemize}

We first prove the lemma for (the most intricate) Case A.

\paragraph{Case A} We recall that $l'(t)=E(t) \cap E(t+1)$. We denote by $\pm e_i$ the direction of $E(t)$. We denote by $\pm e_j$ the direction of $E(t+1)$.
We assume without loss of generality that $l'_i(t)=l'_j(t)=0$. Equivalently we assume (w.l.o.g) that the direction of $E(t)$ is $-e_i$ and the direction of $E(t+1)$ is $e_j$. The three remaining cases ($(l'_i(t),l'_j(t))\in \{0,\gs\}^2$) follow by analogous arguments.

We denote by $c=\arg \max_{x\in C(\uv,\ov)} d_1(x,l'(t))$ the corner of $C(\uv,\ov)$ that is most far from $l'(t)$. Note that since we use $d_1$ distance this corner is unique. We denote by $I:=\{k\in [n+1] :c_k \neq l'_k(t) \}$ the subset of indexes for which the direction of the cube $C(\uv,\ov)$ toward the point $l'(t)$ is feasible according to Definition \ref{def:feasible}. Note also that \emph{all} directions from $l'(t)$ toward the interior of $G_{n+1}$ are feasible (since $d_\infty(c,l'(t))\leq 12<\gs$). 

Now we split to cases according to the set $I$ and according to the values of $d_\infty (c,E(t))$ and $d_\infty (c,E(t+1))$; these values indicate the location of $c$ with respect to the edges.

Figure \ref{fig:2d} is two-dimensional, whereas the actual path is of dimension $(n+1)$. However, the first three cases below (Cases A.1-A.3) ``reduce the proof to this two-dimensional picture" by showing that if the cube has additional dimensions that are out of this planar picture then a dominating direction can be easily found.

\paragraph{Case A.1: $I\setminus \{i,j,n+1\}\neq \emptyset$.} In such a case there is a feasible direction whose index is $k\notin \{i,j,n+1\}$ that gets closer to $l'(t)$. Without loss of generality we assume that this direction is $-e_k$ (or equivalently we assume that $l'_k(t)=0$). We will argue that $-e_k$ is a dominating direction. 
Let $x\in C(\uv,\ov)$ be such that $x_k=\ov_k>0$.
When moving from $x$ to $x-e_k$, the $d_1$ distance to both $l'(t)$ and $l'(t-1)$ decreases by 1; namely $d_1(x-e_k,l'(t))=d_1(x,l'(t))-1$ and $d_1(x-e_k,l'(t-1))=d_1(x,l'(t-1))-1$. Also, it is easy to verify that the edge-index of $x$ and of $x-e_k$ is identical. Therefore, if $x$ and $x-e_k$ are both close (respectively both semi-close or both semi-far), the the potential decreases. Altevnatively, if by moving from $x$ to $x-e_k$ we jump from semi-far to semi-close (respectively from semi-close to close, or from close point to a point on the line), while maintaining the edge index, the potential again decreases. 
Therefore $-e_k$ is indeed a dominating direction.\\

In the following two cases we deal with the case $n+1\in I$. We recall that by the embedding of the line $l'_{n+1}(t)=0$ (the case $t=-1$ is impossible because we have assumed existence of the edge $E(t)$).

\paragraph{Case A.2: $0<c_{n+1}\leq 8$.} Exactly the same arguments as in Case 1 can be applied with the direction $-e_{n+1}$.

\paragraph{Case A.3: $9\leq c_{n+1} \leq 12$.} In such a case all points in $C(\uv,\ov)$ are either semi-close or semi-far points. Thus, the edge-index of all points in $C(\uv,\ov)$ is\footnote{Here it is crucial that for both semi-close or semi-far points the edge index is set to be the minimal one.} $t$. Therefore, for the direction $e_i$ the distance from $l'(t-1)$ decreases and hence the potential decreases.\\

Now we remain with the case where $I\subset \{i,j\}$ which means that the cube contains only points of the form $\{x:\forall k\notin \{i,j\}, \;\; x_k=l'_k(t)\}$. Namely, we remain with the much easier two-dimensional problem presented in Figure \ref{fig:2d}.

\paragraph{Case A.4: $C(\uv,\ov)\cap E(t+1)\neq \emptyset$ and $i\in I$.} In such a case all points in $C(\uv,\ov)$ are either point on $E(t)$ or close points with edge-index\footnote{Note that here it is crucial that for close points we have set the index to be the \emph{maximal} one.} $t+1$. Therefore, we can choose the direction to be $-e_i$, which ensures that $d_1(x-e_i,l'(t))=d_1(x,l'(t))-1$, and it is easy to see (Equations \eqref{eq:on} and \eqref{eq:close}) that the potential decreases.\\

All other cases are quite trivial and the arguments follow from Figure \ref{fig:2d}. If $C(\uv,\ov)$ is a rectangle that is disjoint to $E(t+1)$, a dominating direction is $-e_j$. If it is contained in the line $E(t+1)$, a dominating direction is $e_j$. Otherwise, if it is contained in the line $E(t)$, a dominating direction is $e_i$. This completes the proof for Case A.

\paragraph{Case B.} We recall that in this case there exists a point $x\in C(\uv,\ov)$ that is $12$-close to an edge $E(t)$ with direction $\pm e_i$, and $x$ is $13$-far from all other edges. Without loss of generality we assume that the direction of $E(t)$ is $-e_i$ (this is also the case for the edge $E(0)$).

If $C(\uv,\ov)\cap E(t)=\emptyset$ we argue that $-e_i$ is a dominating direction. For every $x\in C(\uv,\ov)$ we have $d_\infty(x,E(t))=d_\infty(x+e_i,E(t))$ i.e., the distance from $E(t)$ is maintained. If the edge-index of $x+e_i$ remains $t$, the potential decreases because \footnote{Note that also for the case of $E(t)=E(0)$ whose direction is $-e_{n+1}$ the potential decreases in the direction $e_{n+1}$ even for semi-far points, where this direction has special meaning.} $d_1(x+e_i,l'(t-1)) = d_1(x,l'(t-1))-1$. The other possibility is that $x$ is semi-close or semi-far point and $x+e_i$ becomes a semi-far point with edge index $t-1$ (see Figure \ref{fig:2d}). In such a case we gain at most $\gs$ in the $d_1$ term (because now we count the distance from $l'(t-2)$ rather than from $l'(t-1)$) and we gain at most $\tgs$ in the term $\tgs - 2 x_{n+1}$, but we lose $\thgs>\tgs+\gs$ because the index has been decreased by 1.

If $C(\uv,\ov) \neq C(\uv,\ov)\setminus E(t)\neq \emptyset$, we pick a feasible direction $\pm e_k$ that gets closer to the edge $E(t)$. By similar arguments to the Case A.1, the potential decreases in this direction.

Finally if $C(\uv,\ov)\subset E(t)$ the dominating direction is $-e_i$.

\paragraph{Case C.} We recall that in this case there exists a point $x\in C(\uv,\ov)$ that is far from the line. Therefore, all $x\in C(\uv,\ov)$ are either far or semi-far from the line. If $e_{n+1}$ is a feasible direction then it is dominating, because it reduces the distance from the origin, see Equations \eqref{eq:semi-far}, \eqref{eq:far}. If $e_{n+1}$ is not feasible then any direction that moves toward the origin is dominating.  
\end{proof}

\section{Query Complexity bounds}\label{sec:qc}
The potential $\phi$ is a key ingredient in our lower bounds. However, without any additional ingredients, it is not sufficient to produce a reduction. Namely, if we consider the potential game with identical interest $\phi$ where each player is responsible for a single coordinate, there might be undesirable mixed Nash equilibria that are not ``located close to" the end-of-line. Below we present two different additional ingredients that ``concentrate" the support of the players in an equilibrium in a small cube, which allows us to use Proposition \ref{pro:dom-dir} to argue that all equilibria are supported near the end-of-line. One technique is 
replication, which will be utilized later to prove $2^{\poly(n)}$ lower bound on the communication complexity in $n$-player binary-action games.
The second technique is high degree imitation, which will be utilized later to prove $\poly(N)$ lower bound on the communication complexity in two-player $N$-action games.

\subsection{Replication}\label{sec:rep}

The first idea is to replicate the players that are responsible for a single coordinate and make the choice ``collective" by choosing the average of all numbers. The advantage of this approach is that it allows us to prove a lower bound for binary-action games. The disadvantage of this approach is that the lower bound is $2^{\Omega(\sqrt{n})}$ (rather than $2^{\Omega(n)}$).

Instead of having a single player $i$ that is responsible for the $i$-th coordinate we will have a team $m=\Theta(n)$ players that collectively choose the $i$-th coordinate.
Formally, the set of players is $\{(i,j)\}_{i\in [n],j\in [m]}$ the actions of all $nm$ players is binary $a_{i,j}\in\{0,\gs\}$. The actions of the $i$'th teams defines a number $\oa_i=\frac{1}{m}\sum_{j\in [m]} a_{i,j}\in [0,\gs]$. We denote by $\ophi:[0,\gs]^{n+1}\ra \Real$ the multilinear extension of $\ophi:[\gs]_0^{n+1}\ra \Real$ with respect to the closest integers. Namely, for $x=c+y=(c_1,...,c_{n+1})+(y_1,...,y_{n+1})\in [0,29]^{n+1}$ where $c_i\in \Z$ and $0\leq y_i <1$ we define 
\begin{align}\label{eq:multiLin}
\ophi (x)=\Ex_{s_i\sim Ber(y_i)} \phi(c_1+s_1,...,c_{n+1}+s_{n+1}),
\end{align}
where $Ber$ denotes the Bernoulli distribution.
The utility in the potential game is simply an identical interest utility that is given by $$u_{i,j}((a_{i,j})_{i,j})=u((a_{i,j})_{i,j})=-\ophi(\oa_1,...,\oa_{n+1}).$$

\begin{proposition}\label{theo:binary}
For sufficiently large $m=\Theta(n)$ the unique Nash equilibrium of the potential game $u(a)=-\ophi(\oa_1,...,\oa_{n+1})$ is a pure Nash equilibrium where $(\oa_1,...,\oa_{n+1})=l'(T)$ is the end-of-line of the line $L'$ (and in particular $a_{i,j}=l'_i(T)$). 
\end{proposition}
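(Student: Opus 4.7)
My plan is in two parts. First I would verify that the pure profile $a_{i,j}\equiv l'_i(T)$ is a Nash equilibrium: at this profile $\oa=l'(T)$ deterministically and $\ophi(l'(T))=\phi(l'(T))=0$, while $\phi\ge 0$ everywhere (read off from Equations~\eqref{eq:on}--\eqref{eq:far}). A unilateral deviation by a player $(k,j)$ shifts $\oa_k$ by $\gs/m$ into a length-$1$ cell incident to $l'(T)$, and multilinearity gives $\ophi=(1-\gs/m)\phi(l'(T))+(\gs/m)\phi(l'(T)\pm e_k)\ge 0$; hence the utility $-\ophi$ cannot strictly increase.

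For uniqueness I would argue by contradiction. Suppose another Nash equilibrium exists with action probabilities $q_{i,j}$, and set $\bar p_i:=\Ex[\oa_i]=(\gs/m)\sum_j q_{i,j}\in[0,\gs]$. Since $l'_i(T)\in\{0,\gs\}$, $\bar p_i=l'_i(T)$ already forces every player in team $i$ to be pure at $l'_i(T)/\gs$, so $\bar p\ne l'(T)$ is the only alternative to the pure $l'(T)$-profile. Choosing $m=cn$ for a large enough constant $c$, Hoeffding's inequality combined with a union bound over the $n+1$ teams yields $\Pr[\|\oa-\bar p\|_\infty>1/2]\le 2(n+1)\exp(-\Omega(m))=2^{-\Omega(n)}$. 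Hence, with overwhelming probability, $\oa$ lies in the integer cube $C(\uv,\ov)$ with $\uv_i=\max(\lfloor\bar p_i\rfloor-1,0)$ and $\ov_i=\min(\lceil\bar p_i\rceil+1,\gs)$, whose edge length is at most $3$; this is a small cube in the sense of Proposition~\ref{pro:dom-dir}.

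Proposition~\ref{pro:dom-dir} then hands us a dominating direction, WLOG $+e_k$. The hard part will be to guarantee that this direction is \emph{exploitable}, meaning that some player $(k,j)$ has $q_{k,j}<1$ and can therefore shift the conditional mean of $\oa_k$ upward by $\gs/m$ by switching to the pure action $\gs$. The degenerate case I would have to rule out is $\bar p_k=\gs$ with $+e_k$ still dominating on the $3$-edge cube; I plan to resolve it either by inspecting the case analysis in the proof of Proposition~\ref{pro:dom-dir} to check that the selected direction always lies in a coordinate with slack toward the intended optimum, or by re-running the argument with the tighter cube $\uv_i=\lfloor\bar p_i\rfloor$, $\ov_i=\lceil\bar p_i\rceil$, which degenerates to $\{l'(T)\}$ exactly when $\bar p=l'(T)$ and on which feasibility of $\pm e_k$ automatically forces $\bar p_k$ away from the opposing boundary; concentration into this tighter cube would then be handled coordinate by coordinate, and this is where the constant hidden in $m=\Theta(n)$ is pinned down.

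Granted an exploitable $+e_k$, my last step is the quantitative gain computation. Pick any player $(k,j)$ with $q_{k,j}<1$ and have them switch to the pure action $\gs$. Inside $C$, the partial derivative $\partial_k\ophi$ on every unit cell is a convex combination of the integer gaps $\phi(\cdot+e_k)-\phi(\cdot)\le -1$ guaranteed by the dominating property, so $\ophi$ strictly decreases in $e_k$ throughout $C$ with slope at most $-1$. On the concentration event this yields an expected utility gain of at least $\gs/m$, whereas on the complementary $2^{-\Omega(n)}$-event the utility swing is at most $2\phi_{\max}=O(T)=2^{O(n)}$. Taking $c$ large enough makes the first term dominate, giving a strictly positive expected gain and contradicting the Nash equilibrium hypothesis.
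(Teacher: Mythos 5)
Your overall plan --- Hoeffding concentration, a small cube around the team means, a dominating direction from Proposition~\ref{pro:dom-dir}, and a per-deviation gain of order $\gs/m$ beating the exponentially rare escape event --- matches the paper's proof. But the ``exploitability'' issue you bracket is not a patchable corner case; it is exactly where the cube definition must earn its keep, and neither of your two proposed fixes produces a correct cube. The paper collapses coordinate $i$ of the cube to a single grid point \emph{exactly} when $\bar p_i\in\{0,\gs\}$, i.e., exactly when team $i$ is pure. Feasibility of a dominating direction (built into Definition~\ref{def:dd}) then says the cube has slack in that coordinate, which forces $\bar p_k\notin\{0,\gs\}$, hence some $q_{k,j}<1$; exploitability is automatic, with no inspection of the case analysis required.

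Your first cube (edge length always at least $1$) never collapses, and this genuinely breaks the argument. Consider $\bar p=l'(t)$ for any $t<T$ --- a line vertex the theorem must rule out. The paper's cube is then the singleton $\{l'(t)\}$, and the proof of Proposition~\ref{pro:dom-dir} lands in the ``contained in $E(t+1)$'' sub-case of Case~A, returning the direction of $E(t+1)$, which is exploitable. Your cube is a full unit cube around $l'(t)$, so Case~A.1 fires instead: the corner farthest from $l'(t)$ differs in every coordinate, and the returned dominating direction is $-e_k$ for some $k\notin\{i,j,n+1\}$, pointing off the boundary $\bar p_k=0$ --- not exploitable. (And one can check the exploitable direction along $E(t+1)$ is not dominating for this larger cube, because moving that way from an off-line point such as $l'(t)+e_k$ increases $d_1(\cdot,l'(t))$ and hence $\phi$.) Your fallback cube $[\lfloor\bar p_i\rfloor,\lceil\bar p_i\rceil]$ over-corrects: it collapses at \emph{every} integer $\bar p_i$, including interior ones such as $\bar p_i=15$, where the realization $\oa_i$ falls on either side with constant probability and the concentration step fails. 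The correct choice is to expand by $\pm 0.2$ around $\bar p_i$ and degenerate only at $\bar p_i\in\{0,\gs\}$ (where the average is deterministic); then realizations and single-player deviations (which shift the average by $\gs/m\le 0.1$) stay in the cube except on a $2^{-\Omega(n)}$-probability event, and the rest of your computation goes through as you outlined.
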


\begin{proof}
Let $\alpha=(\alpha_{i,j})_{i\in [n],j\in [m]}$ be a mixed Nash equilibrium. 
We choose $m=\Theta(n)$ that ensures $\Prob_{\alpha}(\oa_{i}- \frac{1}{m}\Ex[\sum_j \alpha_{i,j}]>0.1)\leq 3^{-n}$ for every $i,j$. Note that the choice of $m$ can be done independently of $\alpha$, by (the non i.i.d. variant of) Hoeffding inequality. Simply speaking, we set $m$ such that all averages will be $0.1$-concentrated around their expectation with very high probability of $1-3^{-n}$.

The point $\oa=(\oa_i)_{i\in [n+1]}\in [0,\gs]^{n+1}$ defines a small cube $C_{\oa}$ in $G_{n+1}$. If $\oa_i\in \{0,29\}$ we set $\uv_i=\ov_i=\oa_i$. Otherwise, if $\oa_i\in (0,29)$ we set $\uv_i= \max \{\lfloor \oa_i-0.2 \rfloor,0 \}$ and $ \ov_i = \min \{ \lceil \oa_i +0.2 \rceil,29 \}$. We set $C_{\oa}=C(\uv,\ov)$. Simply speaking, $C_{\oa}$ is the expansion of the (continuous) cube $\prod_i [\oa_i-0.2,\oa_i+0.2]$ to the integer grid, except for the special case where $a_i$ lands exactly on the boundary.
In such a case, in this dimension the cube has $0$ length (i.e., a segment consisting of a single point). 
We denote by $\oC_{\oa}:=\conv(C_{\oa})$ the continuous cube. 

By Proposition \ref{pro:dom-dir}, the cube $C_{\oa}$ admits a dominating direction $k\in [n+1]$. Without loss of generality we assume that this direction is $+e_k$. We argue that all players $(k,j)$ for $j\in [m]$ strictly prefer to play action $\gs$ rather than $0$. Since $\oa_k$ is not located on the boundary this implies that there is no equilibrium whose average is $\oa$.

For every point $x=c+y\in \oC_{\oa}$, with the notations of Equation \eqref{eq:multiLin}, such that $y_k\neq 0$ (i.e., $x_k\notin \Z$) we have 
\begin{align}\label{eq:d}
\begin{split}
\frac{\partial (-\ophi)}{\partial x_k}(x)&=\frac{\ophi(c_k,x_{-k}) - \ophi (c_k+1,x_{-k})}{1} \\
&=\Ex_{s_i \sim Ber(y_i)}\left[\phi(c_k,(c_i + s_i)_{i\neq k})-\phi(c_k+1,(c_i + s_i)_{i\neq k})\right]\geq 1,
\end{split}
\end{align}

where the first equation follows from the fact that $\ophi$ is linear over the segment $\conv ((c_k,x_{-k}),(c_k+1,x_{-k}))$. The second equation is by the definition of the multilinear extension, and the last inequality holds because this difference is at least 1 \emph{for every realization} of $s_i$ since $e_k$ is a dominating direction in the cube.

We denote $z^0\in [0,29]^{n+1}$ to be $z_i=\oa_i$ for $i\neq k$ and 
$z^0_k=\frac{1}{m}\sum_{l\neq j} a_{k,l}$. Namely $z^0$ is the random variable of $\oa$ in case player $(k,j)$ plays $0$. We denote $z^{\gs}=z^0+\frac{29}{m}e_k$ the random variable of $\oa$ in case player $(k,j)$ plays $\gs$. The bound on the derivative (Equation \eqref{eq:d}) implies that for $z^0,z^{\gs}\in \oC_{\oa}$ we have
\begin{align}\label{eq:dif}
\ophi(z^0)-\ophi(z^{\gs})\geq \frac{29}{m} \min_{x\in \conv(\{z^0,z^{\gs} \})} \frac{\partial (-\ophi)}{\partial x_k}(x) \geq \frac{29}{m}
\end{align}
By deviation from $0$ to $\gs$ of player $(k,j)$ the (identical) expected utility increases by at least
\begin{align*}
\Ex_{\alpha} [\ophi(z^0)-\ophi(z^{\gs})] & = \Ex_\alpha [\ophi(z^0)-\ophi(z^{\gs})| \{z^0,z^{\gs}\}\subset \oC_{\oa}] \cdot \Prob_\alpha [\{z^0,z^{\gs}\}\subset \oC_{\oa}]\\
& \hspace*{4mm} + \Ex_\alpha [\ophi(z^0)-\ophi(z^{\gs})| \{z^0,z^{\gs}\}\not\subset \oC_{\oa}] \cdot \Prob_\alpha [\{z^0,z^{\gs}\}\not\subset \oC_{\oa}] \\
& \geq  \frac{\gs}{m}\Prob_\alpha [\{z^0,z^{\gs}\}\subset \oC_{\oa}] -200\cdot 2^n \Prob_\alpha [\{z^0,z^{\gs}\}\not\subset \oC_{\oa}] \\
& \geq  \frac{\gs}{m} (1-(n+1)3^{-n})-200\cdot 2^n (n+1)3^{-n} \geq \frac{\gsm}{m}>0,
\end{align*}
where the second inequality follows from Equation \eqref{eq:dif} and the fact that the potential is bounded by $200\cdot 2^n$. The third inequality follows from the choice of $m$ the union bound and the fact that if $\oa_k$ is $0.1$-close to its expectation, then necessarily both $z^0_k$ and $z^{\gs}_k$ are $0.2$-close to this expectation.  
\end{proof}

\begin{corollary}\label{cor:bin}
The query complexity of finding a Nash equilibrium (possibly mixed) in $n$-player binary-action potential games is at least $2^{\Omega(\sqrt{n})}$.
\end{corollary}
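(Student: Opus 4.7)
My approach is to derive the corollary as a direct reduction from $\eol(\Pyr(T))$ to the Nash equilibrium problem in the binary-action potential game of Proposition~\ref{theo:binary}. Given an $\eol(\Pyr(T))$ instance, I use the Gray-coded embedding of Section~\ref{sec:hpv} with $n := 2\log_2 T$ to obtain the line $L' \subset G_{n+1}$, construct the potential $\phi : G_{n+1} \to \N$, and instantiate the game $u(a) = -\ophi(\oa_1, \ldots, \oa_{n+1})$ with $m = \Theta(n)$ players per coordinate, giving $N = (n+1)m = \Theta(n^2)$ total players and $n = \Theta(\sqrt{N})$. The target lower bound $2^{\Omega(\sqrt{N})}$ is thus of order $2^{\Omega(n)}$, matching the EOL lower bound $\Omega(\sqrt{T}) = QC(\eol(\Pyr(T)))$.

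By Proposition~\ref{theo:binary}, the unique Nash equilibrium of the constructed game is the pure profile $a$ with $\oa = l'(T)$, i.e.\ the endpoint of the embedded line. Inverting the Gray encoding on the first $n$ coordinates recovers the endpoint of the Pyramid line, so any Nash-finding algorithm also solves $\eol(\Pyr(T))$. To relate the query counts, I observe that by the locality property of $\phi$ (Section~\ref{sec:hpv}), the value of $\phi$ at any integer point $x \in G_{n+1}$ is determined by $O(1)$ local bits at the single big-cube corner $r(x) \in \{0,\gs\}^{n+1}$, each obtainable from $O(1)$ queries to the EOL oracle. Hence any payoff query $u(a) = -\ophi(\oa)$ can in principle be simulated by querying the local information at each integer corner of the unit cube around $\oa$.

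The main obstacle I anticipate is controlling the per-query simulation overhead. The multilinear-extension formula~\eqref{eq:multiLin} for $\ophi(\oa)$ sums over all $2^{n+1}$ integer corners of the unit cube containing $\oa$, so a naive simulation could touch exponentially many integer points per payoff query, which would collapse the resulting bound. Bringing the overhead down to $\poly(N)$ is the crux of the proof: one natural route is to exploit the fact that the rounding $r(x_i)$ changes only when $x_i$ crosses the midplane $x_i = \hgs$, so generically the $2^{n+1}$ integer corners of the relevant unit cube all map to the same big-cube corner, and only a limited set of ``boundary'' profiles require separate handling. Once the polynomial per-query overhead is established, combining with $QC(\eol(\Pyr(T))) = \Omega(\sqrt{T})$ gives a Nash query lower bound of $2^{\Omega(n)}/\poly(N) = 2^{\Omega(\sqrt{N})}$, as required.
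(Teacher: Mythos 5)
Your approach matches the paper's: instantiate the replication game of Proposition~\ref{theo:binary} on the $n$-dimensional embedding of $\eol(\Pyr(T))$, invoke the uniqueness of its Nash equilibrium at the end-of-line, count $(n+1)m = \Theta(n^2)$ players, and combine with $QC(\eol(\Pyr(T))) = \Omega(\sqrt{T}) = 2^{\Omega(n)}$. The paper's own proof of this corollary is a single sentence that simply invokes ``locality,'' so you are being more careful than the source by flagging the per-query simulation cost.

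The concern you raise is real, and your sketch of its resolution is on the right track but stops short of closing the gap, which you explicitly acknowledge (``once the polynomial per-query overhead is established\ldots''). Here is how to close it; the overhead is in fact $O(1)$, not merely $\poly(N)$. Fix a game query $\oa$ and let $c = \lfloor \oa \rfloor$. The integer corners $z$ entering the multilinear extension~\eqref{eq:multiLin} satisfy $z_i \in \{c_i, c_i+1\}$, and the rounding $r(z)$ differs from $r(\oa)$ only in coordinates where $c_i = \hgs - 1 = 14$; call these the flip coordinates. The crucial geometric observation is that any integer $z$ with two or more coordinates in $\{14,15\}$ is automatically $13$-far from $L'$: every edge $E(t)$ is axis-parallel, so all but one of its coordinates lie in $\{0,\gs\}$, and hence any $z$ with two coordinates in $\{14,15\}$ has $d_\infty(z,E(t)) \geq 14$ for every $t$. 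For such $z$ the far-formula~\eqref{eq:far} gives $\phi(z)$ with no oracle access at all. Consequently, if there are $\geq 2$ flip coordinates, every $z$ in the unit cube is far and the game query needs zero $\eol$ queries; if there is exactly one flip coordinate, $r(z)$ takes at most two values, so two $\eol$ queries suffice; if there are none, one $\eol$ query suffices. (The exponential sum over $2^{n+1}$ corners is pure computation, which the query model does not charge for.) With this $O(1)$ overhead your reduction yields the stated $2^{\Omega(\sqrt{n})}$ bound in the number of players, and the proof is complete.
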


\begin{proof}
By the locality of $\phi$, we can reduce the End-of-Line problem over the $n$-dimensional hypercube to the mixed Nash equilibrium problem with $nm=\Theta(n^2)$ players using the game of Proposition \ref{theo:binary}.
\end{proof}

\subsection{High Degree Imitation}\label{sec:hd}
The second idea is to define a game where players collectively choose \emph{two} points $a$ and $b$ (rather than one), and are incentivized to choose close-by points by
a \emph{high degree imitation} utility. The advantage of this approach is that it will yield a tight $2^{\Omega(n)}$ bound. The disadvantage of this approach is that it proves a lower bound for games with constant but large ($\gsp$) number of actions.

Given a line $L$ over the hypercube we define a $2(n+1)$-player $\gsp$-action potential game.

Every player $i\in [n+1]$ in Alice's team chooses a coordinate $a_i \in [\gs]_0$. The profile of Alice's team is denoted by $a\in G_{n+1}$. 
Every player $i\in [n+1]$ in Bob's team chooses a coordinate $b_i \in [\gs]_0$. The profile of Bob's team is denoted by $b\in G_{n+1}$.

All players in Alice's team have identical utility that is given by 
\begin{align*}
u_A(a,b)=-\sum_{i=1}^{n+1} (a_i-b_i)^{2n}-2\phi(a)
\end{align*}
All players in Bob's team have identical utility that is given by 
\begin{align*}
u_B(a,b)=-\sum_{i=1}^{n+1} (a_i-b_i)^{2n}-2\phi(b)
\end{align*}
We call the first term in the utilities of the players \emph{the imitation loss}, and the second term \emph{the potential loss}.
It is easy to verify that the game $(u_A,u_B)$ is a potential game whose potential is given by $\varphi(a,b)=-\sum_{i=1}^{n+1} (a_i-b_i)^{2n} -2\phi(a)-2\phi(b)$.

The following lemma states that all points that are played in a mixed equilibrium with positive probability are located close to each other (within a distance $4$ from each other in the $d_\infty$ distance).

\begin{lemma}\label{lem:cube-bound}
Let $(\alpha,\beta)$ be a mixed Nash equilibrium of the game potential game $(u_A,u_B)$, then there exists a small cube $C(\uv,\ov)$ (with edge-length 4) such that $\supp(\alpha)\cup \supp(\beta)\subset C(\uv,\ov)$.
\end{lemma}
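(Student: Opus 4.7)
The plan is to exploit the large gap between the bounded range of the potential, $|\phi|\leq M = O(T) = O(2^{n/2})$, and the exponentially fast growth of the degree-$2n$ imitation loss $(a_i-b_i)^{2n}$. I will show, coordinate by coordinate, that the projections of $\supp(\alpha)\cup\supp(\beta)$ onto coordinate $i\in[n+1]$ lie in an interval of length at most $4$, which is equivalent to the desired small-cube containment.

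Fix $i$ and define the single-coordinate imitation losses $g_i(x):=\Ex_\beta[(x-b_i)^{2n}]$ and $\tilde g_i(y):=\Ex_\alpha[(a_i-y)^{2n}]$. For every $a\in\supp(\alpha)$ and every integer $w\in[\gs]_0$, the fact that the deviation $a\to(w,a_{-i})$ is weakly worse yields the \emph{best-response inequality}
\[
g_i(a_i)-g_i(w)\ \leq\ 2\phi(w,a_{-i})-2\phi(a)\ \leq\ 2M,
\]
and analogously for Bob with $\tilde g_i$. Writing $\alpha^\pm_i,\beta^\pm_i$ for the coordinate-$i$ extrema of the two supports and $d:=\alpha^+_i-\alpha^-_i$, the argument reduces to (I) $d\leq 2$ (and $\beta^+_i-\beta^-_i\leq 2$), and (II) $|\alpha^\pm_i-\beta^\pm_i|\leq 1$.

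For (I), pick $a^\pm\in\supp(\alpha)$ attaining $\alpha^\pm_i$. If $d$ is even, let $m:=(\alpha^-_i+\alpha^+_i)/2\in\mathbb{Z}$; the best-response inequality applied to both $a^\pm$ deviating to $(m,\cdot)$ gives $2g_i(m)\geq g_i(\alpha^-_i)+g_i(\alpha^+_i)-4M$, while the pointwise binomial identity $(y-d/2)^{2n}+(y+d/2)^{2n}-2y^{2n}\geq 2(d/2)^{2n}$ (only even-degree terms survive in the expansion) integrated against $\beta$ yields the opposite bound $g_i(\alpha^-_i)+g_i(\alpha^+_i)-2g_i(m)\geq 2(d/2)^{2n}$. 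For $d\geq 4$ this forces $2\cdot 4^n\leq 4M=O(2^{n/2})$, contradicting $n$ large. If $d$ is odd, the true midpoint is non-integer, so I use the two nearest integers $m_1:=\alpha^-_i+(d-1)/2$ and $m_2:=m_1+1$; convexity gives $g_i(m_1)+g_i(m_2)\leq g_i(\alpha^-_i)+g_i(\alpha^+_i)$, while the two best-response inequalities give $g_i(m_1)+g_i(m_2)\geq g_i(\alpha^-_i)+g_i(\alpha^+_i)-4M$. A direct finite-difference computation (e.g.\ $\Delta^2 f(u)+\Delta^2 f(u+1)\geq 4^n$ for $f(x)=(x-b_i)^{2n}$, with the minimum over integer $u$ attained at $u=-2$ giving $(4^n-2)+2=4^n$) shows the left-hand gap is at least $4^n$, yielding the same contradiction. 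Hence $d\leq 2$, and symmetrically $\beta^+_i-\beta^-_i\leq 2$.

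For (II), suppose some $a\in\supp(\alpha)$ has $a_i\geq\beta^+_i+2$. For every $b_i$ in Bob's support, $b_i\leq\beta^+_i\leq a_i-2$, so $(a_i-b_i)^{2n}-(a_i-1-b_i)^{2n}\geq 2^{2n}-1^{2n}=4^n-1$, giving $g_i(a_i)-g_i(a_i-1)\geq 4^n-1\gg 2M$ and contradicting the best-response inequality for the deviation $a\to(a_i-1,a_{-i})$. Thus $\alpha^+_i\leq\beta^+_i+1$, and the three symmetric statements hold. Combining (I) and (II),
\[
\max(\alpha^+_i,\beta^+_i)-\min(\alpha^-_i,\beta^-_i)\ \leq\ (\alpha^+_i+1)-(\alpha^-_i-1)\ =\ d+2\ \leq\ 4,
\]
for every $i$, which is exactly the desired small-cube containment. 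The main obstacle is the quantitative strict convexity in (I): the clean symmetric midpoint identity works only for even $d$, and the odd case requires the two-point finite-difference bound above; everything else amounts to comparing $4^n$ against $M=O(2^{n/2})$.
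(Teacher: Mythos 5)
Your proof is correct in spirit and takes a genuinely different, more elementary route than the paper's. The paper proves an auxiliary envelope lemma (its Lemma~\ref{lem:hd}: construct the convex function $f(x)=\max_k y_k^{1/2n}|x-k|$ and bound its integer values away from the best-3 interval), then uses that to confine Alice's support, and finally a rougher argument confines Bob. You instead work coordinate-by-coordinate with two direct convexity facts: (I) the support extremes of each side are close by comparing to the integer midpoint via the best-response inequality, and (II) Alice's and Bob's extremes are close by a one-step deviation. Your part (II) is fine, and your even-$d$ argument in (I) is fine (the pointwise binomial inequality, integrated against $\beta_i$, against $4M=O(2^{n/2})$). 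What this buys you is avoiding the paper's appendix; what you lose is that the envelope lemma handles \emph{any} offset from the optimum uniformly, whereas your comparison-to-midpoint argument has to be checked separately by parity and distance.

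The gap is in the odd case of part (I). Your finite-difference identity $\Delta^2 f(u)+\Delta^2 f(u+1)=f(u)-f(u+1)-f(u+2)+f(u+3)$ is exactly the ``gap'' quantity only when $d=3$. For odd $d\geq 5$, the quantity you need to bound below is
\[
H_d(v):=v^{2n}+(v+d)^{2n}-\bigl(v+\tfrac{d-1}{2}\bigr)^{2n}-\bigl(v+\tfrac{d+1}{2}\bigr)^{2n},
\]
which is not a sum of two consecutive second differences, so the displayed computation does not apply as written; as stated, your case analysis rules out $d\in\{3,4,6,8,\dots\}$ but leaves $d\in\{5,7,\dots\}$ unaddressed. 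The fix is short: for $a+b=c$ one can write $f(v)-f(v+a)-f(v+b)+f(v+c)$ as a non-negative integer combination $\sum_k w_k\,\Delta^2 f(v+k)$ with each $w_k\geq 0$ and the two ``central'' weights at least $1$, so non-negativity of $\Delta^2 f$ (convexity of $x^{2n}$) plus your $d=3$ bound on the central pair gives $H_d(v)\geq 4^n$ for every odd $d\geq 3$; alternatively, note that $H_d$ is itself of the same $a+b=c$ form one degree lower, so it is convex and symmetric about $v=-d/2$, hence its integer minimum is at $v=-(d\pm 1)/2$, where it equals $\bigl(\tfrac{d-1}{2}\bigr)^{2n}+\bigl(\tfrac{d+1}{2}\bigr)^{2n}-1\geq 4^n$. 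You should add one of these to close the case $d$ odd, $d\geq 5$. One further cosmetic point: the best-response inequality should be stated with the potential term averaged over $a_{-i}\sim\alpha_{-i}$ (the deviating player faces the mixed profile of her teammates), though the resulting bound $\leq 2M$ is unchanged since $0\leq\phi\leq M$.
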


The key ingredient for proving this lemma is the following property of high-degree loss functions.

\begin{lemma}\label{lem:hd}
Let $\beta\in \Delta([29]_0)$ be a distribution over the integers in $[29]_0$. Let $v^*=\min_{a\in [29]_0} \Ex_{b\sim \beta} (a-b)^{2n}$ be the minimal expected loss that can obtained be choosing an integer in $[29]_0$ against the distribution $\beta$. There exist an integer segment $[c,c+2]$ for $c\in \Z$ such that for every $a\notin [c,c+2]$, $a\in \Z$ we have $ \Ex_{b\sim \beta} (a-b)^{2n}\geq v^* + 3^n$.
\end{lemma}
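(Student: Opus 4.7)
The plan is to prove the lemma by showing that the near-optimal set $S := \{a \in \Z : f(a) < v^* + 3^n\}$, where $f(a) := \Ex_{b \sim \beta}[(a-b)^{2n}]$, has at most three elements. Since $a \mapsto (a-b)^{2n}$ is convex for each fixed $b$, the function $f$ is convex on $\Real$; hence $S$ is automatically a contiguous interval of integers (if $a_1 < a < a_2$ are integers with $a_1, a_2 \in S$, then $f(a) \leq \max\{f(a_1),f(a_2)\} < v^* + 3^n$). Once $|S| \leq 3$ is established, taking $c := \min S$ (or any $c$ if $S = \emptyset$) yields $S \subseteq [c, c+2]$, proving the claim.

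The heart of the argument is a pointwise inequality applied $b$ by $b$. I claim that for every integer $y$,
\begin{equation}\label{eq:key-prop}
y^{2n} + (y+3)^{2n} - (y+1)^{2n} - (y+2)^{2n} \geq 4^n.
\end{equation}
A binomial expansion around $y$ makes the coefficients of $y^{2n}$ and $y^{2n-1}$ vanish (because $1-1-1+1 = 0$ and $0-1-2+3 = 0$), so the left-hand side equals $\sum_{j=2}^{2n}\binom{2n}{j}(3^j - 2^j - 1)\,y^{2n-j}$, whose coefficients are strictly positive. For $y \geq 0$ this sum is bounded below by its $j=2n$ term, $9^n - 4^n - 1$, which exceeds $4^n$ once $n \geq 2$. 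The expression \eqref{eq:key-prop} is invariant under $y \mapsto -y-3$ (a direct substitution check), so the only remaining integer cases are $y \in \{-1,-2\}$; a one-line evaluation gives exactly $4^n$ in both. Taking expectation of~\eqref{eq:key-prop} with $y := a-b$ over $b \sim \beta$ lifts it to
\[
f(a) + f(a+3) - f(a+1) - f(a+2) \geq 4^n \qquad \text{for every integer } a.
\]

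Now suppose, toward a contradiction, that $|S| \geq 4$, so some four consecutive integers $a, a+1, a+2, a+3$ all lie in $S$. Then $f(a), f(a+3) < v^* + 3^n$ and $f(a+1), f(a+2) \geq v^*$ by definition of $v^*$, so $f(a) + f(a+3) - f(a+1) - f(a+2) < 2\cdot 3^n$. Combined with the inequality above, this forces $4^n < 2\cdot 3^n$, i.e.\ $(4/3)^n < 2$, which fails once $n \geq 3$. Hence $|S| \leq 3$ and the lemma follows from the contiguity observation.

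The main obstacle I anticipate is proving~\eqref{eq:key-prop} with a constant tight enough to beat $2\cdot 3^n$. The borderline integer inputs $y \in \{-1,-2\}$---which correspond to $\beta$ concentrating at interior points of the length-three window---achieve equality, so no slack remains on the right-hand side. Everything else in the proof (convexity, linearity of expectation, and the final contradiction) is routine, provided $n$ is large enough that $(4/3)^n > 2$; this is compatible with the paper's asymptotic regime, where $n$ tends to infinity.
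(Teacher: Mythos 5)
Your proof is correct and takes a genuinely different route from the paper's. The paper substitutes $z_k = \sqrt[2n]{\beta(k)}$ and studies the piecewise-linear auxiliary function $\max_k z_k|x-k|$; it shows by a small case analysis that every integer outside the three best ones raises this function to at least $1.8$ and to at least $\tfrac{1}{30}$ above its minimum, and then sandwiches the true expected loss between the $2n$th power of that auxiliary function and $30$ times it. You instead work directly with $f(a)=\Ex_{b\sim\beta}[(a-b)^{2n}]$, prove the pointwise fourth-order difference bound $y^{2n}+(y+3)^{2n}-(y+1)^{2n}-(y+2)^{2n}\geq 4^n$ for all integers $y$, lift it by linearity to $f(a)+f(a+3)-f(a+1)-f(a+2)\geq 4^n$, and conclude by contradiction with the convexity of $f$. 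Your route is more elementary, avoids the lossy $30$-fold sandwich, and already works for $n\geq 3$, whereas the paper's closing estimate $1.8^{2n}-30(1.8-\tfrac{1}{30})^{2n}>3^n$ needs $n$ on the order of one hundred --- your constants are genuinely sharper. One small elision: you invoke ``$f(a+1),f(a+2)\geq v^*$ by definition of $v^*$'', but $v^*$ is defined as a minimum over $[29]_0$, not over $\Z$. The fix is one line: since $\beta$ is supported on $[29]_0$ and $f$ is convex, its real minimizer lies in $[0,29]$, so the integer minimum of $f$ over $\Z$ is attained in $[29]_0$ and equals $v^*$.
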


Lemma~\ref{lem:hd} states that by choosing an integer out of the range of some ``best three consecutive integers" the high degree imitation loss increases by a large exponential term of at least $3^n$. The proof of this lemma is relegated to Appendix \ref{ap:hdlemma}. The challenging part in proving this lemma is for the case where the distribution $\beta$ admits exponentially small weights on some integers. Below we show how to utilize this lemma to prove Lemma \ref{lem:cube-bound}.

\begin{proof}[Proof of Lemma \ref{lem:cube-bound}]
We first show that $\supp(\alpha)$ is concentrated in a cube of edge length 2.

Let $\beta_i\in \Delta([29]_0)$ be the mixed action of player $i$ in Bob's team. We argue that the support of player $i$ in Alice's team is contained in $[c_i,c_i+2]$, when $[c_i,c_i+2]$ is the segment from Lemma \ref{lem:hd}.

Denote by $M_{-i}:=\Ex_{a_{-i}\sim \alpha_{-i}, b_{-i}\sim \beta_{-i}} \sum_{j\neq i} (a_j-b_j)^{2n}$ the imitation loss that is caused by players that are responsible for coordinates other than $i$. This term is not affected by the choice of $a_i$. 
Let $a_i^*\in [\gs]_0$ be the integer action that minimizes expected imitation loss with respect to $\beta_i$ (and ignores the potential loss) in the $i$th coordinate. The action $a_i^*$ yields a utility of $-M_{-i}-v^*-O(2^n)$. By Lemma \ref{lem:hd} any action $a_i \notin [c_i,c_i+2]$ yields a utility of at most $-M_{-i}-v^*-3^n$. Hence, $a^*_i$ is strictly better than $a_i$, and in particular $a_i$ does not belong to the support of an equilibrium.

Now we consider player $i$ in Bob's team. We recall that $M_{-i}$ denotes the imitation loss of players other than $i$ (the imitation loss is identical for Alice and for Bob). 
By playing $b_i=c_i+1$ Bob guarantees a loss of at most $-M_{-i}-1-\Theta(2^n)$ because the imitation loss is at most 1. At least one of the actions $a_i=c_i,c_i+1,c_i+2$ has a weight of at least $\frac{1}{3}$. Without loss of generality this action is $a_i=c_i$. By playing an action $b_i\notin [c_i-2,c_i+2]$ Bob's loss will be at least $-M_{-i}-\frac{1}{3}2^{2n} $ which is strictly worse than $-M_{-i}-1-\Theta(2^n)$. Therefore, $\supp(\alpha_i)\cup \supp(\beta_i)\subset [c_i-2,c_i+2]$.
\end{proof}

\begin{proposition}\label{theo:const-actions}
The potential game $(u_A,u_B)$ has a unique Nash equilibrium (including mixed ones). This is the pure Nash equilibrium where both players choose the end-of-line vertex $a=b=l'(T)$.   
\end{proposition}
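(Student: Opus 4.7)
My plan is to establish existence and uniqueness separately.

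For existence, I would directly verify that $a=b=l'(T)$ is a Nash equilibrium. At this profile the imitation loss is $0$ and the on-line formula \eqref{eq:on} gives $\phi(l'(T))=\thgs(T-T)+d_1(l'(T),l'(T))=0$, which is also the global minimum of $\phi$ (inspecting \eqref{eq:on}--\eqref{eq:far}, every case evaluates to a nonnegative integer). Any unilateral deviation by a single coordinate player moves to a neighbor of $l'(T)$, which strictly increases the imitation loss by exactly $1$ and cannot decrease $\phi$ below $0$, so the deviator strictly loses utility.

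For uniqueness, I would take an arbitrary mixed Nash equilibrium $(\alpha,\beta)$ and apply Lemma~\ref{lem:cube-bound} to obtain a small cube $C(\underline{v},\overline{v})$ containing $\supp(\alpha)\cup\supp(\beta)$. Assume for contradiction that $C(\underline{v},\overline{v})\neq\{l'(T)\}$; then Proposition~\ref{pro:dom-dir} yields a dominating direction, which I take to be $+e_k$ without loss of generality. The goal is to exhibit an improving deviation for either Alice's or Bob's $k$-th team member, contradicting the Nash property.

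The natural construction is the pair of ``shift-up'' deviations $\alpha_k\mapsto\alpha_k+1$ and $\beta_k\mapsto\beta_k+1$, each replacing a team's $k$-th player's mixed strategy by its $+1$ translate (feasibility of $+e_k$ guarantees these shifts are well-defined, possibly up to a boundary edge case handled separately). Each shift's change in expected utility splits into (i) a potential part, which by the dominating direction and integrality of $\phi$ contributes at least $+2$; and (ii) an imitation part. Summing the two imitation parts yields $-\mathbb{E}\bigl[(d+1)^{2n}+(d-1)^{2n}-2d^{2n}\bigr]$ with $d:=a_k-b_k$; a binomial expansion shows this equals $-2-2n(2n-1)\mathbb{E}[d^2]-\cdots$, and by convexity it is at most $-2$. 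The sum of the two deviators' expected utility changes is therefore at least $4-\mathbb{E}\bigl[(d+1)^{2n}+(d-1)^{2n}-2d^{2n}\bigr]$; whenever this is strictly positive, at least one of the shifts is a strict improvement, contradicting Nash.

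The main obstacle is establishing this strict positivity for general $n$. For $n=1$ the second difference equals exactly $2$ regardless of $d$, so the combined change is at least $2>0$ and the contradiction is immediate. For larger $n$ the higher-order terms can grow with $|d|$, so the paired-shift bound by itself is insufficient. To close the gap I would combine it with the indifference conditions inside each team's narrow support (the proof of Lemma~\ref{lem:cube-bound} gives $\supp(\alpha_k)\subseteq[c_k,c_k+2]$ and $\supp(\beta_k)\subseteq[c_k-2,c_k+2]$): any two actions in a team's support must give that team the same expected utility, and the resulting equalities --- together with the integrality of $\phi$-differences in direction $+e_k$ --- force the joint law of $(a_k,b_k)$ to concentrate on $d=0$, pushing the second difference back to its convexity minimum of $2$ and completing the contradiction.
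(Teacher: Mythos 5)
Your existence argument and the setup for uniqueness (invoking Lemma~\ref{lem:cube-bound} and Proposition~\ref{pro:dom-dir}) are correct, but the paired-shift argument at the heart of the uniqueness part has a genuine gap, and you correctly flag it yourself: for $n\geq 2$, the combined imitation loss $\mathbb{E}\bigl[(d+1)^{2n}+(d-1)^{2n}-2d^{2n}\bigr]$ can be exponentially larger than the $+4$ potential gain whenever $|d|$ is, say, $3$ or $4$ --- values that Lemma~\ref{lem:cube-bound} does not rule out. The proposed rescue --- arguing that indifference within each team's support forces the joint law of $(a_k,b_k)$ to concentrate on $d=0$ --- is not a proof: the small-cube containment only constrains $a_k,b_k$ to a window of width $4$, which is entirely compatible with a support where $|d|$ is often $2$ or $3$, and there is no obvious chain of indifference equalities that would collapse this. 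Symmetric shifts inherently fight the second difference of $t\mapsto t^{2n}$, which is exactly what the high-degree imitation term is designed to make explode.

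The paper's proof sidesteps this entirely by making a \emph{one-sided} argument. After shrinking the cube so every facet meets $\supp(\alpha)\cup\supp(\beta)$, and taking $+e_i$ dominating, one of the two players has support meeting the lower facet $\{x_i=\uv_i\}$; say it is Alice. Have Alice's coordinate-$i$ player deviate from $\uv_i$ to $\uv_i+1$. The crucial point is that \emph{all} of Bob's realizations satisfy $b_i\geq \uv_i$ (cube containment), so the deviation moves Alice's coordinate strictly toward, or at worst one step past, every realization of $b_i$: the worst-case change in $(a_i-b_i)^{2n}$ occurs at $b_i=\uv_i$, where it increases by exactly $1$, and for any $b_i>\uv_i$ the imitation term only improves. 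Meanwhile the dominating direction and integrality of $\phi$ give a gain of at least $2$ in the potential term, so the deviation is a strict improvement. The asymmetry of deviating inward from the boundary facet is what kills the exponential blow-up; without it, the second-difference approach is unrepairable for general $n$.
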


\begin{proof}
By Lemma \ref{lem:cube-bound}, let $C(\uv,\ov)$ be a small cube that contains $\supp(\alpha)\cup \supp(\beta)$. Without loss of generality we assume that every facet of $C(\uv,\ov)$ contains at least one point of $\supp(\alpha)\cup \supp(\beta)$, because otherwise we can reduce the cube size in this dimension.

By Proposition \ref{pro:dom-dir} let $\pm e_i$ be a dominating direction of $C(\uv,\ov)$. W.l.o.g., we assume this direction is $e_i$. One of the players has actions in his support that appear in the $x_i=\uv_i)$ facet of the cube. W.l.o.g., we assume that this player is Alice's team. Namely, $\supp(\alpha)\cap \{x\in C(\uv,\ov):x_i=\uv_i \}\neq \emptyset$.

We argue that player $i$ in Alice's team gets (strictly) better payoff by playing $a_i=\uv_i+1$ rather than $a_i=\uv_i$ (which will contradict  the fact that $a$ is in the support of an equilibrium).

Since the direction $e_i$ is dominating, and potential values are integers, the term $-2\phi(a)$ increases by at least 2.

Since Bob's actions are contained in $C(\uv,\ov)$, we know that \emph{all} actions in the support of Bob's $i$-player satisfy $b_i\geq \uv_i$. 
The worst case with respect to the term $(a_i-b_i)^{4n}$ is the case where $b_i=\uv_i$ with probability 1 (otherwise, Alice's action $\uv_i+1$ "gets closer" to Bob's actions). In such a case the term $(a_i-b_i)^{4n}$ increases by exactly 1. Namely, Alice gains at least 2 in the potential term, and loses at most 1 in the imitation term.
\end{proof}

\begin{corollary}\label{cor:const}
The query complexity of finding a Nash equilibrium (possibly mixed) in $n$-player $30$-action potential games is at least $2^{\Omega(n)}$.
\end{corollary}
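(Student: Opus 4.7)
The plan is to reduce from $\eol(\Pyr(T))$, whose query complexity is $\Theta(\sqrt{T})$, to the task of finding a (possibly mixed) Nash equilibrium in the $2(d{+}1)$-player, $\gsp$-action potential game $(u_A,u_B)$ of Section~\ref{sec:hd}. I use $d$ for the dimension of $G_{d+1}$ and reserve $n$ for the player count, with $n := 2(d{+}1)$. Given a pyramid oracle for $\eol(\Pyr(T))$ with $T = 2^{d/2}$, I would apply the two-step embedding of Section~\ref{sec:hpv}---Gray-code the two pyramid coordinates into $d$ bits, then stretch by a factor of $\gs$ and prepend an auxiliary coordinate---to define the line $L'\subset G_{d+1}$ and the associated potential $\phi$. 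By Proposition~\ref{pro:dom-dir}, every small cube other than $C(l'(T),l'(T))$ admits a dominating direction, so by Proposition~\ref{theo:const-actions} the game $(u_A,u_B)$ built on this $\phi$ has a unique Nash equilibrium, namely the pure profile $a=b=l'(T)$. Any algorithm that returns some Nash equilibrium therefore directly recovers the end of the embedded line.

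It then remains to check the query efficiency of the reduction. Because the action sets are $[\gs]_0$, utility evaluations only access $\phi$ at integer profiles, so no multilinear extension is invoked. By the locality property of $\phi$, the value $\phi(x)$ for integer $x \in G_{d+1}$ is determined by the local information about $L'$ at the rounded corner $r(x)\in\{0,\gs\}^{d+1}$, and this local information $(l'(t{-}1),l'(t),l'(t{+}1))$ is in turn obtained from a single pyramid-oracle query at the unique pyramid vertex corresponding to $r(x)$ under the Gray-code embedding (the oracle returns exactly whether the line passes through, along with the successor and predecessor directions). Hence each utility query costs $O(1)$ queries to $\eol(\Pyr(T))$, so the $\Omega(\sqrt{T}) = 2^{\Omega(d)} = 2^{\Omega(n)}$ lower bound transfers to the Nash equilibrium problem.

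The main---and only mildly subtle---obstacle is the locality-to-query translation in the second paragraph; all other pieces follow immediately from Propositions~\ref{pro:dom-dir} and~\ref{theo:const-actions}.
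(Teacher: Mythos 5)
Your proposal is correct and takes essentially the same route as the paper, whose proof is a single sentence invoking the locality of $\phi$ and Proposition~\ref{theo:const-actions}. You correctly supply the bookkeeping the paper leaves implicit: payoff queries at integer profiles avoid the multilinear extension, rounding to a corner plus Gray-decoding reduces each $\phi$-evaluation to one pyramid-oracle query, and the parameter choice $T=2^{\Theta(n)}$ turns $\Omega(\sqrt{T})$ into $2^{\Omega(n)}$.
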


\begin{proof}
By the locality of $\phi$, we can reduce the End-of-Line problem over the hypercube to the mixed Nash equilibrium problem using the game of Proposition \ref{theo:const-actions}.
\end{proof}

\section{Lifting to Communication}\label{sec:cc}

We first define formally the communication complexity problems, and state formally our main results.

\paragraph{Two-player communication} The communication problem of (possibly mixed) Nash equilibrium $\textsc{2-Nash-Potential}(N)$ in two-player potential games is a promise problem with private inputs $u_A$ for Alice and $u_B$ for Bob, when $|A|=|B|=N$ and it is promised that the pair $(u_A,u_B)$ is a potential game. The output is a mixed Nash equilibrium $(\alpha,\beta)\in \Delta(A)\times \Delta(B)$.

\paragraph{$n$-player communication} The communication problem of (possibly mixed) Nash equilibrium $\textsc{Multiplayer-Nash-Potential}(n)$ in $n$-player binary action potential games is a promise problem with private inputs $(u_1,...,u_{n/2})$ for Alice and $(u_{n/2+1},...,u_n)$ for Bob, where $u_i:\{0,1\}^n \ra \Real$ is the utility of player $i$. It is promised that the tuple $(u_1,u_2,...,u_n)$ is a potential game. The output is a mixed Nash equilibrium $(x_1,...,x_n)$ where $x_i \in \Delta(\{0,1\})$.

\begin{remark}[$n$-party vs two-party communication]
In $n$-player games it is also natural to study the $n$-party communication problem where each player holds as a private input his own utility function. This problem is clearly harder than the described above $\textsc{Multiplayer-Nash-Potential}(n)$ where the input of the utility functions is distributed only between two parties. Since our result is negative it obviously applies also to the harder $n$-party communication problem.
\end{remark}

Our main results show hardness of $\textsc{2-Nash-Potential}(N)$ and of $\textsc{Multiplayer-Nash-Potential}(n)$.

\begin{theorem}\label{theo:2p}
There exists a constant $c>0$ such that $CC(\textsc{2-Nash-Potential}(N))\geq N^c$.
\end{theorem}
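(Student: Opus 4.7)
My plan is to reduce $\eol(\Pyr(T))$ to $\textsc{2-Nash-Potential}(N)$ with $N = \poly(T)$, so that Theorem~\ref{thoe:ccEOL} yields $CC(\textsc{2-Nash-Potential}(N)) \geq \Omega(\sqrt{T}) = \Omega(N^{c})$ for a constant $c>0$. The construction augments the high-degree imitation game of Section~\ref{sec:hd} with a \emph{local information reporting gadget} that forces the two parties to jointly compute $\phi$ by combining their private inputs in the spirit of the Goos--Pitassi style lifting underlying Theorem~\ref{thoe:ccEOL}.

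\textbf{The game.} Set $n:=2\log_{2}T$. Alice's action is $(a,r_{A})\in G_{n+1}\times\{0,1\}^{9}$ and Bob's is $(b,r_{B})\in G_{n+1}\times[3]^{3}$, so $N=30^{n+1}\cdot 2^{O(1)}$. Here $r_{A}$ is Alice's \emph{reported} triple of length-$3$ arrays at the corner $r(a)$, and $r_{B}$ is Bob's \emph{reported} triple of indices at $r(b)$. Write $\tilde\phi(x,I,J)$ for the value of $\phi(x)$ that results from locality once the local info at $r(x)$ is computed by combining the claimed arrays $I$ with the claimed indices $J$ through the lifting gadget, and let $I_{A}(v),I_{B}(v)$ denote the true static private inputs of Alice and Bob at the corner $v$. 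With $\kappa:=2$ and $\gamma:=(2.5)^{n}$, define
\begin{align*}
u_{A} &= -\sum_{i=1}^{n+1}(a_{i}-b_{i})^{2n} + \gamma\cdot\1\{r_{A}=I_{A}(r(a))\} - \kappa\cdot\tilde\phi(a,r_{A},I_{B}(r(a))),\\
u_{B} &= -\sum_{i=1}^{n+1}(a_{i}-b_{i})^{2n} + \gamma\cdot\1\{r_{B}=I_{B}(r(b))\} - \kappa\cdot\tilde\phi(b,I_{A}(r(b)),r_{B}).
\end{align*}
The critical design choice is that $u_{B}$ has no dependence on $r_{A}$ (and vice versa), because each player's $\tilde\phi$-term combines their \emph{own} report with the \emph{static} private input of the other party. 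Consequently the obvious sum $\psi$ of all additive terms is a valid potential for the game.

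\textbf{Equilibrium analysis.} Fix any mixed Nash equilibrium $(\alpha,\beta)$. Since $\gamma+\kappa\max\phi=O((2.5)^{n})\ll 3^{n}$, the high-degree imitation argument (Lemmas~\ref{lem:hd} and~\ref{lem:cube-bound}) goes through essentially verbatim and localizes the $G_{n+1}$-components of $\supp(\alpha)\cup\supp(\beta)$ into a common small cube $C(\uv,\ov)$ of edge length $\leq 4$. On support, any untruthful pure action $(a,r_{A})$ is strictly dominated by $(a,I_{A}(r(a)))$: the truthfulness bonus jumps by $\gamma$, while $\kappa\tilde\phi$ shifts by at most $200\kappa\cdot 2^{n}\ll\gamma$. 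Hence on support $\tilde\phi(a,r_{A},I_{B}(r(a)))=\phi(a)$ and similarly for Bob. Finally, if $C(\uv,\ov)\neq\{l'(T)\}$, Proposition~\ref{pro:dom-dir} yields a dominating direction, say $+e_{i}$, and the argument of Proposition~\ref{theo:const-actions} gives some player -- w.l.o.g.~Alice -- a pure action $(a,r_{A})$ in support with $a_{i}=\uv_{i}$. The deviation to $(a+e_{i},I_{A}(r(a+e_{i})))$ preserves truthfulness, loses at most $1$ in expectation on the imitation term (since $b_{i}\geq\uv_{i}$ on Bob's support), and gains $\kappa\cdot(\phi(a)-\phi(a+e_{i}))\geq 2$ on the $\tilde\phi$ term; this contradicts equilibrium. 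Thus every Nash equilibrium is pure and supported at $l'(T)$, from which $l_{P}(T)$ is immediately read off, giving $CC(\textsc{2-Nash-Potential}(N))\geq\Omega(\sqrt{T})=\Omega(N^{c})$ with $c=1/(4\log_{2}30)$.

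\textbf{Main obstacle.} The principal delicacy is reconciling the potential-game property with the need to hide $\phi$ across the two parties. A symmetric $\tilde\phi$ depending on \emph{both} players' reports would break the potential property, because one player's report would then show up in the other's utility in an incompatible way. Our resolution confines all cross-party dependence to the symmetric imitation term by routing each player's $\phi$-dependence through their own report together with the opponent's static input. A secondary subtlety is that a deviation may cross a rounding boundary $a_{i}=\hgs$, so the deviator must simultaneously refresh $r_{A}$ to remain truthful at the new corner $r(a+e_{i})$; this is always feasible since each party knows its own private input at every corner.
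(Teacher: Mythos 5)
Your construction has a fundamental flaw that invalidates the communication reduction, even though the game it defines \emph{is} a valid potential game. The term $\tilde\phi(a,r_A,I_B(r(a)))$ in $u_A$ depends on Bob's \emph{static private input} $I_B$. But in $\textsc{2-Nash-Potential}(N)$, Alice's private input is the function $u_A$ itself; if $u_A$ encodes $I_B(v)$ for every corner $v$ (it does---Alice can extract $I_B(r(a))$ by varying $a$ and $r_A$ and reading off $u_A$), then Alice already knows the entire $\eol$ instance and can locate the end of line (hence the equilibrium) with zero communication. The reduction from $\eol(\Pyr(T))$ is therefore vacuous: it does not distribute the input across the two parties, which is precisely what Theorem~\ref{thoe:ccEOL} requires. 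For a valid reduction, $u_A$ must be computable from $I_A$ alone and $u_B$ from $I_B$ alone.

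Your "critical design choice" rests on a misconception. You claim that letting $\tilde\phi$ depend on \emph{both} players' reports would break the potential property, but this is false: a report-based potential $\phi_{r^A,r^B}(\cdot)$ that depends only on the \emph{actions} of both players (points and reports, not private inputs) can be packaged as an \emph{identical-interest} term $u^{po}_A(a,r^A,b,r^B)=2\phi_{r^A,r^B}(a)+2\phi_{r^A,r^B}(b)=u^{po}_B$, and by Fact~\ref{rem:pot} the sum of an identical-interest game and an opponent-independent game is still a potential game. That is exactly what the paper does: the identical-interest block $u^{im}+u^{po}$ is computable by either party from the action profile alone, while the opponent-independent truthfulness bonus $u^r_A(a,r^A)=2.5^n\1_{r^A=T_A(a)}$ is the \emph{only} place where $I_A$ enters, and it enters only Alice's own utility. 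That separation is what makes the reduction communication-valid. Once your $\tilde\phi$-term is replaced with the paper's identical-interest $\phi_{r^A,r^B}$, the rest of your equilibrium analysis (high-degree imitation localizes supports via Lemma~\ref{lem:hd}, truthful reporting dominates because $2.5^n\gg 2^n$, then follow a dominating direction from Proposition~\ref{pro:dom-dir}) is essentially the paper's argument. A secondary gap: your report space $\{0,1\}^{9}$ covers only the $tsp$ triple at a single corner, but computing $\phi$ near $r(a)$ via locality requires the triples at $r(a)$ \emph{and} its $(n+1)$ hypercube neighbors, so the reports should live in $\{0,1\}^{9(n+1)}$ and $[3]^{3(n+1)}$.
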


\begin{theorem}\label{theo:np}
$CC(\textsc{Multiplayer-Nash-Potential}(n))\geq 2^{\Omega(\sqrt{n})}$.
\end{theorem}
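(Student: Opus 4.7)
The plan is to reduce $\eol(\Pyr(T))$ from Theorem~\ref{thoe:ccEOL} to $\textsc{Multiplayer-Nash-Potential}(n)$, choosing $T$ so that after the embedding of Section~\ref{sec:hpv} (dimension $n'+1$ with $n' = 2\log_2 T$) and the replication of Section~\ref{sec:rep} (team size $m = \Theta(n')$) the resulting game has $n = \Theta((n')^2)$ binary-action players. Under this scaling $\sqrt{T} = 2^{\Theta(\sqrt{n})}$, so any protocol computing a Nash equilibrium of the game immediately solves $\eol(\Pyr(T))$ with comparable communication, giving the claimed $2^{\Omega(\sqrt{n})}$ bound. Alice's private input is $(t^A_v, s^A_v, p^A_v)_{v \in V_P}$ and Bob's is $(t^B_v, s^B_v, p^B_v)_{v \in V_P}$; at any corner $v \in \{0,\gs\}^{n'+1}$ the true local information $(t^A_v(t^B_v), s^A_v(s^B_v), p^A_v(p^B_v))$ needed by $\phi$ depends on both parties' inputs.

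The game augments the identical-interest replicated construction of Proposition~\ref{theo:binary} along the lines of Equation~\eqref{eq:informal-game-Alice}. Alice's players are partitioned into (i) $n'+1$ coordinate teams of $m$ binary-action players, whose averages $\oa_1, \ldots, \oa_{n'+1}$ define a point in $[0,\gs]^{n'+1}$, and (ii) two independent reporting teams $R^A_1, R^A_2$ whose profiles each publish a guess for Alice's private half of the local information at the almost-determined corner $r(\oa)$; Bob is partitioned symmetrically, yielding $\ob$ and reporting teams $R^B_1, R^B_2$. Utilities have three components, weighted $1 \gg \epsilon \gg \epsilon^2$: an identical-interest imitation loss that concentrates $\oa$ and $\ob$ in a common small cube; an opponent-independent truthfulness bonus of magnitude $\epsilon$ paid to each reporting team only when its report matches its own party's true local information at $r(\oa)$ (resp.\ $r(\ob)$); and a team-opponent-independent potential term $-\epsilon^2 \ophi(\oa,\ob; \text{reports})$ that evaluates the multilinear extension of Section~\ref{sec:rep} using the \emph{reported} rather than the true local information. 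Each component is a potential game by Fact~\ref{rem:pot}, hence so is the sum; moreover Alice's (Bob's) utility depends only on Alice's (Bob's) private input, so this is a legitimate two-party instance of $\textsc{Multiplayer-Nash-Potential}(n)$.

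The equilibrium analysis proceeds in three layers, from the largest scale down. First, the dominant imitation term, combined with the Hoeffding concentration used in the proof of Proposition~\ref{theo:binary}, forces $\oa$ and $\ob$ to concentrate in a common small cube $C(\uv,\ov)$, so $r(\oa) = r(\ob)$ holds almost surely at a single deterministic corner $v^*$. Second, at the middle scale $\epsilon$ each of the four reporting teams faces a strictly dominant incentive to publish its own party's true local information at $v^*$, since the potential term can perturb each team's utility by at most $\epsilon^2 \ll \epsilon$; hence all reports are truthful and the potential term reduces to $-\epsilon^2 \ophi(\oa, \ob)$ as in Section~\ref{sec:rep}. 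Third, the coordinate teams now play an identical-interest game whose best-response analysis is exactly that of Proposition~\ref{theo:binary}: by Proposition~\ref{pro:dom-dir}, unless $v^* = l'(T)$ the cube $C(\uv,\ov)$ admits a dominating direction, and the derivative bound of Equation~\eqref{eq:d} shows that a binary coordinate player in that direction strictly prefers to deviate, a contradiction.

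The step I expect to be the main obstacle, flagged in Section~\ref{sub:techniques}, is the subtle spurious-equilibrium risk that arises because coordinate and reporting players would otherwise need to deviate jointly: a single coordinate deviation can simultaneously increase the potential and move $r(\oa)$ to a corner whose true local information differs from the current reports, so the coordinate player suffers a loss of $\epsilon$ (truthfulness bonus) against a gain of only $\epsilon^2$ (potential). This is precisely why two reporting teams $R^A_1, R^A_2$ are needed rather than one: I will have the potential term use whichever team's report is currently truthful (e.g., via the minimum over the two implied values of $\ophi$), so that at any non-end-of-line profile there is always an improving unilateral deviation -- either a stale reporting team re-truthifies while the other team still earns its bonus, or a coordinate team advances along the dominating direction while the still-truthful team's report continues to support the potential accounting. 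Formalizing this alternation and tuning $\epsilon$ so that the three scales cleanly decouple across all deviation types is the delicate part of the argument and will be carried out in Section~\ref{sec:localinfo}.
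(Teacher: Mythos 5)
Your overall blueprint — replicated coordinate teams, a simple imitation loss, reported local information, and a three-tier priority weighting $1 \gg \epsilon \gg \epsilon^2$ — matches the paper's shape, and you correctly identify that two reporting teams per party are needed to escape the joint-deviation obstacle. But the mechanism you propose for combining the two reports does not actually resolve that obstacle. You have $R^A_1$ and $R^A_2$ both report about the \emph{same} reference corner $r(\oa)$, and then ``use whichever report is currently truthful (e.g., via the minimum over the two implied values of $\ophi$).'' The trouble is that with a shared reference geometry, the corner about which truth is judged changes at the \emph{same} threshold for both teams. So consider a profile where $\oal$ sits just below such a threshold, both reports are truthful for the current corner $v$, and the dominating direction $e_i$ points across the threshold to a new corner $v'$. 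A coordinate player who deviates makes \emph{both} reports simultaneously stale (they still describe $v$, but truthfulness is now judged at $v'$), so that player — whose identical-interest utility includes the truthfulness terms — loses about $2\epsilon$ and gains only $O(\epsilon^2)$ in potential. Neither reporting team can profitably re-truthify first, since for them the reference corner has not yet moved. That is exactly the spurious equilibrium you were trying to avoid, and taking a min of two implied $\ophi$ values doesn't help: when both reports are wrong, the min of two wrong numbers is still wrong.

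The paper's construction (Section~\ref{sec:localinfo}) fixes this with two further ideas absent from your sketch. First, the two reporting teams use \emph{different reference geometries}: the threshold at which $r_1$ switches which vertex it reports about is placed at a different coordinate value than $r_2$'s threshold, so a coordinate deviation can cross at most one team's reference boundary, leaving the other team's report still truthful about an unchanged vertex. Second, which report the potential consults is not determined by which one is truthful (the game cannot inspect truthfulness) but by a third, purely geometric notion of \emph{relevance}, arranged so that near the reference boundary of team $i$, team $i$'s report is declared \emph{irrelevant}; irrelevant untruthfulness is penalized only at a much lower weight (on the order of $0.5^n$ in the paper, versus $3^n$ for relevant reports). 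Thus the player who crosses $r_1$'s reference boundary never pays the $\Theta(\epsilon)$ price — only the irrelevant $r_1$ becomes stale. A smaller but still real issue: you gate the truthfulness bonus on an exact match $\1\{\text{report} = \text{truth}\}$, which means a single reporting player cannot unilaterally improve the bonus when the report is off in two or more bits, creating further spurious equilibria; the paper uses a Hamming-distance penalty precisely so that each reporter can move one bit closer to truth on their own.
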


These theorems are proved in the following two subsections.

\subsection{$n$-player hardness}\label{sec:n-pl}

We lift the query results of Section \ref{sec:qc} to a communication model using the result of \cite{HN12,GP14}, Theorem \ref{thoe:ccEOL}. 
The high-level idea is to use the replication technique (Section \ref{sec:rep}). However, unlike the replication game in the query model, here we want \emph{both} Alice's and Bob's teams to choose an action. To ensure that both teams will choose close-by points we use a simple imitation gadget (simpler than the high-degree imitation). In the constructed game the two  teams of agents try to imitate each other and in parallel try to maximize their own potential. Note that Theorem \ref{thoe:ccEOL} refers to a case where the input about the line is distributed between Alice and Bob, and therefore neither of the teams know what the potential is. We add an additional ingredient to the game. Every team in addition to a chosen point, also reports her local information about the line at the chosen point. Here we utilize the locality property of the potential, which implies that every team can report only a small number of bits. We incentivize teams to report their local information truthfully, and we set the utility from the potential to be the potential \emph{with respect to the reported local information}\footnote{Similar technique has been applied in \cite{CCNash,GR18,CCPLS}.}. Now the utilities of the teams indeed depend only on their private information in the End-of-Line problem of Theorem \ref{thoe:ccEOL}.

\paragraph{An obstacle.} Even without defining formally the notion of \emph{local information about the line} (which is done formally below), it is clear that this local information varies from point to point. More concretely, the cube is partitioned into disjoint subsets $(B_i)$ where the local information differs. On the boundaries of such two sets $B_i$ and $B_j$ when agents report truthfully the local information of $B_i$ the players that choose the point \emph{do not} have an incentive to deviate from $B_i$ to $B_j$, even if it has higher potential, because then the report at the point $B_j$ will turn out to be false.\footnote{Similar obstacle has raised in \cite{CCPLS}. However, their technique to resolve this problematic issue cannot be applied here, because for mixed Nash equilibria analysis their technique is not valid.} This may create spurious equilibria located close to these boundaries (and hence far from the end-of-line).

\paragraph{A solution to the obstacle.} We resolve this problematic issue by introducing \emph{two teams} of agents that report local information of every team. We set \emph{different} notions of local information for the two teams. These notions of locality induce two different partitions into sets where local information changes $(B_i)$ and $(C_i)$. The notions of the locality are set in a way that the boundaries of the $(B_i)$s are disjoint to the boundaries of the $(C_i)$s. Which one of the reports do we use in order to evaluate the potential? It depends on the location of the point. For points that are close to the boundaries of the $(B_i)$s we use the $C_i$ valuation of potential, and conversely, for points that are close to the boundaries of the $(C_i)$s we use the $B_i$ valuation of potential.

In addition to the above modification for the players who report the local information, our construction in the communication model uses \emph{both} ingredients that were presented above in the query model: imitation (Section \ref{sec:hd}) and replication (Section \ref{sec:rep}). The roles of these ingredients are as follows. Imitation is used in order to guarantee that both teams will choose close-by points and hence will report the local information over the same region of the line. Replication is used to guarantee that unilateral deviation of a player will not change significantly the chosen point, and hence the local information will remain fixed. Saying it differently, without the replication ingredient unilateral deviation of a player might change the local information too drastically such that neither of the two teams will report the local information with respect to the correct point. Without the imitation ingredient, we just have a single team that chooses a point. However, in the communication model, we want the notion of locality for each player to depend on the point of \emph{its own} team (this is needed for the potential property of the game). This requires a game where both teams choose points, and we would like these points to be close.

\subsubsection{Local information, its reporting, and report-based potential}\label{sec:localinfo}

In this subsection, we describe formally the reporting of local information and the potential computation given the reports. 

We recall that an information of a \emph{vertex} $v$ in the line is a triple of $tsp_v:=(t_v,s_v,p_v)\in \{0,1\}^3$ that indicates whether the line goes through $v$, and if so also indicates the successor (among the two possible) and the predecessor (among the two possible).
In the communication problem of Theorem \ref{thoe:ccEOL} Alice holds an array of size 3 for each bit. Bob holds an index in $[3]$ for each bit. Here we slightly expend the notion of local information. For a vertex $v$ we include in the \emph{local information of} $v$ the $(n+1)$ triples $tsp_v$ of all the neighbours of the vertex $v$ in the cube.

Let $r_1,r_2$ be the two \emph{reporter teams} of a player (Alice or Bob). Each reporter team of Alice will essentially report $3(n+1)$ arrays of size 3, one for each bit. Similarly, each reporter team of Bob will essentially report $3(n+1)$ indexes of size 3.

\paragraph{Local information of the reporters, and reference vertex} For every point $x\in [0,\gs]^{n+1}$ that is $13$-close to a vertex i.e., there exists a $v\in \{0,\gs\}^{n+1}$ such that $d_\infty (v,x)\leq 13$, the local information at $x$ is define by the local information of $v$, for both teams $r_i$ $i=1,2$. Note that if a $13$-close vertex exists it is unique.

For a point $x$ that is $13$-far from all vertices but is $13$-close to an edge $(v,v+\gs e_i)$ where $v\in \{0,\gs\}^{n+1}$ and $v_i=0$, we note that this edge is unique. 
Here we associate the local information differently for the teams $r_1$ and $r_2$. 
For the team $r_1$, if $x_i \in (13,\gs-14]$, then the local information of $x$ is the local information of $v$. Otherwise (if $x_i \in (\gs-14,\gs-13)$) the local information of $x$ is the local information of $v+\gs e_i$. The \emph{reference boundary of the team $r_1$} is denoted by $V_1$ and is defined to be the set of all points $x$ that are $13$-close to an edge $(v,v+\gs e_i)$ such that $x_i=\gs-14$. Namely, $V_1$ is the set of points where the reference to a vertex changes for team $r_1$.

For the team $r_2$, if $x_i \in (13,14]$, then the local information of $x$ is the local information of $v$. Otherwise (if $x_i \in (14,\gs-13)$) the local information of $x$ is the local information of $v+\gs e_i$. The \emph{reference boundary of the team $r_2$} is denoted by $V_2$ and is defined to be the set of all points $x$ that are $12$-close to an edge $(v,v+\gs e_i)$ such that $x_i=14$. Namely, $V_2$ is the set of points where the reference to a vertex changes for team $r_2$.

Finally, for points that are $12$-far from all edges we do not need to define the notion of local information. The potential in all these points is defined independently of the line.

\paragraph{Relevance of reports} Both Alice's and Bob's team use the reported information. Given the fact that we have two reporters (in both Alice's and Bob's team), we shall define formally whose  information do we use, in parallel we define the notion of \emph{relevance of a report} that, simply speaking means that this report is used in the calculations of the potential. 

For $x$ that is $13$-close to a vertex $v$ we use the report $r_1$ and say that $r_1$ is \emph{relevant}. Since $r_2$ is not used we say that $r_2$ is \emph{irrelevant}.

For $x$ that is $13$-far from all vertices but $13$-close to an edge $(v,v+\gs e_i)$ for $i\neq n+1$ use the report $r_1$ and say that is \emph{relevant} iff $x_i\leq \gs/2$. Similarly we use the report $r_2$ and say that it is \emph{relevant} iff $x_i> \gs/2$.
For the special initial edge $((\0_n,\gs),(\0_n,0))$ (namely when $i=n+1$) we do not need any information about the line, hence we do not use any of the reports and we say that both reports are \emph{irrelevant}. 

For $x$ that is $13$-far from all edges, again we do not need any information about the line and we say that both reports are \emph{irrelevant}. 

We denote by $R_i\subset [0,\gs]^{n+1}$ $i=1,2$ the set of all points where report $r_i$ is relevant.

The \emph{relevance boundary} is defined to be the set of all points that contain arbitrary close pair of points $x^1,x^2$ such that $r_i$ is relevant at $x^i$. The relevance is the union of the set of points that are within a distance of exactly $13$ from a vertex $v$ and the set of points $x$ that are $13$-close to an edge $(v,v+\gs e_i)$ and such that $x_i=\frac{\gs}{2}$.  

\paragraph{Report-based potential} Given reports of all teams $r_1^A,r_2^A,r_1^B$, and $r_2^B$ and realized points by the teams $\oa$ and $\ob$ we use the following procedure to deduce the potential. First, we identify the relevant report of each team. This report is about a vertex $v(\oa)$, $v(\ob)$ and their neighbours in the cube. Second, we combine Alice's information at her $(n+1)$ vertices with Bob's information at his $(n+1)$ vertices. In principle, these sets of vertices may not overlap; However, we will see that in an equilibrium it does not happen. The intersection of these sets of vertices is called \emph{the overlapping vertices}. We check whether the information in the overlapping vertices is consistent\footnote{For instance, the information might be inconsistent if according to the reports at $v$ the line goes through $v$ and proceeds to $w$ but according to the reports of $w$ the line does not go through $w$.} and sufficient to define the potential at all vertices of the cube $\times_i [\lfloor \oa_i \rfloor, \lceil \oa_i \rceil]$ as it was defined in Section \ref{sec:hpv}. If it is sufficient, we define $\ophi_{r^A,r^B}(\oa)$ as the multilinear extension of $\varphi$ with respect to these vertices (as we did in Section \ref{sec:rep}). Otherwise, if the information is insufficient, we define $\ophi_{r^A,r^B}(\oa)=0$. Similarly for Bob.
We notice that unlike $\varphi(a)$, that depends on the private information of Alice and Bob, the report-based potential $\varphi_{r^A,r^B}$ depends on the \emph{reports} of the players but not on their private information.

\subsubsection{Actions}
As was mentioned above, in the constructed game Alice's team (collectively) chooses a point $\oa$ and (collectively) chooses two reports $r_1^A,r_2^A$. Formally, the collective choice of a point $\oa$ is done precisely as in the replication construction (Section \ref{sec:rep}). We recall that every coordinate $\oa_i$ is an average of $m=\Theta(n)$ binary actions $a_{i,j}\in \{0,\gs\}$ of players $\{(i,j):j\in [m]\}$. The collective choice of a report $r_i^A\in \{0,1\}^{9(n+1)}$ is done by a teams of $(n+1)\cdot 3 \cdot 3$ players each with binary action $\{0,1\}$. We have $(n+1)$ neighbouring vertices, each has 3 $tsp$ bits, each one of these bits has a private information of an array of size 3.

Similarly, we define the binary actions of Bob's team. The only distinction is that the private information of Bob is indices $[3]$ which are encoded by binary strings (rather than arrays of size 3). Those the collective report of $r_i^B\in \{0,1\}^{6(n+1)}$ for $i=1,2$ is done by $6(n+1)$ players with binary actions.

\subsubsection{Utilities}
The idea is to use prioritized incentives. Roughly speaking, the incentives are \emph{prioritized} if a small improvement in a higher level priority compensates against all the possible losses of all the lower level priorities. The priorities are as follows.
\begin{enumerate}
\item At the highest priority, players are incentivized  to play close-by points $\oa$ and $\ob$ using a simple imitation utility.
\item At the high priority, a team is incentivized to report truthfully her relevant report.
\item At the medial priority, a team is incentivized to choose a point with high potential.
\item At the low priority, a team is incentivized to report truthfully the irrelevant report. 
\item Although we do not actually define utilities with respect to the following term, it is useful to mention it. A utility that is obtained in the rare event where $\oa$ is far from its expectation has the lowest weight in our arguments. Namely, even though there is some probability that the realization of $\oa$ will be far from its expectation since $m=\Theta(n)$ is large enough, this event has much smaller probability even with respect to the low priority term. 
\end{enumerate}

Formally, Alice's team has identical utility at the action profile $(\oa,r_1^A,r_2^A,\ob,r_1^B,r_2^B)=:(ab)$ that is the sum of the following terms.

\begin{enumerate}
\item Imitation loss. The imitation loss is defined by $u_A^{im}(ab)=-\sum_{i=1}^{n+1} \1_{|\oa_i-\ob_i|>1} 4^n |\oa_i-\ob_i|$. Note that for close points (i.e., $||\oa-\ob||_\infty \leq 1$) the imitation term is identically 0. We also note that the imitation term is identical for both players.

\item Cost for untruthful relevant reporting. Since we want players to be able to decrease the cost by unilateral deviation we set the cost with respect to the Hamming distance (or equivalently the $\calL_1$ distance). We define $u_A^{rr}(ab)=-3^n \1_{\oa \in R_1} d_1(r_1^A, \hat{r}_1^A)-3^n \1_{\oa \in R_2} d_1(r_2^A, \hat{r}_2^A)$, when we recall that  $R_i$ is the region where the report $r_i$ is defined to be relevant, $\hat{r}_i^A$ denotes the true private information of Alice at the relevant 5 vertices, and $d_1$ is in this case simply is the Hamming distance. The term $3^n$ is set to serve our purposes in the priority hierarchy.

Note that the cost for untruthful reporting \emph{does not} depend on Bob's actions.

\item Potential cost. The potential cost is defined by $u_A^{po}(ab)=-\ophi_{r_A,r_B}(\oa)-\ophi_{r_A,r_B}(\ob)$, where $\varphi_{r_A,r_B}$ is the report-based potential defined above. Note that unlike the replication game (Section \ref{sec:rep}) here we include the potential value of \emph{both} players in Alice's utility. This is needed to ensure that the game will be a potential game. Unlike $\varphi$ in the query model $\varphi_{r_A,r_B}$ depends on Alice's (Bob's) action because it depends on the report.

\item Cost for untruthful irrelevant reporting.  
We define $u_A^{ir}(ab)=-0.5^n \1_{\oa \notin R_1} d_1(r_1^A, \hat{r}_1^A)-0.5^n \1_{\oa \notin R_2} d_1(r_2^A, \hat{r}_2^A)$. The term $0.5^n$ is set to serve our purposes in the priority hierarchy.
\end{enumerate}
Alice's utility is given by the sum $u_A(ab)=u_A^{im}(ab)+u_A^{rr}(ab)+u_A^{po}(ab)+u_A^{ir}(ab)$.

Bob's utility is defined similarly
\begin{align*}
u_B^{im}(ab) &=-\sum_{i=1}^{n+1} \1_{|\oa_i-\ob_i|>1} 4^n |\oa_i-\ob_i|, \\
u_B^{rr}(ab) &= -3^n \1_{\ob \in R_1} d_1(r_1^B, \hat{r}_1^B)-3^n \1_{\ob \in R_2} d_1(r_2^B, \hat{r}_2^B), \\
u_B^{po}(ab) &= -2\varphi_{r_A,r_B}(\oa)-2\varphi_{r_A,r_B}(\ob), \\
u_B^{ir}(ab) &= -0.5^n \1_{\ob \notin R_1} d_1(r_1^B, \hat{r}_1^B)-0.5^n \1_{\ob \notin R_2} d_1(r_2^B, \hat{r}_2^B), \text{ and } \\
u_B(ab) &= u_B^{im}(ab)+u_B^{rr}(ab)+u_B^{po}(ab)+u_B^{ir}(ab).
\end{align*}

The game is a potential game because if we view it is a two-player game between the teams it is given as a sum of an identical interest term $u^{im}+u^{po}$ and opponent-independent term $u^{rr}+u^{ir}$; See Fact~\ref{rem:pot}. It is also easy to verify that Alice's utility does not depend on Bob's private information and vice-versa, which is needed for the communication reduction.

In the future analysis of congestion games the above property of the reduction will be useful. Hence, we provide a terminology for this property and emphasize it in a remark.
\begin{definition}\label{def:clean2}
A pair of a two-player potential game and the private information of Alice and Bob in it is called \emph{structured} if the utilities of the players can be written as $u_A(a,b)=v_C(a,b)+v_A(a)$ and $u_B(a,b)=v_C(a,b)+v_B(b)$. Moreover, we require that Alice will know $v_C$ and $v_A$ and Bob will know $v_C$ and $v_B$. 
\end{definition}

\begin{remark}\label{rem:c2}
Note that the constructed pair of a potential games with the information of Alice and Bob are structured.
\end{remark}

\subsubsection{Equilibrium analysis}

Let $(\alpha,\beta)$ be a mixed Nash equilibrium of the game, where each player randomizes between his two actions. We denote by $\oal\in [0,\gs]^{n+1}$ the expectation of Alice's team choice of the point $\oa$. Similarly, $\obe\in [0,\gs]^{n+1}$ is the expectation of Bob's team choice of the point $\ob$.

We first observe that the realized points $\oa$ and $\ob$ are $0.2$-close to $\oal$ and $\obe$ with very high probability of $1-0.1^{n}$. Namely $\Prob_{\oa \sim \alpha} [|\oa-\oal|_{\infty}>0.2]\leq 0.1^n$. This simply follows from Hoeffding inequality\footnote{Note that here we use the independent, but not identically distributed version of Hoeffding inequality.} and the choice of sufficiently large $m=\Theta(n)$.

Henceforth, we present gains by divination of at least $0.5^n$, whereas for the rare event of $|\oa-\oal|_{\infty}>0.2$ the loss in these deviations might be at most $O(n)4^{n}$. Thus we can ignore this rare event because it yields a change of at most $O(n)4^n 0.1^n << 0.5^n$ in the utility.
More formally, the \emph{essential support} of $\alpha$ is the set of realizations such that $|\oa-\oal|<0.2$. In the proof we will consider the essential supports of $\alpha,\beta$ and by the above arguments, it implies the same observations for the strategies $\alpha,\beta$.

We first show that in an equilibrium reports are truthful except for some specific region of $\oal$.

Whenever $\oal$ is close to the reference boundary of a team, it is not clear which report should the reporting players choose; Should it be the report of the vertex $v$ or that of the vertex $v+\gs e_i$? Indeed the mixed strategy of the players that choose $a$ could induce an indifference between the two reports. The following lemma states that in points that are far from the reference boundary this undesirable phenomenon does not happen, and players report truthfully.

\begin{lemma}\label{lem:truth}
Let $(\alpha,\beta)$ be a (mixed) Nash equilibrium such that $\oal$ is $0.2$-far from a reference boundary of team $r_i$ for $i=1,2$, and $\oal$ belongs to the region whose reference vertex is $v$. Then, the report of the team $r^A_i$ is the truthful report of the local information at the vertex $v$.
\end{lemma}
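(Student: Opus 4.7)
The plan is to show that in any equilibrium, every reporting bit controlled by a player in team $r_i^A$ is played truthfully; taking the conjunction over all bits yields the lemma. Fix one such player $p$ with truthful value $t\in\{0,1\}$ at vertex $v$ (well-defined once I pin down the reference vertex). I would compare $p$'s expected payoffs $U(t)$ and $U(1-t)$ from the two pure actions, holding $(\alpha,\beta)_{-p}$ fixed, and show $U(t)-U(1-t)>0$; this contradicts any positive mass on $1-t$ in $\alpha_p$.

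The first ingredient is concentration. By Hoeffding applied to the $m=\Theta(n)$ binary actions defining each coordinate of $\oa$ (and similarly $\ob$), the good event $G=\{\|\oa-\oal\|_\infty\le 0.2\}\cap\{\|\ob-\obe\|_\infty\le 0.2\}$ holds with probability at least $1-2^{-\Omega(n)}$. Since $\oal$ is $0.2$-far from the reference boundary $V_i$ and lies in the $v$-region, on $G$ the reference vertex for team $r_i$ at $\oa$ is uniquely $v$; hence the truthful report $\hat r_i^A$ unambiguously equals Alice's local information at $v$, and $t$ is well-defined. Outside $G$ the utility swing is crudely bounded by $\poly(n)\cdot 4^n$, so its expected contribution to $\Delta U$ is $\poly(n)\cdot 4^n\cdot 2^{-\Omega(n)}=o(1)$ after choosing the constant in $m=\Theta(n)$ large enough.

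On $G$, the imitation term $u_A^{im}$ is independent of $p$'s bit, and flipping the bit changes $d_1(r_i^A,\hat r_i^A)$ by exactly $1$, so
\[
\Delta u_A^{rr}+\Delta u_A^{ir} \;=\; 3^n\cdot \1_{\oa\in R_i}\;+\;0.5^n\cdot \1_{\oa\notin R_i}.
\]
The crucial observation is that team $r_i^A$'s report enters $\ophi_{r^A,r^B}$ only when $r_i$ is Alice's relevant team, i.e.\ when $\oa\in R_i$; if $\oa\notin R_i$ then Alice's relevant report is $r_{3-i}^A$, and flipping a bit in $r_i^A$ leaves $\ophi_{r^A,r^B}(\oa)$ and $\ophi_{r^A,r^B}(\ob)$ untouched. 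Even when $\oa\in R_i$, a single-bit flip alters $\ophi$ at each of $\oa$ and $\ob$ by at most $\max\phi\cdot \poly(n)=2^{n/2}\cdot \poly(n)$, so $|\Delta u_A^{po}|\le 2^{n/2}\cdot \poly(n)\cdot \1_{\oa\in R_i}$.

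Combining, on $G$ we have $\Delta U \ge (3^n-2^{n/2}\cdot \poly(n))\cdot \1_{\oa\in R_i}+0.5^n\cdot \1_{\oa\notin R_i}\ge 0.5^n$, since $3^n=9^{n/2}\gg 2^{n/2}\cdot \poly(n)$. Taking expectations and adding the $o(1)$ term from $\overline G$ gives $E[\Delta U]\ge 0.5^n-o(1)>0$, the desired contradiction. The main conceptual point, and the reason the hypothesis concerns the reference (not relevance) boundary, is that crossing the relevance boundary does not change team $r_i^A$'s truthful target at $v$; it only makes the report cheaper to be wrong and, simultaneously, irrelevant to the potential, so both effects preserve the incentive to report truthfully — this is precisely what motivates the two-team design and the separation between $V_i$ and the relevance hyperplane.
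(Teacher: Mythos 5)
Your proof is correct and follows essentially the same approach as the paper's: split on whether the realization of $\oa$ makes team $r_i$'s report relevant or irrelevant, show that truthful reporting gains $3^n$ minus a potential-term perturbation of order $2^{\Theta(n)}\ll 3^n$ in the relevant case, and gains exactly $0.5^n$ in the irrelevant case (since the irrelevant report enters no other term), with the rare non-concentrated realizations contributing negligibly by Hoeffding. Your writeup is more explicit than the paper's about the concentration argument and the key structural fact that an irrelevant report of Alice cannot affect $\ophi_{r^A,r^B}$ at either $\oa$ or $\ob$, but the argument is the same.
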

Similar lemma holds for Bob's two reporting teams.
\begin{proof}
In the essential support the realization of $\oa$ is such that the local information of team $i$ is of the vertex $v$.

Consider a reporter in the team $r_i^A$.
In case his report turns out to be relevant (this depends on the realization of $\oa$), his gain from truth report is $3^{n} - \Theta(1) 2^n$ where the second term counts the possible potential gains that can be obtained by false report.
If his report turns out to be irrelevant, his gain from truth report is exactly $0.5^n$ because this report does not affect other terms of the utility.
\end{proof}

As a second step, we bound the distance of $\oal$ and $\obe$ in every equilibrium, which follows from the fact that the imitation term has the highest priority.

\begin{lemma}\label{lem:abd}
In every (mixed) Nash equilibrium $(\alpha,\beta)$ we have $|\oal-\obe|_{\infty} \leq 2$.
\end{lemma}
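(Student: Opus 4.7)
The plan is by contradiction: assume some coordinate $i$ satisfies $|\oal_i - \obe_i| > 2$; WLOG $\oal_i - \obe_i > 2$ (the reverse case is handled symmetrically via a Bob-team deviation). I will exhibit a single player in Alice's $i$-th team whose unilateral deviation strictly improves Alice's identical-interest utility $u_A$, contradicting Nash.

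To pick the deviator, write $\oal_i = (\gs/m)\sum_{j}p_{i,j}$ where $p_{i,j}:=\Prob[a_{i,j}=\gs]$. Since $\oal_i > 2$, an averaging argument yields some $j$ with $p_{i,j}\geq 2/\gs$. Consider the deviation of player $(i,j)$ from her current mixed action to the pure action $0$. The distributions of $\ob$ and of $\oa_k$ for $k\neq i$ are unaffected; the realized $\oa_i$ decreases by $\gs/m$ precisely on the sub-event $\{a_{i,j}=\gs\}$ (of probability $p_{i,j}$) and is unchanged on $\{a_{i,j}=0\}$.

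The key quantity is the change in $\Ex[u_A] = \Ex[u_A^{im}+u_A^{rr}+u_A^{po}+u_A^{ir}]$. Let $E$ denote the \emph{essential support} event $\{d_\infty(\oa,\oal)\leq 0.2\text{ and }d_\infty(\ob,\obe)\leq 0.2\}$, which by Hoeffding applied to the independent replicated players has probability $\geq 1-O(0.1^n)$. On $E$ we have $\oa_i-\ob_i\geq 1.6$, so the indicator $\1_{|\oa_i-\ob_i|>1}$ is on both before and after the deviation (since $\gs/m<1$ for $m$ large). Conditional on $E\cap\{a_{i,j}=\gs\}$, the deviation decreases $|\oa_i-\ob_i|$ by exactly $\gs/m$ and improves the imitation contribution by $4^n\cdot\gs/m$; conditional on $E\cap\{a_{i,j}=0\}$ it is a no-op. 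Thus the net expected imitation gain is at least $(p_{i,j}-O(0.1^n))\cdot 4^n\gs/m=\Omega(4^n/n)$. Outside $E$, the worst-case change in $u_A^{im}$ is $O(n\cdot 4^n)$ in absolute value, contributing only $O(n\cdot 0.4^n)$ to the expected change.

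The remaining loss terms are bounded easily: $-\ophi_{r_A,r_B}(\oa)$ changes by at most $\|\phi\|_\infty=O(T)=O(2^{n/2})$ (the $\ob$-summand is distributionally unchanged), and the reporting costs $u_A^{rr},u_A^{ir}$ change by at most $O(n\cdot 3^n)$ because the relevance indicators $\1_{\oa\in R_1},\1_{\oa\in R_2}$ can flip but the associated Hamming distances are bounded by the report length $O(n)$. Summing, the expected utility change is $\Omega(4^n/n)-O(n\cdot 3^n)-O(2^{n/2})-O(n\cdot 0.4^n)>0$, contradicting equilibrium. The main subtlety is verifying that the priority hierarchy $4^n\gg 3^n$ has enough margin to absorb the $m=\Theta(n)$ dilution from replication; this works because $4^n/(n\cdot 3^n)=(4/3)^n/n\to\infty$.
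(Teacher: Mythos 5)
Your proof is correct and takes essentially the same route as the paper: a contradiction argument in which, on the essential support, $\oa_i - \ob_i \geq 1.6$ keeps the imitation indicator on, so shifting a team-$i$ player from $\gs$ to $0$ yields an imitation gain of order $4^n\gs/m = \Theta(4^n/n)$, which dominates the $O(\poly(n)\cdot 3^n)$ worst-case swing in the reporting and potential terms. The only cosmetic difference is that you single out a specific player $(i,j)$ with $p_{i,j}\geq 2/\gs$ via averaging, whereas the paper notes that the deviation from $\gs$ to $0$ is strictly profitable for every player on team $i$, forcing $\oal_i=0$ — the same calculation either way.
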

The proof is simpler than the analogous Lemma \ref{lem:cube-bound} because we have included the replication ingredient in the game and hence we can use the essential support analysis.

\begin{proof}
Assume by way of contradiction that $\oal_i >\obe_i+2$. 
For all pair of points in the essential support $(\oa,\ob)$ we have $\oa \geq \ob + 1.6$. We argue that all players in the team of $a_i$ (i.e., the one that is responsible for the choice of $a_i$) it is profitable for them to deviate from $\gs$ to $0$.
By playing $0$ instead of $\gs$ a single player moves $\oal_i$ $\frac{\gs}{m}$-closer to $b_i$ and hence gain at least $\frac{\gs}{m}4^n$ in the imitation term. The other imitation terms (for coordinates other than $i$) remain unchanged. The loss in all other terms is bounded by $\Theta(1)3^n$. Therefore such a deviation is profitable. 
\end{proof}

From Lemma~\ref{lem:truth} and Lemma~\ref{lem:abd} we deduce that reports are truthful and are referred to the same rejoin of the line (up to adjacent vertices in the line), hence the potential $\ophi_{r^A,r^B}$ is computed correctly. Formally, we have the following lemma.

\begin{lemma}\label{lem:pot}
In every (mixed) equilibrium we have $\ophi_{r^A,r^B}(\oa)=\ophi(\oa)$ for every realization in the essential support.
\end{lemma}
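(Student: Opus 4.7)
The plan is to combine Lemma~\ref{lem:truth} (truthful relevant reports) and Lemma~\ref{lem:abd} (proximity of the two teams) with the concentration estimate defining the essential support, in order to show that the two truthful reports jointly carry exactly the local information needed to reconstruct $\ophi(\oa)$ through the report-based multilinear extension of Section~\ref{sec:localinfo}.

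Fix a realization $(\oa,\ob)$ in the essential support, so $|\oa-\oal|_\infty,|\ob-\obe|_\infty\le 0.2$, and by Lemma~\ref{lem:abd}, $|\oa-\ob|_\infty\le 2.4$. The first step is to apply Lemma~\ref{lem:truth} to each team's relevant report. The key geometric point is that along each big-cube edge in direction $i$, the relevance boundary between $R_1$ and $R_2$ sits at $x_i=\gs/2=14.5$, whereas the reference boundaries $V_1,V_2$ of the two reporting teams sit at $x_i=\gs-14=15$ and $x_i=14$, each a full $0.5$ away from the relevance boundary. Whenever team $r_j^A$ is the relevant team for Alice at $\oa$ (in either the vertex region or the edge region of $\oa$), $\oa$ is thus at $d_\infty$-distance $\ge 0.5$ from $V_j$, so $\oal$ is at distance $\ge 0.3>0.2$ from $V_j$. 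Lemma~\ref{lem:truth} therefore guarantees that $r_j^A$ reports truthfully the local information at the reference vertex $v_A\in\{0,\gs\}^{n+1}$ determined by the reference region containing $\oal$; the symmetric statement gives a truthful report from Bob with reference vertex $v_B$.

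The second step uses the locality property of $\phi$. Because $\oa$ and $\oal$ lie in the same reference region (they are $0.2$-close while $\oal$ is $\ge 0.3$-far from the boundary), and because a point $13$-close to an edge in direction $i$ has its remaining coordinates confined to $[0,13]\cup[\gs-13,\gs]$ (so only the $i$-th coordinate can straddle the rounding threshold $x_i=15$), every corner $c$ of the length-$1$ integer subcube $\times_k[\lfloor\oa_k\rfloor,\lceil\oa_k\rceil]$ around $\oa$ satisfies that $r(c)$ is either $v_A$ itself or one of the $n+1$ cube-graph neighbours of $v_A$. The truthful report at $v_A$, which includes the local information at $v_A$ and at each of its neighbours, therefore determines $\phi$ at every such $c$, and hence determines the multilinear extension at $\oa$. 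The symmetric argument works for Bob at $\ob$ with $v_B$.

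The third step matches the two reports. From $|\oa-\ob|_\infty\le 2.4<\gs$ and the structure of the reference regions above, $v_A$ and $v_B$ either coincide or differ in exactly one coordinate---namely a coordinate along which the two teams end up on opposite sides of the relevance boundary---in which case they are adjacent in the cube graph $\{0,\gs\}^{n+1}$ and each is a neighbour of the other. The overlap of the two truthful reports is therefore non-empty and consistent; their union provides consistent information at $v_A$, $v_B$, and all of their cube neighbours, which is exactly what the report-based multilinear extension needs. This gives $\ophi_{r^A,r^B}(\oa)=\ophi(\oa)$, as required. The main technical obstacle is the exhaustive case analysis behind the truthfulness step: one must verify that the $0.5$ buffer between the relevance and reference boundaries, together with the $0.2$ concentration buffer from the essential support, really does cover every position of $\oa$, including all vertex regions, each orientation of the edge region, and the special initial edge $E(0)$ in direction $-e_{n+1}$ where no report is needed.
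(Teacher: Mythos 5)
Your proof is correct and follows essentially the same route as the paper's: both combine Lemma~\ref{lem:truth} (truthfulness of reports far from reference boundaries) with Lemma~\ref{lem:abd} (proximity of the two teams), and both hinge on the key geometric fact that the relevance boundary at $x_i=\gs/2=14.5$ is separated by $0.5$ from the reference boundaries $V_1$ (at $x_i=15$) and $V_2$ (at $x_i=14$), so the $0.2$-concentration of the essential support cannot straddle both. You phrase this via the contrapositive (``relevant $\Rightarrow$ $\oa$ is $\ge 0.5$-far from $V_j$, hence $\oal$ is $\ge 0.3$-far''), whereas the paper states it as ``$\oal$ $0.2$-close to $V_j$ $\Rightarrow$ $r_j$ is irrelevant'' and then case-splits; these are the same observation. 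Your second and third steps spell out the locality/overlap reconstruction in somewhat more detail than the paper, but the argument is the same.
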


\begin{proof}
The key observation is that every point $x$ that is $0.2$-close to the reference boundary of team $r_i$ the report of team $r_i$ is \emph{irrelevant}. This observation follows from the definition of the reference boundaries and the relevance boundaries which are far from each other.

Formally, if $\oal$ belongs to the region whose reference vertex is $v$ and $\oal$ is $0.2$-far from both reference boundaries, both reports $r_1^A,r_2^A$ are truthful by Lemma \ref{lem:truth}, in particular the relevant one.
If $\oal$ is $0.2$-close to the reference boundary of $r_i$, then the report $r^A_{3-i}$ is truthful by Lemma \ref{lem:truth}, and $r^A_{3-i}$ is also the relevant report.
If $\oal$ is $12.5$-far from all edges, then the report does not play any role in the definition of the potential.
Summarizing, these three cases show that the \emph{relevant} report of Alice's team is truthful for realization $\oa$ in the essential support.

Similarly, the \emph{relevant} report of Bob's team is truthful for realization $\oa$ in the essential support.

By Lemma \ref{lem:abd} $|\oal-\obe|_\infty \leq 2$. This implies that the relevant vertices of $\oal$ ($v$) and $\obe$ ($w$) are adjacent. Therefore, $v$ and $w$ are overlapping vertices with truthful reports. The local information ($tsp$) at the vertex $v$ is sufficient to define the local potential in all integer points of the cube $[[\oal]-1,[\oal]+1]$ and in particular this includes all integer points that are used in the definition of $\ophi(\oa)$ for an $\oa$ in the essential support.
\end{proof}

Lemma \ref{lem:abd} also implies that the essential support of an equilibrium is contained in a small cube (see Section \ref{sec:domdir}), which by Proposition \ref{pro:dom-dir} has a dominating direction $\pm e_i$. Similarly, to our previous arguments (Propositions \ref{theo:binary} and \ref{theo:const-actions}) we argue that players in one of the coordinate-$i$ teams ($a_i$ or $b_i$) have an incentive to deviate toward the dominating direction. The arguments are slightly more complicated because such a change has an effect on other terms of the utility (which did not appear in Propositions \ref{theo:binary} and \ref{theo:const-actions}): truthfulness of the reports. The following lemma applies such an analysis and completes the reduction.

\begin{lemma}\label{lem:final}
Let $(\alpha,\beta)$ be a (mixed) Nash equilibrium, then either $\oal$ or $\obe$ are $0.2$-close to the end-of-line vertex $l'(T)$.
\end{lemma}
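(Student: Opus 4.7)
The plan is to combine Lemmas~\ref{lem:truth}--\ref{lem:pot} with Proposition~\ref{pro:dom-dir} to derive a contradiction from the assumption that neither $\oal$ nor $\obe$ is $0.2$-close to $l'(T)$. By Lemma~\ref{lem:abd} we have $|\oal-\obe|_\infty\leq 2$, so the essential supports of both $\alpha$ and $\beta$ fit inside a common small cube $C(\uv,\ov)\subset G_{n+1}$ of edge-length at most $4$, built coordinate-wise in the style of Proposition~\ref{theo:binary}: round $[\min(\oal_i,\obe_i),\max(\oal_i,\obe_i)]$ out by $0.2$, using the boundary convention $\uv_i=\ov_i$ when an endpoint equals $0$ or $\gs$. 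The non-closeness assumption rules out the degenerate cube $\{l'(T)\}$, so Proposition~\ref{pro:dom-dir} supplies a dominating direction, which after relabeling we take to be $+e_i$.

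\textbf{The profitable deviation.} Feasibility of $+e_i$ forces $\uv_i<\gs$, so $\min(\oal_i,\obe_i)<\gs$; assume $\oal_i\leq\obe_i$ (otherwise run the analogous argument on Bob's team, where the factor of $2$ in $u_B^{po}$ only strengthens the conclusion). Then some player $(i,j^*)$ in Alice's coordinate-$i$ team plays $0$ with positive probability, and we claim she strictly prefers to deviate to $\gs$. Inspecting each summand of $u_A$ in priority order: for $u_A^{po}$, Lemma~\ref{lem:pot} gives $\ophi_{r^A,r^B}(\oa)=\ophi(\oa)$ on the essential support, and a multilinear-extension computation identical to the one in the proof of Proposition~\ref{theo:binary} shows $\partial(-\ophi)/\partial x_i\geq 1$ throughout the cube along the dominating direction, so shifting $\oa_i$ by $\gs/m$ yields an expected gain of at least $\gs/m$; for $u^{im}$, the arrangement $\oal_i\leq\obe_i$ implies that raising $\oa_i$ moves it weakly closer to $\ob_i$ in every realization of the essential support, so $u^{im}$ is non-decreasing; for the reporting terms $u^{rr}$ and $u^{ir}$, the deviating player does not alter any report, and the shift by $\gs/m$ is too small to introduce a penalty, because the two-team design of Section~\ref{sec:localinfo} guarantees via Lemma~\ref{lem:truth} that the currently relevant report is truthful both before and after. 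The exceptional event (probability $\leq 0.1^n$) that a realization escapes the essential support contributes at most $O(n\,4^n\,0.1^n)\ll\gs/m$. Summing the four contributions, the expected utility strictly increases by $\Omega(\gs/m)$, contradicting equilibrium.

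\textbf{Main obstacle.} The delicate issue is ruling out a spurious contribution from the report-based potential: a deviation in $\oa_i$ could in principle cross a relevance boundary and bring an untruthful report into play, artificially inflating the apparent potential. This is exactly what the two-team reporting gadget of Section~\ref{sec:localinfo} was designed to prevent---it keeps the reference boundaries of one reporting team strictly separated from those of the other, so that Lemma~\ref{lem:truth} always applies to whichever team becomes relevant after the shift, keeping the evaluation of the potential faithful to $\ophi(\oa)$ and allowing the dominating-direction gain to go through.
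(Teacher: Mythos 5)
Your proposal follows essentially the same route as the paper's proof: form the small cube from the essential supports via Lemma~\ref{lem:abd}, invoke Proposition~\ref{pro:dom-dir} for a dominating direction, choose the team whose coordinate is smaller, and argue that a unilateral move of a single coordinate-$i$ player gains $\Omega(\gs/m)$ in the potential term while all other terms cannot decrease by that much; the two-team reporting gadget plus Lemma~\ref{lem:truth} and Lemma~\ref{lem:pot} are exactly what you need to keep the report-based potential faithful before and after the shift. You also correctly identify the crossing-a-relevance-boundary issue as the one real obstacle.

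One detail is stated inaccurately, though the conclusion you draw from it is still correct. You claim that ``$\oal_i\leq\obe_i$ implies that raising $\oa_i$ moves it weakly closer to $\ob_i$ in every realization of the essential support.'' This is false as written: with $\oal_i=\obe_i$ one can have realizations where $\oa_i>\ob_i$, and then raising $\oa_i$ increases $|\oa_i-\ob_i|$. What saves the argument is the thresholded form of the imitation loss, $u^{im}_A=-\sum_i \1_{|\oa_i-\ob_i|>1}\,4^n|\oa_i-\ob_i|$: under $\oal_i\leq\obe_i$ and the essential-support bounds, any realization with $\oa_i>\ob_i$ satisfies $|\oa_i-\ob_i|\leq 0.4$, and this remains below $1$ after the $\gs/m$ shift, so the term stays exactly zero; when $\oa_i\leq\ob_i$ the shift weakly decreases $|\oa_i-\ob_i|$ (or overshoots into the sub-$1$ regime). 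Either way $u^{im}$ is non-decreasing, but the indicator is load-bearing and should be invoked. The paper avoids the issue by an explicit case split at the threshold $|\oal_i-\obe_i|\lessgtr 0.5$ --- in the first case the imitation term is identically zero before and after, in the second it strictly improves. Your unified phrasing is fine once you add the one sentence about the threshold.
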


\begin{proof}
Assume that $e_i$ is the dominating direction in the union of the essential supports $B(\oal,0.2)\cup B(\obe,0.2)$. We argue that each player in the team $a_i$ strictly prefer to deviate from $0$ to $\gs$.

We consider two cases. First we consider the case where $|\oal_i-\obe_i|_\infty \leq 0.5$. In this case, both teams prefer to deviate toward the dominating direction. We demonstrate it for Alice's team.

Since the potential term is equal the actual potential (Lemma \ref{lem:pot}) and the direction $e_i$ is a dominating direction, the team gains at least $\frac{\gs}{m}$ in the potential term $u^{po}_A$. This follows from the fact that the multilinear extension has a derivative of at least 1 in the dominating direction, and a unilateral deviation of a single player in the team moves the realized point $\oa$ of the team a distance of $\frac{\gs}{m}$ in the $e_i$ direction (see formal arguments in the proof of Proposition \ref{theo:binary}).

The imitation term is identically 0. Also, the irrelevant reports are negligible.\footnote{Note that indeed such deviation may cause untruthful \emph{irrelevant} reports close to the vertex $(\0_n,\gs)$ when we enter from the region of no-reports (points that are sufficiently far from the line) to points that are $13$-close to the line.}
It only remains to check that such a change does not cause a loss in the team of relevant reporters. 
If $\oal$ is 0.2-close to the reference boundary of team $i$, then the reports of team $i$ are irrelevant.
If $\oal$ is 0.2-close to the relevance boundary, then by Lemma \ref{lem:truth} in these region \emph{both} reports are truthful. Hence, changing the relevance of the reports does not cause any loss.
This completes the arguments for the first case because the total gain from such deviation is at least $\frac{\gs}{m}-\Theta(0.5^n)$.

Now we consider the second case where $|\oal_i-\obe_i|>0.5$, and we assume without loss of generality that $\oal_i<\obe_i-0.5$. 
Here we argue that Alice's team will gain by divination from $0$ to $\gs$. The only difference from the previous case is that we cannot argue that the imitation term is negligible. However, for every pair of points in the essential support we know that a deviation from $a_i$ to $a_i+\frac{29}{m}$ decreases the distance $|a_i-b_i|$. Therefore such a divination \emph{improves} the imitation loss. In addition to this argument, all the remaining arguments about the negligibility of the other changes in the utility remains unchanged.
\end{proof}

Lemma \ref{lem:final} completes the proof of Theorem \ref{theo:np} because the constructed game has $O(n^2)$ players, and end-of-line can be computed (without communication) from any point in the support of an equilibrium. 

\subsection{Two-player hardness}

We use the high-degree imitation game (Section \ref{sec:hd}) combined with the idea of reporting the local information as in the previous Section \ref{sec:n-pl}. Unlike the $n$-player case, here we do \emph{not} have the obstacle of different players who are responsible  for choosing the point and choosing the report. This obstacle required delicate treatment in the $n$-player case. In the two-player case, we can simply define that player reports local information of the point chosen by him. A player can safely deviate to a point where the local information report varies, and report the corresponding information. We define the game formally, and briefly overview the proof that is very similar to the proof of Theorem \ref{theo:const-actions} with one additional (trivial) argument that players report truthfully their local information.

\paragraph{Local information} As was mentioned above the notion of truthful report is defined much simpler in the two-player case. For a point $a\in [29]_0$ that is $12$-close to a vertex $v\in \{0,29\}^n$ we define $T_A(a)\in \{0,1\}^{9(n+1)}$ to be the array of Alice's local information with respect to the vertex $v$. We recall that Alice's local information of a vertex consists of three $tsp$ ternary arrays for the vertex and its $n$ neighbours. For a point $a\in [29]_0$ that is $12$-far from all vertices $v\in \{0,29\}^n$, but is $12$-close to an edge $(v,v+29e_i)$ again we define $T_A(a)$ with respect to the vertex $v$.

Bob's local information is defined similarly, with one distinction. The local informant with respect to a vertex $v$ is an element of $T_B(b)\in [3]^{3(n+1)}$ because Bob's private information are indices (rather than arrays).

\paragraph{Actions}
Alice chooses a point $a\in [29]^n_0$ in the grid, and a report of local information $r^A\in \{0,1\}^{9(n+1)}$. 
Similarly Bob chooses a point $b\in [29]^n_0$ in the grid, and a report of local information $r^B\in [3]^{3(n+1)}$.

\paragraph{Utilities}
The utility of Alice is defined by three priorities
\begin{enumerate}
\item Alice's high priority term is the high-degree imitation $u_A^{im}(a,b)=-\sum_{i=1}^{n+1} (a_i-b_i)^{2n}$.

\item Alice's medial priority term is the truthful reporting $u_A^{r}(a,r^A)=2.5^n \1_{r^A=T_A(a)}.$

\item Alice's low priority term is the report-based potential $u_A^{po}(a,r^A,b,r^B)= 2\phi_{r^A,r^B}(a)+2\phi_{r^A,r^B}(b)$. The report based potential $\phi_{r^A,r^B}$ is defined as in Section \ref{sec:n-pl}. Namely, if reports overlap, are consistent, and the value of the potential can be deduced from the reports we define $\phi_{r^A,r^B}$ to be this value. Otherwise, we set $\phi_{r^A,r^B}=0$.

\end{enumerate}

Finally, we set $u_A(a,r^A,b,r^B)=u_A^{im}+u_A^{r}+u_A^{po}$.
Bob's utility is defined similarly.

The game is a potential game because it is given as a sum of an identical interest term $u^{im}+u^{po}$ and opponent-independent term $u^{r}$; see Fact~\ref{rem:pot}.

In the future analysis of congestion games the above property of the reduction will be useful. Hence, we provide a terminology for this property and emphasize it in a remark.
\begin{definition}\label{def:cleann}
A pair of a $2n$-player potential game with Alice's and Bob's private information in it is called \emph{structured} if the utilities of all Alice's players are identical. The utilities of all Bob's players are identical.
These identical utilities can be written as $u_A(a,b)=v_C(a,b)+v_A(a)$ and $u_B(a,b)=v_C(a,b)+v_B(b)$. Moreover, we require that Alice knows $v_A$ and $v_C$ and Bob knows $v_B$ and $v_C$.
\end{definition}

\begin{remark}\label{rem:cn}
Note that the constructed potential games with Alice's and Bob's private information in the reduction are structured.
\end{remark}

\paragraph{Equilibrium analysis} Let $(\alpha,\beta)$ be a mixed Nash equilibrium.

First, we apply the arguments of Lemma \ref{lem:hd} to deduce\footnote{The only change with respect to the proof of Lemma \ref{lem:hd} is the fact that now we have a reporting term $u^r$. However, since $u^r\leq 2.5^n$ whereas the gains in the imitation term are of order $3^n$ precisely the same arguments can be applied.} that in every equilibrium
$\supp(\alpha)\cup \supp(\beta)$ is contained in a small cube (of edge-length 4).

Second, we note that reports do not affect the imitation term. By truthful report, a player gains $2.5^n$, whereas his loss in the potential term is at most $\Theta(2^n)$. Hence, in a Nash equilibrium players report truthfully.

Third, since the support of the players is contained in a small cube we can deduce that the reported information will overlap, and the potential could be computed from the reports. Since the reports are truthful we also know that $\phi_{r^A,r^B}(a)= \phi(a)$ and similarly for the point $b$.

Finally, we apply the arguments of Proposition \ref{theo:const-actions} to deduce that the only mixed Nash equilibrium is pure, and in this equilibrium, both players choose the end-of-line point.

\section{Congestion games}\label{sec:congestion}
A congestion game is defined with respect to $n$ players $i=1,2,...,n$ and $N$ facilities $j=1,2,...,N$. Each facility has a (common to all users) cost function $c_j:[n] \rightarrow \mathbb{R}$ that indicates the utility of a player that chooses this facility. An action of player $i$ is a collection of feasible subsets $f_i\subset [N]$. The collection of all his feasible subsets is denoted by $F_i=\{f_i^1,...,f_i^{K_i}\}$. Any profile of actions $f=(f_i)_{i\in [n]}$ where $f_i\in F_i$ defines a vector of congestions on the facilities $g_j(f):=|\{i:j\in f_i\}|$. The utility of player $i$ in the congestion game is given by $U_i(f)=\sum_{j\in f_i} c_j(g_j(f))$.

When congestion games are discussed in the context of communication complexity we have to specify what is the private information of each player. We focus on the following variant. The collections $(F_1,...,F_n)$ are common knowledge. Namely, players are aware of the possible actions of their opponents. Each player $i$ knows the cost functions of all possible facilities that he can potentially use, i.e., he knows $\{c_j : j\in \cup_{k\in [K_i]} f^k_i \}$. The only information that is missing for agent $i$ is the cost functions of the facilities that he cannot use. 
As we show in Corollaries \ref{cor:cong2} and \ref{cor:congn} this missing information is crucial for players' ability to find equilibrium with small communication.

Note that with this distribution of information players know slightly more than in the standard \emph{uncoupled} distribution of information. 
Uncoupledness requires that player $i$ will know his utility function $U_i=\sum c_j(\cdot)$. We allow the player to learn each cost function $c_j(\cdot)$ separately.
In other words, our communication complexity lower bounds imply that uncoupled dynamics cannot efficiently converge to equilibrium, even if they are enhanced with the additional knowledge of costs of each facility.

The equivalence of potential games and congestion games was already shown in Monderer and Shapley's seminal paper~\cite{MS}. 
We cannot apply this reduction black-box to our hardness of potential games because defining the congestion costs are defined using the potential, but in the communication complexity it is crucial that players cannot directly evaluate the potential function. (If players had offline access to the potential function, they could find its maximizer --which is a pure equilibrium-- with zero communication!)
Nevertheless we can use the particular structure of our hard potential games (see Definitions \ref{def:clean2} and \ref{def:cleann}) to extend the hardness to congestion games.

\begin{proposition}\label{pro:c2}
Every 2-player $N$-action potential game with information distribution that is structured can be presented as a congestion game with $N^2+2N$ facilities where Alice and Bob know the  cost functions of their possible facilities.
\end{proposition}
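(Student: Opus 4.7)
The plan is to adapt the Monderer--Shapley equivalence between potential and congestion games so that the \emph{structure} of the utilities (Definition~\ref{def:clean2}) is preserved in the facility-by-facility decomposition, and in particular so that no player needs access to any piece of information the other player holds. Recall that by hypothesis Alice's utility splits as $u_A(a,b) = v_C(a,b) + v_A(a)$ and Bob's as $u_B(a,b) = v_C(a,b) + v_B(b)$, with Alice holding $v_C,v_A$ and Bob holding $v_C,v_B$.

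Concretely, I would introduce three groups of facilities: a shared group $\{f_{a,b}\}_{(a,b)\in [N]^2}$ of size $N^2$, an Alice-only group $\{f^A_a\}_{a\in[N]}$ of size $N$, and a Bob-only group $\{f^B_b\}_{b\in[N]}$ of size $N$, for a total of $N^2+2N$. I would then declare Alice's action $a\in[N]$ to correspond to the feasible subset $F_A(a):=\{f_{a,b'} : b'\in[N]\}\cup\{f^A_a\}$ and Bob's action $b\in[N]$ to correspond to $F_B(b):=\{f_{a',b} : a'\in[N]\}\cup\{f^B_b\}$. The cost functions are set so that only the shared facility that both players simultaneously touch contributes: $c_{f_{a,b}}(1)=0$, $c_{f_{a,b}}(2)=v_C(a,b)$, $c_{f^A_a}(1)=v_A(a)$, $c_{f^B_b}(1)=v_B(b)$, with values at other congestion levels irrelevant (since $f^A_a$ and $f^B_b$ can only ever carry congestion at most one).

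The next step is a one-line verification that this reproduces the original utilities. When Alice plays $a$ against Bob's $b$, the only shared facility in $F_A(a)$ with congestion $2$ is $f_{a,b}$, so Alice's cost from shared facilities equals $c_{f_{a,b}}(2)+\sum_{b'\neq b}c_{f_{a,b'}}(1)=v_C(a,b)$; adding $c_{f^A_a}(1)=v_A(a)$ yields $u_A(a,b)$. The symmetric calculation recovers $u_B(a,b)$. Since every congestion game is a potential game, this construction is well-defined as a congestion game representation. It remains to check information: every facility in $\bigcup_a F_A(a)$ is of the form $f_{a,b'}$ (whose cost function depends only on $v_C(a,b')$) or $f^A_a$ (whose cost function depends only on $v_A(a)$); both are known to Alice, and symmetrically for Bob.

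The potential obstacle one might anticipate is that a straightforward application of the Monderer--Shapley reduction encodes the costs via the global potential function $\phi$, which in the communication setting neither player can evaluate alone — this is exactly why one cannot black-box the reduction here. The structured form dissolves this obstacle: the cost of each shared facility is defined directly through $v_C$ (known to both players), while the single-user facilities only involve each player's private separable term, so no global potential is ever evoked in defining the facility costs. Thus the only thing to be careful about is the pedantic bookkeeping of which facility appears in whose feasible sets, which is handled by the three-group design above.
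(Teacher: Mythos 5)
Your construction is identical to the paper's: the same three facility groups ($A$-private, $B$-private, and a shared $A\times B$ group), the same feasible-set design where action $a$ selects its private facility plus the entire row $\{(a,b'):b'\}$, and the same cost functions with $c(1)=0, c(2)=v_C(a,b)$ on shared facilities and $v_A(a),v_B(b)$ on private ones. The verification and the information-localization argument also match the paper's proof.
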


\begin{proof}
We denote by $A$ and $B$ (with $|A|=|B|=N$) the action sets of Alice and Bob correspondingly.
We denote the utilities of the structured potential game by 
\begin{align*}
u_A(a,b)=v_C(a,b)+v_A(a) \\
u_B(a,b)=v_C(a,b)+v_B(b)
\end{align*}
we define the set of facilities to be $A\cup B \cup (A\times B)$. The facilities $A$ are private for Alice, the facilities $B$ are private for Bob, and the facilities $A\times B$ are common to Alice and Bob. The feasible subsets of facilities for Alice are $F_a:=\{a\}\cup \{(a,b):b\in B\}$ for all $a\in A$. The feasible subsets of facilities for Bob are $F_b:=\{b\}\cup \{(a,b):a\in A\}$ for all $b\in B$. Note that indeed Alice never chooses facilities in $B$ and conversely Bob never chooses facilities in $A$.

The cost functions of a facility $a\in A$ is given by $c_a(1)=v_A(a)$ (we should not specify the cost of two players because it never happens). The cost functions of a facility $b\in B$ is given by $c_b(1)=v_B(b)$. The cost functions of a facility $(a,b)\in A\times B$ is given by $c_{(a,b)}(1)=0$ and $c_{(a,b)}(2)=v_C(a,b)$. It is easy to verify that indeed the constructed congestion game has the utility profile $(u_A(a,b),u_B(a,b))$ at the action profile $(F_a,F_b)$ as needed.
\end{proof}

\begin{corollary}\label{cor:cong2}
The communication complexity of finding a mixed Nash equilibrium in two-player $N$-facility congestion game is at least $N^c$ for some fixed constant $c>0$.
\end{corollary}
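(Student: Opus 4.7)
The plan is to chain Theorem~\ref{theo:2p} with Proposition~\ref{pro:c2}. Theorem~\ref{theo:2p} supplies a family of two-player $N\times N$ potential games with an input partition whose randomized communication complexity for (mixed) Nash equilibrium is $\Omega(N^c)$; Remark~\ref{rem:cn} records that the hard instances produced in its proof are \emph{structured}, i.e.\ the utilities decompose as $u_A(a,b)=v_C(a,b)+v_A(a)$ and $u_B(a,b)=v_C(a,b)+v_B(b)$, with $v_C$ known to both parties and $v_A$ (respectively $v_B$) known only to Alice (respectively Bob). This is exactly the hypothesis of Proposition~\ref{pro:c2}, whose application converts each hard instance into an equivalent congestion game on $M:=N^2+2N$ facilities whose cost functions are distributed between Alice and Bob exactly as the communication model for congestion games prescribes.

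Next I would verify that the reduction of Proposition~\ref{pro:c2} preserves mixed Nash equilibria and requires no additional communication. The bijection between action sets it sets up---$a \leftrightarrow F_a$ for Alice and $b\leftrightarrow F_b$ for Bob---is purely syntactic and can be computed locally by both parties from the publicly known feasibility collections $(F_i)_i$. Under this bijection the utility of each profile in the congestion game equals the utility of the matched profile in the original potential game (by construction of the congestion costs in Proposition~\ref{pro:c2}), so the sets of pure and mixed Nash equilibria are in exact correspondence. Consequently, any communication protocol solving (mixed) Nash equilibrium in the $M$-facility congestion game immediately solves it in the corresponding hard instance of $\textsc{2-Nash-Potential}(N)$, with no overhead.

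Putting the pieces together yields $CC(\text{NE in 2-player }M\text{-facility congestion games})\geq CC(\textsc{2-Nash-Potential}(N)) = \Omega(N^c)$. Since $M=\Theta(N^2)$, rewriting in the natural parameter of the congestion problem gives $\Omega(M^{c/2})$; relabeling $M$ as $N$ and setting $c':=c/2$ produces the desired $\Omega(N^{c'})$ lower bound. The only subtlety worth monitoring---and the closest thing to an obstacle---is ensuring that Proposition~\ref{pro:c2}'s reduction is genuinely ``communication-free'', i.e.\ that Alice can build her cost functions on $A\cup(A\times B)$ from $(v_C,v_A)$ alone (and symmetrically for Bob). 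This is immediate from the structured decomposition: the only cost depending on the joint term $v_C$ lives on the shared facilities $A\times B$, which both parties know, while the private $v_A$ (resp.\ $v_B$) cost is placed on facilities in $A$ (resp.\ $B$) that are feasible only for their owner. The quadratic blow-up from $N$ to $M=\Theta(N^2)$ merely halves the polynomial exponent, which is harmless for the statement of the corollary.
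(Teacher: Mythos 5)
Your chaining of Theorem~\ref{theo:2p} with Proposition~\ref{pro:c2} via the structured decomposition is exactly the argument the paper leaves implicit for Corollary~\ref{cor:cong2}, including the communication-free action bijection, equilibrium preservation, and the halving of the exponent from the $M=\Theta(N^2)$ facility blow-up. One cosmetic nit: the two ``structured'' definitions and their accompanying remarks appear to have swapped positions in the paper's source (Definition~\ref{def:clean2}/Remark~\ref{rem:c2} sit in the $n$-player section while Definition~\ref{def:cleann}/Remark~\ref{rem:cn} sit in the two-player section), so the remark you want is the one asserting the two-player construction of Section~5.2 is structured in the sense of Definition~\ref{def:clean2}, matching the hypothesis of Proposition~\ref{pro:c2}; your substantive reasoning is unaffected.
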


\begin{proposition}
Every $2n$-player binary-action potential game with information distribution that is structured can be presented as a congestion game with $2^{2n}+2^{n+1}$ facilities where Alice and Bob know the cost functions of their possible facilities.
\end{proposition}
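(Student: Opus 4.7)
The plan is to generalize the two-player construction of Proposition \ref{pro:c2} to the multiplayer setting. With $n$ binary-action Alice-players and $n$ binary-action Bob-players, the joint profile of Alice's team is a vector $a \in \{0,1\}^n$ and that of Bob's team is $b \in \{0,1\}^n$. Mirroring the earlier template, I introduce three families of facilities: an Alice-private facility $\alpha_a$ for every $a \in \{0,1\}^n$, a Bob-private facility $\beta_b$ for every $b \in \{0,1\}^n$, and a common facility $\gamma_{a,b}$ for every pair $(a,b) \in \{0,1\}^n \times \{0,1\}^n$. This yields exactly $2^n + 2^n + 2^{2n} = 2^{n+1} + 2^{2n}$ facilities, matching the statement.

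The key step is encoding each team's joint profile via Hamming-matching, so that the collective action of $n$ binary-action players plays the role that a single $2^n$-action player did in the two-player construction. For Alice's player $i$ playing bit $\sigma \in \{0,1\}$, I define her (unique) feasible subset indexed by $\sigma$ to consist of all $\alpha_a$ and all $\gamma_{a,b}$ whose $i$-th $a$-coordinate satisfies $a_i = \sigma$; symmetrically for Bob. With this design, when the realized team profiles are $(a',b')$, the facility $\alpha_a$ is used by precisely those Alice-players $i$ with $a_i = a'_i$, so its congestion equals $n$ minus the Hamming distance between $a$ and $a'$. In particular, $\alpha_a$ attains its maximum congestion $n$ if and only if $a = a'$, and analogously $\gamma_{a,b}$ reaches congestion $2n$ if and only if $(a,b) = (a',b')$.

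It remains to set the cost functions: let $c_{\alpha_a}(n) = v_A(a)$, $c_{\beta_b}(n) = v_B(b)$, $c_{\gamma_{a,b}}(2n) = v_C(a,b)$, and set every other cost value to $0$. A direct computation --- in the spirit of the two-player proof --- shows that all ``off-diagonal'' contributions vanish: whenever $a \neq a'$, any Alice-player who happens to include $\alpha_a$ sees a congestion strictly smaller than $n$ and therefore contributes $0$, and the same happens for common facilities whose index does not match the realized pair. Summing the surviving nonzero contributions recovers $v_A(a') + v_C(a',b')$ for every Alice-player and $v_B(b') + v_C(a',b')$ for every Bob-player, which is precisely the structured utility profile required by Definition \ref{def:cleann}. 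The private-information condition is then immediate: the facilities reachable by Alice are $\{\alpha_a\}_a \cup \{\gamma_{a,b}\}_{a,b}$, whose cost functions depend only on $v_A$ and $v_C$, both of which Alice knows, and symmetrically for Bob. The only mild subtlety worth double-checking is that no unintended cross-contributions arise at intermediate congestion levels, but zeroing out every cost except the one at maximum congestion sidesteps this issue entirely.
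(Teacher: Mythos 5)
Your proposal is correct and is essentially identical to the paper's proof: the same three facility families indexed by $\{0,1\}^n$, $\{0,1\}^n$, and $\{0,1\}^n\times\{0,1\}^n$, the same bit-matching feasible subsets for each player, and the same cost functions that are zero except at maximum congestion ($n$ for private facilities, $2n$ for common ones). The only differences are cosmetic (notation $\alpha_a,\beta_b,\gamma_{a,b}$ versus the paper's reuse of $a,b,(a,b)$).
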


\begin{proof}
We denote by $A=\{0,1\}^n$ and $B=\{0,1\}^n$ the action sets of Alice's group and Bob's group correspondingly.
We denote the common utilities of the two groups in the structured potential game by 
\begin{align*}
u_A(a,b)=v_C(a,b)+v_A(a) \\
u_B(a,b)=v_C(a,b)+v_B(b)
\end{align*}
we define the set of facilities to be $A\cup B \cup (A\times B)$. The facilities $A$ are private for all players in Alice's group, the facilities $B$ are private for all players in Bob's group, and the facilities $A\times B$ are common. The two feasible subsets of facilities for player $i$ in Alice's group are $F^{A,i}_0:=\{a\in A:a_i=0\}\cup \{(a,b)\in A\times B:a_i=0\}$ and $F^{A,i}_1:=\{a\in A:a_i=1\}\cup \{(a,b)\in A\times B:a_i=1\}$. Similarly for Bob. Note that indeed players in Alice's group never choose facilities in $B$ and conversely players in Bob's group never choose facilities in $A$.

The cost functions of a facility $a\in A$ is given by $c_a(k)=0$ if $k<n$ and $c_a(k)=v_A(a)$ if $k=n$ (we should not specify the cost of more than $n$ players because it never happens). The cost functions of a facility $b\in B$ is given by $c_b(k)=0$ if $k<n$ and $c_b(k)=v_B(b)$ if $k=n$. The cost functions of a facility $(a,b)\in A\times B$ is given by $c_{(a,b)}(k)=0$ if $k<2n$ and $c_{(a,b)}(k)=v_C(a,b)$ if $k=2n$. It is easy to verify that indeed the constructed congestion game has the common utility $u_A(a,b)$ for all players in Alice's group if players play the action profile $F_{\cdot}$ that corresponds to $(a,b)$. Similarly for Bob's group.
\end{proof}

\begin{corollary}\label{cor:congn}
The communication complexity of finding a mixed Nash equilibrium in $n$-player $2^n$-facility binary-action congestion game is at least $2^{c\sqrt{n}}$ for some fixed constant $c>0$.
\end{corollary}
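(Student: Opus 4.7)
The plan is to combine Theorem~\ref{theo:np} with the preceding proposition as a communication-free reduction. First, I would invoke Theorem~\ref{theo:np} to obtain a family of hard instances of $\textsc{Multiplayer-Nash-Potential}(2n)$ requiring $2^{\Omega(\sqrt{n})}$ communication. By Remark~\ref{rem:c2}, these instances are \emph{structured} in the sense of Definition~\ref{def:cleann}: Alice's players share a common utility $u_A(a,b)=v_C(a,b)+v_A(a)$, Bob's players share $u_B(a,b)=v_C(a,b)+v_B(b)$, with Alice holding $(v_A,v_C)$ and Bob holding $(v_B,v_C)$. This is exactly the hypothesis of the preceding proposition.

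Next, I would apply the preceding proposition to convert each hard structured $2n$-player potential game into a $2n$-player binary-action congestion game over the facility set $A\cup B\cup (A\times B)$ with $|A|=|B|=2^n$, giving $2^{2n}+2^{n+1}$ facilities in total. The crucial observation is that the resulting congestion game fits the communication model of the corollary \emph{without any communication between Alice and Bob}: the feasible subset collections $F^{A,i}_0,F^{A,i}_1$ and $F^{B,i}_0,F^{B,i}_1$ depend only on the index $i$ and the public sets $A,B$, and so are common knowledge; the cost functions of facilities in $A$ are determined by $v_A$ alone (known to Alice), those of facilities in $B$ by $v_B$ alone (known to Bob), and those of facilities in $A\times B$ by $v_C$ (known to both). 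Thus Alice can locally construct the cost functions of exactly the facilities her players can use (namely $A\cup(A\times B)$), and symmetrically for Bob.

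Finally, any mixed Nash equilibrium of the congestion game is, by construction, a mixed Nash equilibrium of the original potential game, so a communication protocol solving the congestion game problem yields one solving $\textsc{Multiplayer-Nash-Potential}(2n)$ on the structured hard instances with no overhead. After reparametrizing $n' := 2n$, the congestion game has $n'$ players and at most $2^{n'}+2^{n'/2+1}=O(2^{n'})$ facilities, and the transferred lower bound is $2^{\Omega(\sqrt{n'})}$, matching the statement. No genuine obstacle arises here, since all technical work is already encapsulated in Theorem~\ref{theo:np} and the preceding proposition; the only point requiring care is confirming that the facility-cost assignment can be performed locally by each party, which follows directly from the structured decomposition.
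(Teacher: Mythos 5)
Your proposal is correct and follows the same route the paper intends: invoke Theorem~\ref{theo:np} for the $2^{\Omega(\sqrt{n})}$ lower bound on structured $n$-player potential-game instances, apply the congestion-game construction (the proposition preceding the corollary) to replace the potential game with a $2^{2n}+2^{n+1}$-facility congestion game, and observe that the cost functions of each party's reachable facilities are determined by $(v_A,v_C)$ or $(v_B,v_C)$ respectively, so the reduction is communication-free. The only blemish is a citation mismatch — you cite Remark~\ref{rem:c2} together with Definition~\ref{def:cleann}, whereas the relevant remark/definition pair should be internally consistent — but this is a harmless slip that mirrors some label confusion already present in the paper, and the substance of the argument is right.
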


We note that our negative result cannot be further improved to games with few ($\poly(n)$) facilities. Those games have succinct representation and hence can be solved with polynomial communication.

\bibliography{ref}

\appendix
\section{Proof of Lemma \ref{lem:hd}}\label{ap:hdlemma}

Let $y_k$ be the probability that $y$ assigns to the integer number $k\in [29]_0$. We denote $z_k=\sqrt[2n]{y_k}$.
For ``typical distributions" we have $z_k\approx 1$ for the actions in the support. However, if $y_k$ is exponentially small, we may have $z_k$ that is bounded away from 1.

We define $\gsp$ convex functions $f_k:[0,\gs]\ra \R$ by $f_k(x)=z_k|x-k|$. We consider the convex function $f:[0,\gs]\ra \R$ defined by $f(x)=\max_{k=0,...,\gs} f_k(x)$. 

We argue that $f$ is not too flat around its minimum, and gets high enough values for points that are not close to the minimum. Specifically, we denote by $x^*\in [0,\gs]$ the (real) minimum of $f$ over $[0,\gs]$. We denote by $m_1,m_2,m_3\in [\gs]_0$, $m_i\neq m_j$, $f(m_1)\leq f(m_2) \leq f(m_3)$ the three minimal integer values of $f$. 
Note that since $f$ is convex the set $\{m_1,m_2,m_3\}$ forms a segment of length $2$.
Let $a\in [\gs]_0\setminus \{m_1,m_2,m_3\}$ be any other integer. 
We argue that 
\begin{align}\label{eq:m4}
f(a)\geq 1.8 \text{ and } f(a)\geq f(m_1)+\frac{1}{30}.
\end{align}

To see that $f(a)\geq 1.8$, we observe that at least one of the functions $f_k$ is very close to the function $|x-k|$. There exists $m\in [\gs]_0$ such that $y_m\geq \frac{1}{\gsp}$, and therefore $z_m\geq \sqrt[2n]{\frac{1}{\gsp}}\geq 0.9$. The fourth minimal value at integers points of $f_k(x)\geq 0.9|x-k|$ is at least $2\cdot 0.9=1.8$ because the froth most close integer to $k$ is within distance $2$ from $k$. Obviously, since $f(x)\geq f_k(x)$ we have $f(a)\geq 1.8$.

To see that $f(a)\geq f(m_1)+\frac{1}{30}$, we consider two cases.

If $f(x^*)\leq 1$, we observe that $|f'(x)|\leq 1$ for all $x\in [0,29]$ where the derivative $f'(x)$ is defined. Therefore $f(x)-f(x^*)\leq |x-x^*|$, and in particular $f(m_1)\leq f([x^*]) \leq f(x^*)+|[x]-x^*| \leq 1+0.5$. Hence, $f(a)-f(m_1)\geq 1.8-1.5 \geq 0.3 \geq \frac{1}{30}$.

Now we prove that $f(a)-f(m_1)$ in case where $f(x^*)>1$. For clarity of exposition we denote $a=m_4$ (although it is not necessarily the forth minimal integer value).
At least two integers among $\{m_i\}_{i\in [4]}$ are located on the same side of $x^*$. Without loss of generality we assume that $x^*\leq m_i < m_j$.
We denote by $f'_R(x^*)$ the derivative of $f$ from the right side. Since $x^*$ is minimum we know that $f'_R(x^*)>0$. Moreover, we observe that since $f(x^*)>1$ we necessarily have $f'_R(x^*)\geq \frac{1}{30}$. This follows from the fact that every function $f_k$ that achieves a value $f_k(x)\geq 1$ in the range $[0,\gs]$ must have a derivative of at least $|f'_k(x)|\geq \frac{1}{30}$, otherwise it could not change its value from $0$ to $1$ in this segment. 
We complete the arguments by
\begin{align*}
f(a)-f(m_1)\geq f(m_j)-f(m_i) \geq f'_R(x^*)(m_j-m_i) \geq \frac{1}{30} 
\end{align*}
where the first inequality follows from $f(a)\geq f(m_j)$ and $f(m_1)\leq f(m_i)$. The second inequality follows from the concavity of $f$.
This accomplishes the proof of Equation \eqref{eq:m4}.

The expected loss at $a$ can be bounded from below by

\begin{align*}
\sum_{k=0}^{29} y_i (a-k)^{2n} &\geq \max_{k\in [29]_0} y_i (a-k)^{2n}=\max_{k\in [29]_0} [(f_k(a))^{2n}]
=[\max_{k\in [29]_0} (f_k(a))]^{2n}=(f(a))^{2n}
\end{align*}

The expected loss at $m_1$ can be bounded from above by

\begin{align*}
\sum_{k=0}^{29} y_i (m_1-k)^{2n} &\leq 30 \max_{k\in [29]_0} y_i (m_1-k)^{2n}=(f(m_1))^{2n}
\end{align*}

Therefore the difference in the losses between the sub-optimal choice of $a$ and the optimal choice of $m_1$ is at least
\begin{align}\label{eq:po}
f(a)^{2n}-30 f(m_1)^{2n} \geq 1.8^{2n}-30(1.8-\frac{1}{30})^{2n}>3.24^n-30 \cdot 3.13^n >3^n
\end{align}
Where the first inequality follows from Equation \eqref{eq:m4}.

\end{document}